\documentclass[a4paper,fleqn]{cas-sc}

\usepackage[authoryear,longnamesfirst]{natbib}

\usepackage{tikz}
\usepackage{xcolor}
\usepackage{amsthm}
\usepackage{amsfonts}
\usepackage{amsmath}
\usepackage{amssymb}

\usepackage{harmony}

\usepackage{stmaryrd}
\usepackage{color}

\usetikzlibrary{positioning}
\usepackage{bm}
\usepackage{hyperref}
\usepackage{float}
\usepackage{ebproof}
\usepackage{multicol}
\usepackage{appendix}
\usepackage{comment}
\usepackage{subcaption}

\newtheorem{definition}{Definition}
\newtheorem{theorem}{Theorem}
\newtheorem{lemma}[theorem]{Lemma}

\newtheorem{example}{Example}
\newdefinition{rmk}{Remark}

\newcommand{\M}{\mathcal{M}}
\newcommand{\ModelM}{\M=(W,\lbrace\sim_a \rbrace_{a\in \mathcal{A}},V)}
\newcommand{\actionModel}{\mathrm{M=(E^M},\lbrace \sim^{\mathrm{M}}_a \rbrace_{a \in \mathcal{A}}, \mathrm{pre^M)}}
\newcommand{\lang}{\mathcal L}
\newcommand{\lDEL}{\lang_{DEL}}
\newcommand{\satisfies}{\vDash}
\newcommand{\lAct}{\lang_{\mathrm{Evt}}}
\newcommand{\evt}{\mathrm{e}}
\newcommand{\Var}{\mathrm{Var}}
\newcommand{\Evt}{\mathrm{Evt}}

\newcommand{\st}{\ \ViPa \ }
\newcommand{\stSpace}{\quad \ViPa \quad}
\newcommand{\store}{\mathfrak{E}}

\newcommand{\eq}{\leftrightarrow}

\newcommand{\imp}{\rightarrow}
\newcommand{\Imp}{\Rightarrow}

\newcommand{\T}{\top}

\renewcommand{\phi}{\varphi}
\newcommand{\union}{\cup}

\newcommand{\inter}{\cap}

\newcommand{\update}{\sslash}

\newcommand{\Sep}{\ \bm{\;\Big|\;} \ }
\newcommand{\SepC}{ \bm{\;\Big|\;}  }  

\newcommand{\calculus}{\textbf{$\mathbf{IDHS}_{\mathrm{DEL}}$}}
\newcommand{\xMerge}{\otimes}
\newcommand{\indexMerge}{\oplus}

\newcommand{\prefix}{\sqsubseteq}

\newcommand{\byIH}{\overset{(IH)}{\rightsquigarrow} }

\newcommand{\Rule}{\mathcal{R}}

\ebproofnewstyle{regular}{
left label template = \scriptsize\inserttext,
right label template = \small\inserttext}

\ebproofnewstyle{compact}{
left label template = \scriptsize\inserttext,
right label template = \small\inserttext,
separation=0.6em}

\ebproofnewstyle{inter}{
left label template = \scriptsize\inserttext,
right label template = \scriptsize\inserttext,
separation=0.8em}

\ebproofnewstyle{small}{
separation = 0.5em, rule margin = .5ex,
right label template = \scriptsize\inserttext,
template = \small$\inserttext$ }

\ebproofnewstyle{tiny}{
separation = 0.5em, rule margin = .5ex,
right label template = \scriptsize\inserttext,
template = \footnotesize$\inserttext$ }

\ebproofnewrulestyle { no~rule } {
rule~code =
}

\tikzstyle{index on}=[inner sep=2pt, white, circle, fill=black]
\tikzstyle{index off}=[inner sep=2pt, black, circle, draw]
\tikzstyle{index gray}=[inner sep=2pt, black, circle, fill=lightgray]
\tikzstyle{opaque}=[fill=gray,fill opacity=.1]
\tikzstyle{counter}=[densely dashed]

\begin{document}
\let\WriteBookmarks\relax
\def\floatpagepagefraction{1}
\def\textpagefraction{.001}
\shorttitle{Dynamic Proof Theory for DEL}
\shortauthors{C. Lerouvillois \& F. Poggiolesi}

\title [mode = title]{Dynamic Proof Theory for Dynamic Epistemic Logic}                   

\author[1]{Clara Lerouvillois}
\ead{clara.lerouvillois@irit.fr}

\affiliation[1]{IRIT, CNRS -- INP -- University of Toulouse, France & IHPST UMR 8590, CNRS -- University Paris 1 Pantheon Sorbonne, France}

\author[2]{Francesca Poggiolesi}

\affiliation[2]{IHPST UMR 8590, CNRS -- University Paris 1 Pantheon Sorbonne, France}


\begin{abstract}
This paper proposes a calculus for Dynamic Epistemic Logic (DEL) based on dynamic hypersequents, a new framework for capturing the dynamics of epistemic models in a purely structural and natural way. We enrich dynamic hypersequents with agent indices and an event store, yielding a uniform representation of epistemic events. The resulting calculus enjoys a set of logical rules that are provably invertible. Moreover, all structural rules, including the contraction rules, as well as the cut-rule, are shown to be admissible. Finally, the calculus constitutes, to the best of our knowledge, the only proof-theoretic calculus for DEL that is both grounded in a classical S5 framework and inherently dynamic.
\end{abstract}

\begin{keywords}
proof theory \sep dynamic epistemic logic \sep action models \sep hypersequents \sep cut-elimination \sep
\end{keywords}

\maketitle

\section{Introduction}

Traditional epistemic logic, typically based on S5 modal systems, provides a static representation of agents’ knowledge by characterising what is known in a given epistemic situation, while abstracting away from the processes through which knowledge is acquired, communicated, or modified. Although this perspective has proven fundamental for the study of knowledge, it is insufficient for modelling the evolution of epistemic states resulting from communication, observation, and other forms of interaction. Dynamic Epistemic Logic (DEL)\footnote{In this paper we call DEL the logic which is also known as \emph{Action Model Logic}, \emph{e.g.} in \cite{DEL}, or \emph{Logic of Epistemic Actions}, \emph{e.g.} in \cite{baltagandothers}.} addresses this limitation by offering an explicit representation of epistemic actions, or events, that transform agents’ information states, see \textit{e.g.} \cite{DEL}, \cite{baltagandothers}. 

From a semantic perspective, DEL is a very rich framework. Based on Kripke models and event models, which have been extensively studied and refined in \cite{plaza1989, baltagandothers},  DEL provides a unified setting where different kinds of epistemic changes can be represented and combined. Public announcements, private announcements, and other general epistemic actions can be modelled as events transforming epistemic models and modifying agents’ information states. This ability to represent a wide variety of knowledge-transforming processes within a single framework is one of the main sources of the expressive power of DEL. 

By contrast, the proof-theoretic landscape of dynamic epistemic logic remains comparatively fragmented. Existing approaches include tableaux systems, i.e. \cite{aucherandothers1, aucherandothers2}, as well as sequent-based systems. Among the latter, there are labelled calculi, as the one introduced in \cite{Nomuraandothers2}, which further develops the approach of \cite{nomura2015revising}, and display calculi, as those presented in \cite{Frittellandothers1, Frittellandothers2}. 
To the best of our knowledge, the only proof-theoretic treatment of DEL with an epistemic base logic S5  prior to the present work is \cite{Sano2}. In particular, Liu and Sano develop non-labelled sequent calculi for public announcement logic and action model logic over both K45 and S5. The expressive power of DEL is captured by introducing rules corresponding to so-called reduction axioms, namely axioms that translate any dynamic formula into a static formula of epistemic logic. For the K45-based systems, full cut elimination is established. In the S5 case, however, full cut elimination fails, and cuts can instead be restricted to formulas belonging to a suitably extended subformula closure.

The main aim of this paper is to develop a novel proof-theoretical approach to DEL which offers a twofold advantage. On the one hand, it captures the expressive strength of DEL as an extension of the epistemic logic base S5 and allows a full cut-elimination result. On the other hand, this approach internalises the dynamic transformations induced by epistemic events by purely syntactic means, without relying on auxiliary semantic devices nor on reduction axioms. Hence, the resulting system is, as far as we know, the first non-labelled sequent calculus that is both S5 based and inherently dynamic.


To address the first challenge, we build on the proof system for S5 presented in \cite{Poggiolesi2008}, whose structural simplicity makes it particularly suitable as a foundation for our extension. The key feature of this system is that the structural organisation of hypersequents already internalises the characteristic properties of S5. In particular, the conditions corresponding to reflexivity, transitivity, and symmetry are not enforced through additional logical rules, but are encoded in the structure of hypersequents themselves. This makes the calculus a natural basis for incorporating the dynamic mechanisms required by DEL.

To address the second challenge, namely, to capture the dynamic aspect of DEL, we rely on the framework introduced in \cite{lerouvillois2025dynamichypersequentspublicannouncement}, in which hypersequents are extended so as to become dynamic. In the present paper, we further enrich dynamic hypersequents by incorporating \emph{indices} for agents'uncertainty and by introducing a new structure for representing arbitrary epistemic events, which we call \emph{event store}. This allows epistemic transformations to be represented directly at the structural level, rather than by means of additional semantic machinery. Thus, by combining enriched dynamic hypersequents with a structural proof system for S5, we obtain a calculus for DEL with an S5 epistemic base. In this calculus, we prove that all logical rules are invertible and that all structural rules -- cut-rule included -- are admissible.  

The paper is structured in the following way. \emph{Section} \ref{sectionDEL} defines the basic notions and notations of DEL, which will be useful for what follows. \emph{Section} \ref{sectionIDHSwAS} introduces the notion of indexed dynamic hypersequent with event store, together with its interpretation. In \emph{Section} \ref{sectionCalculus} we present our calculus for DEL, whilst in \emph{Section} \ref{sectionAdmissibility} we prove that structural rules of weakening, contraction and merge are admissible in it. \emph{Section} \ref{sectionCompleteness} serves to prove that the calculus is sound and complete with respect to DEL, whilst \emph{Section} \ref{sectionCut} is dedicated to show the cut-rule is admissible in the calculus. We end the paper with \emph{Section} \ref{sectionConclusion} where we sketch some interesting paths of future research.

\section{Dynamic Epistemic Logic}\label{sectionDEL}

We use this section to introduce Dynamic Epistemic Logic (DEL), both from a semantic and from a syntactic -- Hilbert-style system -- perspectives.

\begin{definition}[Language of DEL]
The language $\lang$ of DEL is composed of a countable set $P$ of atomic sentences $p, q,etc.$, a finite set $\mathcal{A}$ of agents $\mathcal{A} = \{a,b,\dots\}$ , the connectives $\neg$, $\wedge$, modalities $K_{a}$, for each agent $a$ belonging to $\mathcal{A}$, and the event modality $[\cdot]$. 
\end{definition}

\begin{definition}[Formulas of DEL]
    Given the language $\lang$, we simultaneously define the formulas $A \in \lDEL$ and pointed event models $\mathrm{(M,e)}\in \lAct$ by induction.
    On the one hand, \emph{formulas} $A\in \lDEL$ are defined by the BNF
\begin{center}
    $A \::=\ p \ | \ \bot \ | \ \lnot A \ |\ (A \land A)\ |\ K_a A \ |\ [\mathrm{(M,e)}]A$
\end{center}
where $p\in P$, $a\in \mathcal{A}$, and $\mathrm{(M,e)}\in \lAct$.

On the other hand, \emph{event models} $\mathrm{M}=(\mathrm{E^M}, \lbrace\sim^{\mathrm{M}}_a\rbrace _{a\in \mathcal{A}},\mathrm{pre^M})$ are defined by a \emph{finite domain} $\mathrm{E^M}$ of event points $\mathrm{e,f}, etc.$, equivalence relations $\sim^{\mathrm{M}}_a$ on $\mathrm{E^M}$ and a \emph{precondition} function $\mathrm{pre^M}:\mathrm{E^M} \imp \lDEL$ mapping each event point $\evt\in \mathrm{E^M}$ to a $\lDEL$-formula $\mathrm{pre^{M}(e})$ that has been constructed in a previous stage of the induction.
More precisely, let $\lDEL^0$ be the formulas of epistemic logic and $\lAct^0$ the set of pointed event models with preconditions in $\lDEL^0$. Then, whenever $\lDEL^n$ and $\lAct^n$ are both defined, $\lDEL^{n+1}$ is defined by the BNF $A \::=\ p\ |\ \lnot A \ |\ (A \land A)\ |\ K_a A \ |\ [\mathrm{(M,e)}]A $ where $\mathrm{(M,e)} \in \lAct^n$, whereas $\lAct^{n+1}$ is defined as the set of pointed event models with precondition formulas in $\lDEL^{n}$. Finally, $\lDEL$ is defined as $\lDEL:= \bigcup_{n \in \mathbb{N}} \lDEL^n$ and $\lAct := \bigcup_{n\in \mathbb{N}}\lAct^n$. See \cite{sepDEL} for further detail.

If $\mathrm{M}$ is an event model and $\evt\in \mathrm{E^M}$ is an event point in the domain of that model, then $\mathrm{(M,e)} \in \lAct$ is a \emph{pointed event model}. 
We follow the standard rules for omission of the parentheses. The connectives $\T, \vee, \rightarrow, \leftrightarrow, \hat{K}_a, \langle\cdot \rangle$ are defined by abbreviation as usual. In particular $\hat{K}_a \phi := \lnot K_a \lnot \phi$ and $\langle (\mathrm{M,e})\rangle \phi := \lnot [(\mathrm{M,e})]\lnot \phi$. Formulas of the form $[(\mathrm{M,e})]A$ are read `after event $(\mathrm{M,e})$ occurs, $A$ holds'. 

\end{definition}


For clarity and succinctness, let us use the following notational conventions. We use $\mathrm{e^M}$ as an abbreviation to denote the \emph{pointed event model} $(\mathrm{M,e})$ and we omit the parentheses in $[(\mathrm{e^M})]A$ when there is no ambiguity, thereby getting $[\mathrm{e^M}]A$. Analogously, instead of $\actionModel$ we will simply write $\mathrm{M}=(\mathrm{E}, \lbrace\sim_a \rbrace_{a \in \mathcal{A}},\mathrm{pre})$ if there is no ambiguity.
Finally, the cardinality of $\mathrm{E^M}$ will be denoted by $|\mathrm{M}|$.

\begin{definition}\label{def:espitemicmodel}[Epistemic model]
    An \emph{epistemic model} is a tuple $\ModelM$ where $W$ is a set of possible worlds, all $\sim_a$ are equivalence relations on $W$ and $V:P \imp \mathcal{P}(W)$ is a valuation. For a model $\M$ and world $w \in W$, we call $(\M,w)$ a \emph{pointed epistemic model}, or simply a \emph{pointed model}.
\end{definition}

\begin{definition}[Semantics of DEL]\label{defSemDEL}
Given an epistemic model $\ModelM$, and an event model $\actionModel$, the \emph{semantic relation} $\satisfies$ and \emph{the updated model} $\M' = \M \times \mathrm{M}$ are simultaneously defined by induction as below.
\begin{align*}
&\M,w \satisfies p                &   &\text{iff}  &   &w \in V(p) \\
&\M, w \not\satisfies \bot \\
&\M,w \satisfies \lnot A          &   &\text{iff}  &   &\M,w \nvDash A \\
&\M,w \satisfies A\land B         &   &\text{iff}  &   &\M,w \satisfies A \text{ and } \M,w \satisfies B \\ 
&\M,w \satisfies K_a A             &   &\text{iff}  &   &\M,v \satisfies A \text{ for all } v \in W \text{ s.t. } w \sim_a v\\
&\M,w\satisfies[\mathrm{e^M}]A & &\text{iff} & &\text{if } \M,w\satisfies \mathrm{pre^M}(\evt) \text{ then } \M',(w,\evt)\satisfies A
\end{align*}

\noindent $\M' := \M \times \mathrm{M}$ is the restricted modal product of $\M$ and $\mathrm{M}$, defined as $\M'=(W',\lbrace \sim'_a\rbrace_{a\in \mathcal{A}},V')$ where:
    \begin{align*}
        &W' & &= & &\lbrace (w, \evt) \ | \ w\in W, \mathrm{e\in E^M} \text{ and } \M,w\satisfies \mathrm{pre^M(e}) \rbrace \\
        &(w,\evt)\sim_a'(v,\mathrm{f})  & &\text{iff} & &w\sim_a v \text{ and } \mathrm{e\sim^M}_a \mathrm{f} \\
        &(w,\evt)\in V'(p) & &\text{iff} & &w\in V(p)
    \end{align*}
A formula $A$ is \emph{valid}, denoted $\satisfies_{DEL} A$, if, and only if, for all models $\ModelM$ and all worlds $w \in W$, $\M,w\satisfies A$. The set of all validities is denoted \emph{DEL}.
\end{definition}

\noindent Note that each $\sim_a'$ in the updated model $\M' = \M \times \mathrm{M}$ is an equivalence relations, since both $\sim_a$ and $\sim_a^{\mathrm{M}}$ are equivalence relations. Also note $[\mathrm{e^M}]A$ is true at $(\M,w)$ just in case whenever the event $\evt^{\mathrm{M}}$ is \emph{executable} in $(\M,w)$, namely $\M,w \satisfies \mathrm{pre^M(\evt)}$, $A$ holds in the updated model $\M' = \M \times \mathrm{M}$ and corresponding world $(\M', (w,\evt))$. The dynamic modality therefore acts similarly as an implication, though across updated models. In particular, to evaluate a dynamic formula $[\mathrm{e^M}]A$, not only do we need to look at a world $w$ in a model $\M$ but we also need to move to the updated model $\M'$ and evaluate $A$ in world $(w, \evt)$.   

Before we proceed, let us consider a simple example of event model and update. This is a mere adaptation of Example 6.13 of \cite{DEL}.

\begin{example}[$\mathrm{Mayread}$]
Consider two agents Anne and Bob, and an epistemic model $\M=(\lbrace w,v\rbrace, \sim_a, \sim_b, V)$ where $V(p)=\lbrace w\rbrace$ (see Figure \ref{fig:ex} (a)).

\begin{figure}[pos=h]
    \centering
\begin{subfigure}[c]{0.45\linewidth}
\begin{tikzpicture}[scale=0.8]
    \draw (0,1.5) node {};
    \draw (-2,0) node (0) {\underline{$w$}};
    \draw (2,0) node (1) {$v$};
    \draw (0,-1.5) node {};
    \draw[<->] (0) --node[fill=white, inner sep=1pt] {$a,b$} (1);
    \draw      (0) edge [loop left] node {$a,b$}  (0);
    \draw      (1) edge [loop right] node {$a,b$}  (1);
\end{tikzpicture}
\caption{Epistemic model $\M$ where both Anne and Bob are uncertain about $p$. The actual world is $w$.}
\end{subfigure}
\hfill
\begin{subfigure}[c]{0.45\linewidth}
\begin{tikzpicture}[scale=0.8]
    \draw (-2,2)  node   (w)    {\small $\mathrm{read_{p}}$};
    \draw (2,2)   node   (u)    {\small $\mathrm{read_{np}}$};
    \draw (0,0)   node   (v)    {\small \underline{$\mathrm{\lnot read}$}};
    \draw[<->] (w) --node[fill=white, inner sep=1pt] {$b$} (u);
    \draw[<->] (u) --node[fill=white, inner sep=1pt] {$b$} (v);
    \draw[<->] (w) --node[fill=white, inner sep=1pt] {$b$} (v);
    \draw (w) edge [loop left] node {$a,b$}   (w);
    \draw (u) edge [loop right] node {$a,b$}  (u);
    \draw (v) edge [loop right] node {$a,b$}  (v);
\end{tikzpicture}
\caption{Event model $\mathrm{M}$ where Anne may read the letter. The actual event is $\lnot\mathrm{read}$.}
\end{subfigure}
\caption{Example $\mathrm{Mayread}$.}
\label{fig:ex}
\end{figure}

In the epistemic state $w$, $p$ is true but both Anne and Bob do not know $p$. Consider now the situation where Anne \emph{may read} a letter containing the truth value of $p$ \emph{i.e.} either $p$ or $\lnot p$. From Bob's point of view, three alternative scenarios are indistinguishable: (i) Anne does not open the letter, (ii) Anne reads the letter and learns $p$, and (iii) Anne reads the letter and learns $\lnot p$. These three scenarios are represented by three different points in the event model $\mathrm{M}$ depicted in Figure \ref{fig:ex} (b), respectively $\mathrm{\lnot read, read_{np}}$ and $\mathrm{read_p}$, with precondition $\mathrm{pre^M(\lnot read)}= \top, \mathrm{pre^M(read_{np})} =\lnot p$ and $\mathrm{pre(read_p)}=p$. Bob cannot distinguish between the three events but obviously Anne can. Hence, their relation in the event model are the universal one and the identity relation, respectively. Finally, by the preconditions, it is easy to see that event $\mathrm{read_{p}}$ can only be executed in $w$, $\mathrm{read_{np}}$ only in state $v$, while $\mathrm{\lnot read}$ can be executed in both states. Hence the resulting updated model, see Figure \ref{fig:exUpdatedModel}, will have four states.

Let us assume Anne does not open the letter, so the first scenario is the actual one (underlined in Figure \ref{fig:ex}), \emph{i.e.} the pointed event model $(\mathrm{M},\lnot \mathrm{read})$. In the updated model $\M':=\M \times \mathrm{M}$, see again Figure \ref{fig:exUpdatedModel}, the actual state is therefore $(w,\lnot \mathrm{read})$ wherein $p$ is true but neither Anne nor Bob knows it.

\begin{figure}[pos=h]
\centering
\begin{tikzpicture}
    \draw (0,0)    node (x) {$\M \times \mathrm{M} \qquad \xRightarrow{\hspace{3cm}}$};
    \draw (5,1.5)   node (1w) {\scriptsize$(w,\mathrm{read_{p}})$};
    \draw (8,1.5)   node (0u) {\scriptsize$(v,\mathrm{read_{np}})$};
    \draw (5,-1.5)  node (1v) {\scriptsize\underline{$(w,\mathrm{\lnot read})$}};
    \draw (8,-1.5)  node (0v) {\scriptsize$(v,\mathrm{\lnot read})$};
%
    \draw[<->] (1w) --node[fill=white, inner sep=1pt] {$b$} (0u);
    \draw[<->] (1w) --node[fill=white, inner sep=1pt] {$b$} (1v);
    \draw[<->] (0v) --node[fill=white, inner sep=1pt] {$b$} (0u);
    \draw[<->] (0v) --node[fill=white, inner sep=1pt] {$a,b$} (1v);
    \draw (1w) edge [loop left] node {$a,b$} (1w);
    \draw (0u) edge [loop right] node {$a,b$} (1w);
    \draw (1v) edge [loop left] node {$a,b$} (1w);
    \draw (0v) edge [loop right] node {$a,b$} (1w);
\end{tikzpicture}
\caption{Resulting updated model $\M'= \M \times \mathrm{M}$ in the example $\mathrm{Mayread}$. The actual world is $(w,\mathrm{\lnot read})$, where Anne has not read the letter but Bob does not know it.}
\label{fig:exUpdatedModel}
\end{figure}

\end{example}

From atomic events we can construct composed events in the following way.

\begin{definition}[Event composition]
    Given event models $\actionModel$ and $\mathrm{N=(E^N},\lbrace\sim^{\mathrm{N}}_a\rbrace_{a\in \mathcal{A}}, \mathrm{pre^N)}$, their \emph{composition} $\mathrm{(M;N)}$ is the event model $\mathrm{L=(E^L}, \lbrace \sim^{\mathrm{L}}_a\rbrace_{a\in \mathcal{A}},\mathrm{pre^L)}$ such that $\mathrm{E^L} = \mathrm{E^M}\times\mathrm{E^N}$, $\sim^{\mathrm{L}}_a = \sim^{\mathrm{M}}_a \times \sim^{\mathrm{N}}_a$ (for all $a\in \mathcal{A}$) and for all $\evt\in \mathrm{E^M}, \mathrm{f} \in \mathrm{E^N}$, $pre^{\mathrm{L}}(\mathrm{(e,f)}) = \mathrm{pre^{M}(e)} \land [\mathrm{e^M]pre^{N}(f)}$.
\end{definition}

We now move to the axiomatization of Dynamic Epistemic Logic.

\begin{definition}[Axiomatisation \textbf{DEL}]
    Based on the language $\lDEL$, we introduce the Hilbert-style axiomatic system \textbf{DEL}, which is composed of the axioms and rules of inference in Table \ref{AxDEL}. As standard, we will write $\vdash_{\textbf{DEL}} A$, for the formula $A$ is derivable in the Hilbert system \textbf{DEL}.
\end{definition}

\begin{table}[width=.6\linewidth, pos=h]
\centering
\caption{The axiomatisation \textbf{DEL}}\label{AxDEL}
\begin{tabular}{|ll|}
\hline 
All instantiations of propositional tautologies &  \\

Axioms of \textbf{S5m} & \\
\textbf{Reduction axioms:} & \\
$ [\mathrm{e^M}]p \eq (\mathrm{pre(e^M}) \imp p) $   & atomic permanence \\ 
$ [\mathrm{e^M}]\lnot A \eq (\mathrm{pre(e^M}) \imp \lnot[\mathrm{e^M}] A) $ 	& event and negation \\ 
$ [\mathrm{e^M}](A \land B) \eq ([\mathrm{e^M}]A \land [\mathrm{e^M}]B) $& event and conjunction \\ 
$ [\mathrm{e^M}][\mathrm{f^N}]A \eq [(\mathrm{e^M;f^N})]A $ 		& event composition  \\
$\displaystyle [\mathrm{e^M}]K_a A \eq (\mathrm{pre^M(a)} \imp \bigwedge_{\evt\sim^{\mathrm{M}}_e \mathrm{f}} K_a[\mathrm{f^M}]A) $   & event and knowledge \\ 

\textbf{Inference rules:} & \\
From $ A $ and $ A \imp B $, infer $ B $ 	& modus ponens (MP) \\ 
From $A$, infer $K_a A$ 					& necessitation rule (Nec) \\ 
\hline
\end{tabular}
\end{table}

\begin{theorem}[Soundness and Completeness of \textbf{DEL}]\label{AxComp}
For all $A \in \lDEL$, $\vdash_{\textbf{DEL}} A$ if, and only if, $\satisfies_{DEL} A$.
\end{theorem}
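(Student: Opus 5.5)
The proof splits into the two directions, with completeness obtained by the standard reduction (translation) method, as in \cite{DEL}.

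\emph{Soundness.} We check by induction on the length of derivations that every axiom of Table \ref{AxDEL} is valid and that MP and Nec preserve validity.

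Propositional tautologies and the \textbf{S5m} axioms are valid because each $\sim_a$ is an equivalence relation (Definition \ref{def:espitemicmodel}). MP and Nec are routine.

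For the reduction axioms we unfold Definition \ref{defSemDEL}:
\begin{itemize}
\item Atomic permanence follows from the clause $(w,\evt)\in V'(p)$ iff $w\in V(p)$.
\item The negation and conjunction axioms follow from the implication-like reading of $[\mathrm{e^M}]$: if $\mathrm{pre(e)}$ fails, both sides hold; otherwise the clauses commute with the Boolean ones at $(w,\evt)$.
\item For the knowledge axiom, assume $\M,w\satisfies \mathrm{pre(e)}$. Then $\M',(w,\evt)\satisfies K_aA$ iff for all $(v,\mathrm{f})\in W'$ with $w\sim_a v$ and $\evt\sim_a^{\mathrm{M}}\mathrm{f}$, we have $\M',(v,\mathrm{f})\satisfies A$. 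Since $(v,\mathrm{f})\in W'$ iff $\M,v\satisfies\mathrm{pre(f)}$, this is exactly $\bigwedge_{\evt\sim_a\mathrm{f}}K_a[\mathrm{f^M}]A$ at $w$.
\item For composition, we show that $(\M\times\mathrm{M})\times\mathrm{N}$ is isomorphic to $\M\times(\mathrm{M;N})$ via $((w,\evt),\mathrm{f})\mapsto(w,(\evt,\mathrm{f}))$. The domain matches precisely because $\mathrm{pre^L}((\evt,\mathrm{f}))=\mathrm{pre(e)}\land[\mathrm{e^M}]\mathrm{pre(f)}$. We then use that isomorphic pointed models satisfy the same formulas.
\end{itemize}

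\emph{Completeness.} We define a translation $t:\lDEL\to\lDEL^0$. It commutes with Booleans and $K_a$, and pushes a dynamic modality inward according to the reduction axioms read left to right:
\begin{itemize}
\item $t([\mathrm{e^M}]p)=t(\mathrm{pre(e)})\imp p$;
\item the negation, conjunction and $K_a$ cases follow their respective axioms;
\item $t([\mathrm{e^M}][\mathrm{f^N}]A)=t([(\mathrm{e^M;f^N})]A)$.
\end{itemize}
We then prove two facts by induction:
\begin{enumerate}
\item[(i)] $\vdash_{\textbf{DEL}} A\eq t(A)$. This needs replacement of provable equivalents, which holds because the $[\mathrm{e^M}]$ modality satisfies a derivable monotonicity/necessitation from the reduction axioms together with \textbf{S5m}'s normality.
\item[(ii)] $A\eq t(A)$ is valid, which is immediate from soundness.
\end{enumerate}
Now suppose $\satisfies_{DEL}A$. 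Then $t(A)$ is a valid epistemic formula. By the completeness of \textbf{S5m}, $\vdash t(A)$ using only the S5m part of the system, and by (i), $\vdash_{\textbf{DEL}}A$.

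\emph{Main obstacle.} The hard part is showing that $t$ is well defined, i.e. that the recursion terminates. This matters especially for the composition clause, which does not decrease syntactic length: $\mathrm{M;N}$ is larger, and its preconditions contain new dynamic formulas. We plan to define a complexity measure $c$ as in \cite{DEL}:
\begin{itemize}
\item $c(p)=1$;
\item $c(\lnot A)=1+c(A)$;
\item $c(A\land B)=1+\max(c(A),c(B))$;
\item $c(K_aA)=1+c(A)$;
\item $c([\mathrm{e^M}]A)=(4+c(\mathrm{M}))\cdot c(A)$, where $c(\mathrm{M})=\max_{\evt}c(\mathrm{pre(e)})$.
\end{itemize}
We then verify, clause by clause, that each right-hand side has strictly smaller complexity than the left-hand side. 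The composition clause is the delicate arithmetic check, since it requires bounding $c(\mathrm{M;N})$ in terms of $c(\mathrm{M})$ and $c(\mathrm{N})$. If this bound fails, we would instead translate innermost dynamic modalities first, using induction on the stratification $\lDEL^n$ together with the fact that $t$ applied to a static formula under a single modality always terminates.
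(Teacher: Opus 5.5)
\textbf{Overall.} Your route is the standard argument of \cite{DEL}, on which the paper relies: soundness by checking the axioms and rules, then completeness by a translation $t$ into the static language driven by the reduction axioms, followed by completeness of \textbf{S5m}. Your soundness checks are fine. However, your termination step fails as you state it, and the failure is not where you expect.

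\textbf{The knowledge clause fails, not composition.} With your measure, $c([\mathrm{e^M}]K_aA)=(4+c(\mathrm{M}))(1+c(A))$ exceeds $c([\mathrm{f^M}]A)=(4+c(\mathrm{M}))\,c(A)$ only by the constant $4+c(\mathrm{M})$. The right-hand side $\mathrm{pre(e)}\imp\bigwedge_{\evt\sim_a\mathrm{f}}K_a[\mathrm{f^M}]A$, by contrast, contains an $n$-fold conjunction, where $n$ is the number of events $a$-indistinguishable from $\evt$. Its complexity is therefore at least $c([\mathrm{f^M}]A)+1+\lceil\log_2 n\rceil$, plus the cost of the implication. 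So ``right-hand side strictly smaller'' is false once the $a$-class is large compared with $c(\mathrm{M})$. For instance, take two $a$-indistinguishable events $\evt,\mathrm{f}$ with $\mathrm{pre(e)}=\mathrm{pre(f)}=p$ and $A=q$; reading $B\imp C$ as $\lnot(B\land\lnot C)$, both sides have complexity $10$. Hence your termination argument for $t$ breaks down at the knowledge clause. So does any induction for (i) that applies the hypothesis to the whole right-hand side.

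\textbf{How to repair it.} Since $t$ commutes with the Booleans and $K_a$, define $t([\mathrm{e^M}]K_aA):=t(\mathrm{pre(e)})\imp\bigwedge_{\evt\sim_a\mathrm{f}}K_a\,t([\mathrm{f^M}]A)$. This only requires $c(\mathrm{pre(e)})<c([\mathrm{e^M}]K_aA)$ and $c([\mathrm{f^M}]A)<c([\mathrm{e^M}]K_aA)$, both of which are immediate. In (i), apply the induction hypothesis to these components and conclude by $K_a$-congruence (Nec and the K axiom) plus propositional reasoning. Alternatively, use the factor $4+c(\mathrm{M})+|\mathrm{M}|$ in place of $4+c(\mathrm{M})$.

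\textbf{Composition is harmless.} From $\mathrm{pre}((\evt,\mathrm{f}))=\mathrm{pre(e)}\land[\mathrm{e^M}]\mathrm{pre(f)}$ you get $c(\mathrm{M;N})=1+(4+c(\mathrm{M}))\,c(\mathrm{N})$. Hence $4+c(\mathrm{M;N})<(4+c(\mathrm{M}))(4+c(\mathrm{N}))$, and so $c([(\mathrm{e^M;f^N})]A)<c([\mathrm{e^M}][\mathrm{f^N}]A)$. Your fallback is therefore unnecessary, which is fortunate. Translating innermost modalities first needs replacement of equivalents under $[\mathrm{e^M}]$. That is not a rule of the table, you give no argument for it, and it is only obtainable afterwards from completeness. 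The outermost-first induction for (i) uses congruence only for $\lnot$, $\land$ and $K_a$, so drop that claim.

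\textbf{The case of $\bot$.} The paper's grammar has $\bot$ as a primitive, but neither the table nor your $t$ covers $[\mathrm{e^M}]\bot$. Either read $\bot$ as an abbreviation such as $p\land\lnot p$, or add the axiom $[\mathrm{e^M}]\bot\eq\lnot\mathrm{pre(e)}$. Otherwise completeness fails. A variant of $t$ that sends $[\mathrm{e^M}]\bot$ to $\top$ still maps every theorem of the table to an \textbf{S5m}-validity. Yet it maps the valid formula $p\imp\lnot[\mathrm{e^M}]\bot$ (with $\mathrm{pre(e)}=p$) to a formula equivalent to $\lnot p$.
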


\begin{proof}
    The proof can be found in \cite{DEL}.
\end{proof}

For our purpose, we need to introduce notations to deal with sequences of events.

\begin{definition}
    If $\evt^{\mathrm{M}_1}_1, \evt^{\mathrm{M}_2}_2, \dots, \evt^{\mathrm{M}_n}_n$ are events, then $\alpha = \evt^{\mathrm{M}_1}_1 \cdot \evt^{\mathrm{M}_2}_2 \cdots \evt^{\mathrm{M}_n}_n$ is a \emph{sequence of events}. We denote such sequences of events with Greek letters $\alpha, \beta, etc.$, where the empty sequence is denoted $\epsilon$. Sequences of events formulas $[\alpha]A$ are inductively defined by $[\epsilon]A:= A$ and $[\alpha\cdot \mathrm{e^M}]A := [\alpha][\mathrm{e^M}]A$.

    By $\alpha' \prefix \alpha$ we denote a \emph{prefix} of the sequence $\alpha$, namely an initial subsequence of $\alpha$.
    Moreover, for a sequence $\alpha = \evt^{\mathrm{M}_1}_1 \cdot \evt^{\mathrm{M}_2}_2 \cdots \evt^{\mathrm{M}_n}_n$, the model $\M \times \mathrm{M}_1\times \cdots \times \mathrm{M}_n$, obtained after successive updates in $\alpha$, is denoted $\M^{\mathrm{M}(\alpha)}$. 
    Subsequently, $(w,\alpha)$ denotes the world $((\dots((w,\evt_1^{\mathrm{M}_1}),\evt_2^{\mathrm{M}_2})\dots),\evt_n^{\mathrm{M}_n})$ belonging to $\M^{\mathrm{M}(\alpha)}$.
    By $\M, w\satisfies \mathrm{pre}(\alpha)$ we denote that $\M,w \satisfies \mathrm{pre}^{\mathrm{M}_1}(\evt_1)$, $\M\times\mathrm{M}_1,(w,\mathrm{e_1})\satisfies \mathrm{pre}^{\mathrm{M}_2}(\evt_2) \dots$ and $\M\times\mathrm{M}_1\times \cdots \times \mathrm{M}_{n-1}, (( \dots(w,\evt_1),\dots), \evt_{n-1}) \satisfies \mathrm{pre}^{\mathrm{M}_{n}}(\evt_n) $.
    Hence, $\M,w \satisfies [\alpha]A$ if, and only if, if $\M,w \satisfies \mathrm{pre}(\alpha)$ then $\M^{\mathrm{M}(\alpha)},(w,\alpha)\satisfies A$.

    For any two sequences $\alpha=\evt^{\mathrm{M}_1}_1 \cdot \evt^{\mathrm{M}_2}_2 \cdots \evt^{\mathrm{M}_n}_n$ and $\beta=\mathrm{f}^{\mathrm{M}_1}_1 \cdot \mathrm{f}^{\mathrm{M}_2}_2 \cdots \mathrm{f}^{\mathrm{M}_n}_n$, we denote by $\alpha \sim_a \beta$ the fact that $\evt_i \sim_a^{\mathrm{M}_i} \mathrm{f}_i$ for all $1 \leq i \leq n$.
    Subsequently, for two worlds $(w,\beta)$ and $(v,\gamma)$ in the updated model $\M^{\mathrm{M}(\alpha)}$ -- where $\beta= \mathrm{f}_1^{\mathrm{M}_1}, \dots, \mathrm{f}_n^{\mathrm{M}_n}$ and $\gamma = \mathrm{g}_1^{\mathrm{M}_1}, \dots, \mathrm{g}_n^{\mathrm{M}_n}$ -- it holds that $(w,\beta)\sim^{\mathrm{M}(\alpha)}_a (v,\gamma)$ if and only if $w\sim_a v$ and $\beta \sim_a \gamma$ \emph{i.e.} $\mathrm{f}_1\sim_a^{\mathrm{M}_1} \mathrm{g}_1$ and $\mathrm{f}_2 \sim_a^{\mathrm{M}_2} \mathrm{g}_2$ and $\dots$ and $\mathrm{f}_n \sim_a^{\mathrm{M}_{n}} \mathrm{g}_n$. 

    
\end{definition}


\section{Indexed Dynamic Hypersequents for DEL}\label{sectionIDHSwAS}

In this section, we introduce the syntactic objects on which our calculus is built, namely \emph{indexed dynamic hypersequents with event store}. Since these are rather sophisticated objects, we proceed gradually, introducing them step by step. Moreover, rather than presenting each construction in a purely syntactic manner -- which, although possible, would likely be difficult to follow on a first reading -- we motivate each step by the semantic intuition underlying it. While these semantic considerations are not strictly necessary for the formal development, we believe that they provide valuable intuition and make indexed dynamic hypersequents with event store significantly easier to be grasped.

\begin{definition}[Sequent]
A \emph{sequent} is a syntactic object of the form $\Gamma\Imp \Delta$, where $\Gamma$ and $\Delta$ are finite multisets of formulas.
\end{definition}

Building on the extensive literature on sequent calculi for modal logics, e.g. see \cite{Poggiolesi2010} or \cite{Restall}, we adopt the standard view that a sequent can be understood as the syntactic representation of a world in a model. In the dynamic setting, we maintain this interpretation while adding the observation that the same world may occur in different models. In particular, we may need to move from a world $w$ of a model $\M$ to the corresponding world $(w, \alpha)$ in an updated model $\M^{\mathrm{M}(\alpha)}$, where $\alpha$ is a finite sequence of epistemic events. Correspondingly, at the syntactic level, we move from a sequent representing the world $w$ in model $\M$, to a new sequent representing the world $(w, \alpha)$ now considered in the updated model $\M^{\mathrm{M}(\alpha)}$. To capture this distinction, we enhance the standard notion of sequent to the notion of \emph{dynamic sequent} which we introduce as follows.

\begin{definition}[Dynamic sequent]\label{def:DS}
Let $\Gamma_1 \Imp \Delta_1,\cdots,\Gamma_k \Imp \Delta_k$ be sequents and $\alpha_1,\cdots,\alpha_k$ be finite sequences of epistemic events pairwise distinct, \emph{i.e.} $\alpha_i\neq \alpha_j$ for all $1\leq i\neq j \leq k$. Then $\update_{\alpha_1}\Gamma_1  \Imp \Delta_1 \update \cdots \update_{\alpha_k}\Gamma_k \Imp \Delta_k$ is a \emph{dynamic sequent}.

For the empty sequence of events $\epsilon$ we write $\Gamma \Imp \Delta$ instead of $\update_\epsilon \ \Gamma \Imp \Delta$. In particular then $\Gamma \Imp \Delta$ is a dynamic sequent. Capital letters $X, Y, \dots$ will denote such dynamic sequents.
\end{definition}

We further introduce the following notation that will later prove useful.

\begin{definition}\label{def:ActionLabelsDS}
     Let $X = \update_{\alpha_1} \Gamma_1 \Imp \Delta_1 \update \cdots \update_{\alpha_k} \Gamma_k \Imp \Delta_k$ be a dynamic sequent. The set of \emph{event labels} of $X$, denoted $l_\evt(X)$, is defined as $l_{\evt}(X):=\{\alpha_{i}\;\mid\; \alpha_{i}\in X\}$. 
\end{definition}

We now have a way to represent not only a world in the initial model $\M$ but also  in possible updated models $\M^{\mathrm{M}(\alpha_1)}, \dots, \M^{\mathrm{M}(\alpha_k)}$, where it may occur. However, a model is made of several worlds, each possibly occurring in updated models. How can we represent that? We adapt hypersequents -- namely objects of the form $\Gamma_{1}\Rightarrow \Delta_{1} \vert \cdots \vert \Gamma_{k}\Rightarrow \Delta_{k}$  where each sequent $\Gamma_i \Imp \Delta_i$ corresponds to a world in a model -- to this new dynamic setting: vertical bar will now separate dynamic sequents, and not simply sequents. 

\begin{definition}[Dynamic hypersequent] Let $X_1,\cdots,X_k$ be dynamic sequents, then the syntactic object:

$$X_{1}\Sep \cdots \Sep X_{k}$$

\noindent is a dynamic hypersequent.
\end{definition}

Dynamic hypersequents  have been introduced in \cite{lerouvillois2025dynamichypersequentspublicannouncement} and already constitute a natural extension of hypersequents to a dynamic setting. They work adequately for Public Announcement Logic (PAL) in the single-agent case. 
 However, we now need to extend the framework of dynamic hypersequents in two ways to provide a suitable calculus for Dynamic Epistemic Logic (DEL).
First we need to enrich the framework of dynamic hypersequents to accommodate multiple agents. Second, since DEL encompasses not only public announcements but also a variety of other types of epistemic events, our syntactic framework must account for this broader class of events. Let us keep on proceeding step by step, first addressing the multi-agent aspect and then the broader class of epistemic events.

In order to introduce multi-agent systems into our framework, it is useful to first recall the following feature of the semantics of the Hilbert system S5, which is the only system in the modal cube that enjoys the property that we are about to explain. The standard Kripke semantics for S5m is given in terms of possible worlds connected by accessibility relations, one for each agent. Each accessibility relation is an equivalence relation, satisfying reflexivity, symmetry, and transitivity (see also Definition \ref{def:espitemicmodel} Section \ref{sectionDEL}).
For example, consider the following model with four possible worlds $w,u,v,x$, such that $w\sim_a u, v\sim_a x$ and $w\sim_b v$.

\begin{figure}[pos=h]
    \centering
    \begin{tikzpicture}

    \draw (-1,1)  node (w) {$w$};
    \draw (1,1)   node (u) {$u$};
    \draw (-1,-1) node (v) {$v$};
    \draw (1,-1)  node (x) {$x$};

    \draw[<->] (w) --node[fill=white, inner sep=1pt] {$a$} (u);
    \draw[<->] (v) --node[fill=white, inner sep=1pt] {$a$} (x);
    \draw[<->] (w) --node[fill=white, inner sep=1pt] {$b$} (v);
 
    \draw (w) edge [loop left] node {$a,b$} (w);
    \draw (v) edge [loop left] node {$a,b$} (v);
    \draw (x) edge [loop right] node {$a,b$} (x);
    \draw (u) edge [loop right] node {$a,b$} (u);
    
  	\end{tikzpicture}
    \caption{Example of Kripke model $\M$ with two agents $a$ and $b$}
    \label{fig:Kripke}
\end{figure}

\medskip\noindent Thus, a Kripke frame is naturally represented as a labelled graph, whose nodes are the possible worlds and whose edges correspond to the accessibility relations of the agents.

However, for the case of S5m, because the accessibility relation is an equivalence relation, it is possible to move to an alternative  representation based on partitions, closer in spirit to \cite{aumann}. In particular, for each agent $a\in \mathcal{A}$, the associated equivalence relation $\sim_{a}$ naturally induces a partition $\mathbb{R}_{a}=\{W_{1}, ..., W_{r}\}$ of the set of worlds $W$, that we call an $a$-partition. The \emph{cells} of the partition are the subsets $W_{1}, ..., W_{r}$ and they satisfy the following two requirements: cells are pairwise disjoint \emph{i.e.} $W_i \cap W_j = \emptyset$ for any $1 \leq i \neq j \leq r$, and their union is the whole set $W$. Hence the previous example in terms of partitions and cells becomes the following.

\begin{figure}[pos=h]
    \centering
    \begin{tikzpicture}
    
    \draw[opaque, rounded corners]           (-1.9, 1.8) rectangle (1.8, .2);
    \draw[opaque, rounded corners]           (-1.9, -.1) rectangle (1.8, -1.6);
    \draw[opaque, counter, rounded corners]  (-1.7, 2)   rectangle (-.1, -1.8);
    \draw[opaque, counter, rounded corners]  (.3, .4)    rectangle (2, 2);
    \draw[opaque, counter, rounded corners]  (.3, -.3)   rectangle (2, -1.8);

    \draw (-1,1)  node (w) {$w$};
    \draw (1,1)   node (u) {$u$};
    \draw (-1,-1) node (v) {$v$};
    \draw (1,-1)  node (x) {$x$};

    \draw (-2.3, 1)  node (1a) {$1a$};
    \draw (-2.3, -1) node (2a) {$2a$};
    \draw (-1, 2.5)   node (1b) {$1b$};
    
  	\end{tikzpicture}
    \caption{Example of model $\M$ with partitioning for two agents $a$ and $b$}
    \label{fig:partitioning}
\end{figure}

\medskip\noindent Note that each cell can, when necessary, be designated using indices. In particular, each  non-singleton cell is indexed with an index $na$ composed by the name of the agent $a$ corresponding to the partition that the cell belongs to, together with a natural number $n$, which differentiates it from the other cells of the same partition. So for example, consider again Figure \ref{fig:partitioning} where four worlds are divided into two partitions, one for agent $a$ and the other one for agent $b$.We have the partition for agent $a$, namely $\mathbb{R}_{a} = \{ \{w, u\} \{v, x\}\}$ with the first cell indexed by $1a$ and the second cell indexed by $2a$, and a partition for agent $b$, namely $\mathbb{R}_{b} = \{ \{w, v\} \{u\}, \{x\}\}$  with the first cell indexed by $1b$, and the second and third cells that do not need any index as they only contain one element.\footnote{We do not need to name cells that consist in singletons because reflexivity of the accessible relations $\sim_a$ can be very naturally captured by a specific rule, without any further condition on the indices. See Section \ref{sectionCalculus}.}

Both semantic representations are useful for understanding dynamic indexed hypersequents -- and this is most likely one of their main qualities. Indeed, when extending dynamic hypersequents to the multi-agent setting, it is more natural to think of our syntactic objects from the semantic presentation based on sets. Subsequently, however, when introducing an order on indices and defining the interpretation of dynamic indexed hypersequents with respect to this order, it becomes more convenient to adopt the Kripke-graph perspective. Since we have made both presentations explicit and they are equivalent, we shall freely move between them throughout the paper. We trust the reader to recognize which viewpoint is intended in each context, depending on the aspects under consideration.

\medskip

Let us return to the question of increasing dynamic hypersequents to cover multi-agent setting. To do so, we adopt  a set-theoretic perspective on S5 and enrich dynamic hypersequents with indices of the form just described to designate cells. In particular we equip each dynamic sequent with a set of indices:  each index denotes the cell to which the world-sequent belongs to for each given partition. So, once more with reference to our previous example, suppose we have four sequents $\Gamma_{1}\Rightarrow \Delta_{1}, \Gamma_{2}\Rightarrow \Delta_{2}, \Gamma_{3}\Rightarrow \Delta_{3}, \Gamma_{4}\Rightarrow \Delta_{4}$, each corresponding to one of the worlds $w$, $u$, $v$, $x$. respectively.  Then, in order to reflect the partitions of agents $a$ and $b$, we have the following \emph{indexed hypersequent}:

$$1a, 1b: \Gamma_{1}\Rightarrow \Delta_{1}\Sep 1a:  \Gamma_{2}\Rightarrow \Delta_{2}\Sep 2a, 1b: \Gamma_{3}\Rightarrow \Delta_{3}\Sep 2a: \Gamma_{4}\Rightarrow \Delta_{4}$$

Note that sets of indices do not change through model updates: if a world $w$ is in a certain cell indexed by $1a$ in a model $\M$, the corresponding world will remain in the same cell in the updated model $\M^{\mathrm{M}(\alpha)}$ -- as long as it survives the successive updates.  There could be cells which in the model $\M$ contain $k$ worlds and end up containing only one corresponding world -- or even none -- but this does not affect our notation. So, again with reference to the previous example, suppose we move to the model updated by the event model $\mathrm{M}$ consisting of a unique event $\evt$ that represents the public announcement of some formula $A \equiv \mathrm{pre^M(e)}$, and suppose $x$ is the only world that does not satisfy the precondition of that announcement. In this case, the corresponding updated model is depicted in Figure \ref{fig:partitioning_update} and we will have the following \emph{indexed dynamic hypersequent}:
$$1a, 1b: \Gamma_{1}\Rightarrow \Delta_{1} \update_{\mathrm{e^M}} \Gamma'_1 \Imp \Delta'_1 \Sep 1a:  \Gamma_{2}\Rightarrow \Delta_{2} \update_{\mathrm{e^M}} \Gamma_2' \Imp \Delta_2' \Sep 2a, 1b: \Gamma_{3}\Rightarrow \Delta_{3} \update_{\mathrm{e^M}} \Gamma_3' \Imp \Delta_3' \Sep 2a: \Gamma_{4}\Rightarrow \Delta_{4}$$
\noindent where the set of indices at the beginning of each dynamic sequent do not change. As the reader could observe, we may neglect curly brackets when writing the elements of the set $\eta$ of indices in an indexed dynamic hypersequent. Moreover, when $\eta$ is the empty set, we will also omit it. Thus a standard sequent $\Gamma \Imp \Delta$ is an indexed dynamic sequent.

\begin{figure}[pos=h]
    \centering
    \begin{tikzpicture}
    
    \draw[opaque, rounded corners]           (-1.9, 1.8) rectangle (1.8, .2);
    \draw[opaque, rounded corners]           (-1.9, -.1) rectangle (1.8, -1.6);
    \draw[opaque, counter, rounded corners]  (-1.7, 2)   rectangle (-.1, -1.8);
    \draw[opaque, counter, rounded corners]  (.3, .4)    rectangle (2, 2);
    \draw[opaque, counter, rounded corners]  (.3, -.3)   rectangle (2, -1.8);

    \draw (-1,1)  node (w) {$(w,\evt)$};
    \draw (1,1)   node (u) {$(u,\evt)$};
    \draw (-1,-1) node (v) {$(v,\evt)$};

    \draw (-2.3, 1)  node (1a) {$1a$};
    \draw (-2.3, -1) node (2a) {$2a$};
    \draw (-1, 2.5)  node (1b) {$1b$};
    
  	\end{tikzpicture}
    \caption{Example of updated model $\M \times \mathrm{M}$ with partitioning for two agents $a$ and $b$ after update}
    \label{fig:partitioning_update}
\end{figure}

We have thus introduced indexed dynamic hypersequents from an intuitive point of view. Let us move to the technical definitions for those notions.

\begin{definition}[Indices]
    We denote with $\eta,\theta,\iota,\dots$ finite sets of indices of the form $na$ where $n \in \mathbb{N}$  and $a\in \mathcal{A}$. They are such that for all $\eta$ and $a\in \mathcal{A}$, there exists at most one index $na \in \eta$.
\end{definition}

\begin{definition}[Indexed dynamic sequent] Let $X$ be a dynamic sequent and $\eta$ a (possibly empty) set of indices. Then $\eta:X$ is an \emph{indexed dynamic sequent}, IDS for short.\end{definition}


    

\begin{definition}[Indexed dynamic hypersequent]\label{defIDHS}
    If $\eta_1\!:\! X_1,  \dots,  \eta_k\!:\!X_k$ are indexed dynamic sequents, then
    $$G := \eta_1: X_1 \Sep \cdots \Sep \eta_k : X_k$$
    is an \emph{indexed dynamic hypersequent} -- IDHS for short -- as long as the sets of indices respect the following conditions.
    \begin{itemize}
        \item[$(1)$] For all $1 \leq i, j \leq k$ such that $i \neq j$, $\eta_i\cap \eta_j$ contains at most one element;
        \item[$(2)$] For all $1 \leq i, j \leq k$ such that $i \neq j$, there exists a sequence $l_1,\dots,l_p$ with $l_1 =i$, $l_p = j$ and for all $1\leq q < p$, $\eta_{l_q}\cap \eta_{l_{q+1}} \neq \emptyset$;
        \item[$(3)$]  For all $1\leq i,j \leq k$, if there is a sequence $i=l_1,\dots,l_p=j$ (where $1 \leq l_q \leq k$ for all $1 \leq q \leq p$, and $l_{1}, ..., l_{p-1}$ are pairwise distinct) and a sequence of indexed dynamic sequents $\eta_{l_1}:X_{l_1}, \dots,\eta_{l_p}:X_{l_p}$ such that

        $\quad$ - $\eta_{l_q} \cap \eta_{l_{q+1}} \neq \emptyset$, for each pair of indexed sequents $\eta_{l_q}: X_{l_q}$ and $\eta_{l_{q+1}}:X_{l_{q+1}}$ with $1\leq q < p$;
        
        $\quad$ - $\eta_{l_1}: X_{l_1}$ is the same indexed dynamic sequent as $\eta_{l_p}: X_{l_p}$ \emph{i.e.} $i=j$;
        
    \noindent then there is an index $na$ shared by all those sets of indices, \emph{i.e.} $\bigcap_{1 \leq i \leq p} \eta_{l_i} = \lbrace na \rbrace$.
    \end{itemize}
 
     \noindent 
     We call \emph{disconnected indexed dynamic hypersequent} -- for short DIDHS -- an indexed dynamic hypersequent satisfying (1) and (3) but not necessarily (2).
\end{definition}

Let us call any indexed dynamic sequent $\eta:X$ a \emph{world component} and any dynamic sequent $\update_\alpha \Gamma \Imp \Delta$ occurring in $X$ a \emph{dynamic component} of $X$. A world-component might therefore be composed of several dynamic components, and a dynamic hypersequent will be composed of several world-components separated by a vertical bar.

For $a\in \mathcal{A}$ and $\eta$ a set of indices, we denote by $a\in \eta$ the fact that there is $n\in \mathbb{N}$ such that $na\in \eta$.
Finally, $\Vert G \Vert$ denotes the set of indices that occur in the indexed dynamic hypersequent $G$. Consequently, $na \in \Vert G \Vert$ means that the index $na$ occurs in $G$.

\medskip

We now comment on conditions (1), (2) and (3) which shape the type of indexed dynamic hypersequents that can be used in the calculus. First of all note that this construction is an extension to the dynamic framework of the hypersequent calculus for multiagent S5 proposed in \cite{Poggiolesi2013fromSingleToMany}. Recently, \emph{crossword sequents} for S5m have also been proposed in \cite{bílková2025agentinterpolationknowledge}; they enjoy a similar underlying structure.

Condition (1) ensures that any two world-components stand in the same cell for at most one given partition. This condition is needed to ensure the required arbitrariness when creating a new dynamic sequent, similarly to the arbitrariness involved in introducing a fresh world in the construction of a countermodel.

Conditions (2) and (3) are necessary to have a soundness result for our calculus.
To understand them, it is helpful to think of sets of indices (and the indexed dynamic sequents they are associated with) as forming graphs. Indeed, although indexed dynamic hypersequents are presented as set-theoretic objects, this set-based presentation encodes an underlying graph -- where the nodes are given by sets of indices and the edges are the common index between two given sets of indices.
In particular, condition (2) ensures that the graph induced by the sets of indices is connected, meaning for any two sets of indices $\eta$ and $\theta$ there is a path from $\eta$ to $\theta$, namely a sequence $\eta=\chi_0, \chi_1, \dots, \chi_k, \chi_{k+1}=\theta$ such that for all $0 \leq i \leq k$, $\chi_i \cap \chi_{i+1} \neq \emptyset$.
Finally, condition (3) is intended to ensure that cycles, \emph{i.e.} paths $\chi_0, \chi_1, \dots, \chi_k, \chi_{k+1}$ where $\chi_0 = \chi_{k+1}$, are allowed only when all the sets of indices in the path share a common index $na$ -- meaning the associated indexed dynamic sequents represent worlds that lie in the same $a$-cell.

Figure \ref{fig:ConditionsIDHS} illustrates all three conditions both from a graph perspective and a set-theoretic perspective.

%
%
%
%

\begin{figure}[pos=h]
\begin{subfigure}[c]{0.45\linewidth}
\centering
\begin{tikzpicture}[scale=0.8,
roundnode/.style={circle, draw=black, fill=lightgray!50, thick},
]
	\node[fill=lightgray, roundnode] (1a1b) at (-2.5,0) 	{};
	\node[fill=lightgray, roundnode] (2a1b) at (-2.5,3) 	{};
	\node[fill=lightgray, roundnode] (2a)	 at (0, 3)		{};
	\node[fill=lightgray, roundnode] (1a)   at (0,-2) 	    {};
	\node[fill=lightgray, roundnode] (1a2b) at (2.5,0)		{};
	\node[fill=lightgray, roundnode] (2b)	 at	(2.5,3)		{};
    \node (void) at (0,-3) {};
	
	\draw (1a1b) -- node[fill=white,inner sep=1pt] {$1b$} (2a1b);
	\draw (1a1b) -- node[fill=white,inner sep=1pt] {$1a$} (1a);
	\draw (1a1b) -- node[fill=white,inner sep=1pt] {$1a$} (1a2b);
	\draw (2a1b) -- node[fill=white,inner sep=1pt] {$2a$} (2a);
	\draw (1a2b) -- node[fill=white,inner sep=1pt] {$2b$} (2b);
	\draw (1a2b) -- node[fill=white,inner sep=1pt] {$1a$} (1a);

\end{tikzpicture}
\caption{Illustration of the underlying graph of $G$}
\end{subfigure}
\hfill
\begin{subfigure}[c]{0.45\linewidth}
\centering
    \begin{tikzpicture}[scale=0.8]
\draw[opaque, rounded corners]  (-3.75, 3.5) rectangle (-0.5, 2.5);
\draw[opaque, rounded corners]  (-2.5, 4)  rectangle (-3.5, 0);
\draw[opaque, rounded corners, rotate=-20]  (-3.5, -1)  rectangle (1, 0);
\draw[opaque, rounded corners]  (-0.5, 0.5)   rectangle (0.5, -2);
\draw[opaque, rounded corners, rotate=20]  (-1, -1)  rectangle (3.5, 0);
\draw[opaque, rounded corners]  (2.5, 0)   rectangle (3.5, 3);

        \draw (-3, 3)  node (2a)   {$2a$};
        \draw (-3, 0.7)  node (1b)   {$1b$};
        \draw (0, -0.5)  node (1a)   {$1a$};
        \draw (3, 0.7)   node (2b)   {$2b$}; 
    \end{tikzpicture}
    \caption{Illustration of the sets of indices in $G$}
\end{subfigure}
\caption{Illustration of the underlying structure of $G:= 1a1b:X_1 \SepC 2a1b: X_2 \SepC 2a: X_3 \SepC 1a: X_4 \SepC 1a2b: X_5 \SepC 2b: X_6$ satisfying conditions (1), (2) and (3) of Definition \ref{defIDHS}}
\label{fig:ConditionsIDHS}
\end{figure}

\bigskip

The sets of indices we have introduced account for the multi-agent component of DEL in the hypersequent framework. We now introduce the last ingredient of our method, namely the so-called \emph{event store} which serves to capture the variety of all types of epistemic events which can occur in DEL and how they relate to each other in terms of agents' indistinguishability.

First of all, let $\mathrm{Var}=\lbrace \mathrm{x,y},\dots \rbrace$ be a countable set of \emph{event variables}.
Let now $\mathrm{Evt} = \lbrace \mathrm{e,f},\dots \rbrace$ be a set of so-called \emph{events} that will be either pointed event models $\mathrm{e^M}\in \lAct$ or variables $\mathrm{x}\in \mathrm{Var}$. From now on then, $\evt$ will denote either a pointed event model or an event variable, and sequences $\alpha \in \mathrm{Evt}^\ast$ will be composed of such events $\evt$. Henceforth, formulas can now contain variables. Eventually, we adopt the restriction that the empty sequence $\epsilon$ is only indistinguishable from itself \emph{i.e.} $\epsilon \sim_a \alpha$ if, and only if, $\alpha = \epsilon$, for all agents $a\in \mathcal{A}$.

\medskip\begin{definition}[Event store]\label{defActionStore}
    An \emph{event store} $\store$ is a finite set of \emph{syntactic} expressions of the form $\evt\sim_a \mathrm{f}$ for $\mathrm{e,f\in Evt}$ either pointed event models or event variables, and $a\in \mathcal{A}$.
    
    We will simply write $\evt\in \store$ to denote the fact that there is some $a\in \mathcal{A}, \mathrm{f}\in \mathrm{Evt}$ such that $\evt\sim_a \mathrm{f}\in \store$.  By $\evt\approx_a \mathrm{f}$ we denote the fact that there is a sequence $\evt=\mathrm{g}_1,\dots,\mathrm{g}_p = \mathrm{f}$ such that, for all $1\leq i < p$ either $\mathrm{g}_{i} \sim_a \mathrm{g}_{i+1}$ or $\mathrm{g}_{i+1} \sim_a \mathrm{g}_i$.
    This is to capture the symmetry and transitivity of the \emph{semantic} semantic relation $\sim^{\mathrm{M}}_a$ in an S5 event model $\actionModel$.\footnote{Although the semantic relation $\sim^{\mathrm{M}}_a$ is an equivalence relation -- and thus also reflexive -- we will not capture reflexivity through the relation $\approx$ in the action store because we will handle it directly through a specific logical rule.}
    We write $\evt\approx \mathrm{f}$ if there are sequences of agents $a_1,\dots,a_p$ and events $\mathrm{g_1,\dots,g_{p-1}}$ such that $\evt\approx_{a_1} \mathrm{g_1} \approx_{a_2}  \cdots \approx_{a_{p-1}} g_{p-1} \approx_{a_p} \mathrm{f}$.
    For any event variable $\mathrm{x}$ in a store $\store$, we say that it is \emph{bound} if there is a pointed event model $\evt^{\mathrm{M}}$ such that $\mathrm{x} \approx \evt^{\mathrm{M}}\in \store$. A variable is \emph{free} if it is not bound. Finally, we require all variables $\mathrm{x}$ in an event store $\store$ to be bound.
    
\end{definition}

\begin{definition}[IDHS with event store]
    An \emph{indexed dynamic hypersequent with event store} is an object $\ G \st \store$ where $G$ is an IDHS and $\store$ an event store containing at least all variables $\mathrm{x}$ appearing in $G$.
\end{definition}

We now have a definition of indexed dynamic hypersequent with event store. Before closing the section, we need to formulate its syntactic interpretation. As usual, we start by explaining the idea of this interpretation in intuitive terms and then move to the formal details. 

An IDHS with event store may be viewed as a set of possible worlds that are partitioned into epistemic equivalence classes and may evolve through model updates. The event store records all epistemic events while the indices specify the equivalence classes associated with each agent. By Conditions (1)–(3), these indices induce the structure of a connected graph. However, the indices alone do not specify at which world-component this graph is rooted. In order to define a sound interpretation, this information is essential. Indeed, the connected graph, by itself, does not determine a unique formula: one must also specify a distinguished world from which the interpretation is evaluated. To overcome this issue, we consider each world-component occurring in an IDHS with event store as a possible choice of root and define the interpretation relative to that root by constructing a \emph{rooted disconnected indexed dynamic hypersequent}.\footnote{This procedure is similar to the construction of a modally equivalent tree model $\M'$ from a model $\M$ rooted at some world $w$. See \emph{e.g.} \cite{BlackburnModalLogic}.} The interpretation of the entire IDHS with event store is then obtained by taking the conjunction of the interpretations associated with all possible choices of such a root.

Having introduced indexed dynamic hypersequents with an event store step by step, we now proceed to develop their interpretation in the same gradual fashion.

\begin{definition}[Interpretation of sequents]
    The interpretation $\tau$ of a sequent $\Gamma \Imp \Delta$ is standard, namely
    $$(\Gamma \Imp \Delta)^{\tau} := \bigwedge \Gamma \imp \bigvee \Delta$$
\end{definition}

\noindent The interpretation $\tau$ is then extended to dynamic sequents in the following way.

\begin{definition}[Interpretation of dynamic sequents]
    If $X=\update_{\alpha_1} \Gamma_1 \Imp \Delta_1 \update \cdots \update_{\alpha_k} \Gamma_k \Imp \Delta_k$ is a dynamic sequent, then its interpretation is defined as follows.
   $$X^\tau := [\alpha_1](\Gamma_1 \Imp \Delta_1)^\tau \lor \cdots \lor [\alpha_k](\Gamma_k \Imp \Delta_k)^\tau$$
\end{definition}

To extend the interpretation to indexed dynamic hypersequents, we first introduce some additional notation. Given an indexed dynamic sequent $\eta$: $X$ occurring in an IDHS $G$, we define the set of world-components accessible from it as the collection of indexed dynamic sequents $\theta:Y$ that share an index with $\eta:X$, \emph{i.e.} such that $\eta \cap \theta \neq \emptyset$. Formally, we have the following definition.

\begin{definition}\label{defn:setofsameindexes}
For any IDS $\eta : X$ in $G$, the set of all the IDSs in $G$ that have one common index with $\eta$ is defined as: 

$$\Sigma^{G}_{\eta:X} := \lbrace \theta : Y \in G \mid \eta \cap \theta \neq \emptyset \rbrace$$
\noindent For any $\eta,\theta$ with a single common index $na$, \emph{i.e.} such that $\eta \cap \theta = \lbrace na \rbrace$, $f(\eta,\theta)$ denotes that agent $a$.
\end{definition}


\medskip\begin{definition}
    Let us consider a dynamic hypersequent $G= \eta_1: X_1 \Sep \cdots \Sep \eta_k: X_k$.
    We construct $G\setminus \eta_i:X_i$ from $G$ by dropping (i) the dynamic sequent $\eta_i:X_i$ and (ii) each of the indices belonging to $\eta_i$ that occurs in other dynamic sequents of $G$, in the following way:
        $$G\setminus \eta_i:X_i := \eta'_1: X_1 \Sep \cdots \Sep \eta'_{i-1}:X_{i-1} \Sep \eta'_{i+1} :X_{i+1} \Sep \cdots \Sep \eta'_k : X_k$$
    where $\eta'_j= \eta_j \setminus \eta_i$. Note that $G\setminus \eta_i:X_i$ is a disconnected indexed dynamic hypersequent.
\end{definition}

\begin{definition}[Interpretation of rooted disconnected indexed dynamic hypersequents]
    Let $G$ be a disconnected IDHS and $X_i$ a dynamic sequent occurring in $G$. The \emph{interpretation of the DIDHS $G$ rooted at $X_i$}, denoted $(G')^{\tau}_{X_i}$, is inductively defined by:
    \begin{small}
    \begin{align*}
         &- \text{ if} \quad G= X_i = \update_{\alpha_1} \Gamma_1 \Imp \Delta_1 \update \cdots \update_{\alpha_k} \Gamma_k \Imp \Delta_k & &\text{then} \quad (G)^{\tau}_{X_i}:= X_i^\tau \\
        &- \text{ if} \quad G= X_i\SepC H, & &\text{then} \quad (G)^{\tau}_{X_i}:= X_i^\tau \\
        &- \text{ if} \quad G= \eta_1: X_1 \SepC \cdots \SepC \eta_i: X_i \SepC \cdots \Sep \eta_k: X_k  & &\text{then} \quad (G)^{\tau}_{X_i}:= (X_i)^{\tau}_{X_i} \lor \bigvee_{\eta_j:X_j \in \Sigma^{G}_{\eta_i:X_i}} K_{f(\eta_i,\eta_j)} (G\setminus \eta_i:X_i)^{\tau}_{X_j}
    \end{align*}
    \end{small}
\end{definition}






A disconnected indexed dynamic hypersequent $G$ rooted at $\eta_i : X_i$ may be viewed as representing the tree model obtained by unfolding the model represented by $G$ from the designated root $X_i$ (see, \emph{e.g.}, \cite{BlackburnModalLogic}). We then define the interpretation of a whole IDHS $G$ by taking the conjunction of all possible \emph{rooted} disconnected dynamic indexed hypersequents that can be extracted from $G$. 

\medskip\begin{definition}[Interpretation of indexed dynamic hypersequents]
Let $G$ be an IDHS. Its interpretation is defined by
$$G^\tau := \bigwedge_{X \in G} (G)^{\tau}_{X}$$
\end{definition}

As an example, consider the following IDHS:

$$G:= 1a: X_1 \Sep 1a,2b: X_2 \Sep 2b: X_3 \Sep 2a,2b: X_4 \Sep 2a: X_5  $$

Its interpretation is the following:
\begin{center}
\begin{align*}
    G^\tau \quad := \quad
    &X_1^\tau \lor K_a( X_2^\tau \lor K_b X_3^\tau \lor K_b( X_4^\tau \lor K_a X_5^\tau )) \\
    &X_2^\tau \lor K_a X_1^\tau \lor K_bX_3^\tau \lor K_b(X_4^\tau \lor K_a X_5^\tau) \\
    &X_3^\tau \lor K_b(X_2^\tau \lor K_aX_1^\tau) \lor K_b(X_4^\tau \lor K_a X_5^\tau)) \\
    &X_4^\tau \lor K_b(X_2^\tau \lor K_aX_1^\tau) \lor K_bX_3^\tau \lor K_a X_5^\tau \\
    &X_5^\tau \lor K_a(X_4^\tau \lor K_b(X_2^\tau \lor K_aX_1^\tau) \lor K_bX_3^\tau)
\end{align*}
\end{center}

\noindent Finally, we need to provide an interpretation for the event store $\store$ of an IDHS with event store $G \st \store$. To do so, we introduce the following notion of \emph{assignment function}. 

\begin{definition}[Assignment function]
    Let $G \st \store$ be an IDHS with event store. An \emph{assignment function} for $G\st \store$ is a function $f_{\store}: \mathrm{x}\mapsto \mathrm{e^M}$ mapping each event variable $\mathrm{x}\in \store$ to a pointed event model $\mathrm{e^M} \in \lAct$ and preserving the relations in $\store$ \emph{i.e.} if $\mathrm{e}\sim_a \mathrm{x} \in \store$ -- for some $\mathrm{e^M\in \lAct}$ and $a\in \mathcal{A}$ -- then $f_\store(\mathrm{x})\in \mathrm{S^M}$ and $\mathrm{e}\sim^{\mathrm{M}}_a f_{\store}(\mathrm{x})$, and if $\mathrm{x}\sim_a \mathrm{y}\in \store$ -- for some $\mathrm{y\in Var}$ -- then $f_\store(\mathrm{x})\sim^{\mathrm{M}}_a f_\store(\mathrm{y})$ for some event model $\mathrm{M}$ such that $f_\store(\mathrm{x}),f_\store(\mathrm{y})\in \mathrm{S^M}$.
    
    By $f_{\store}(G)$ we denote $G$ wherein every variable $\mathrm{x}$ in $G$ is replaced by its assigned pointed event model $f_{\store}(\mathrm{x})$.
\end{definition}

 \noindent The interpretation of an IDHS with event store $G \st \store$ therefore extends that of IDHS by quantifying over assignments functions $f_\store$ for $G \st \store$.
 Notice since there are finitely many event variables stored in $\store$, which are necessarily bound by some pointed event model $\evt^{\mathrm{M}}$, and each event model $\mathrm{M}$ is finite, there are only finitely many different assignment functions $f_\store$, so we can safely take the conjunction of all such functions.

\medskip\begin{definition}[Interpretation of indexed dynamic hypersequents with event store]
    Let $G\st \store$ be an IDHS with event store. Its interpretation is defined as
    $$(G \! \st \store)^\tau := \bigwedge_{f_\store} G^\tau$$
\end{definition}

Hence, a \emph{dynamic indexed hypersequent with event store} $G \st \store$ is valid if, and only if, $f_{\store}(G)^\tau$ is valid, namely with respect to any assignment function $f_\store$ is used to specify what pointed event models the event letters stand for.

\section{The calculus \calculus}\label{sectionCalculus}


\noindent In this section we introduce the calculus $\calculus$. In the following, let $\alpha = \mathrm{e_1}\cdots \mathrm{e_k}$, $\beta = \mathrm{f_1} \cdots \mathrm{f_k}, \ \chi=\mathrm{x_1\cdots x_k}$, where $\evt_i, \mathrm{f}_i \in \Evt$ and $\mathrm{x}_i \in \Var$ for all $1 \leq i \leq k$.  
For any formula $A$ and events $\evt, \mathrm{f}\in \Evt$, by $A[\evt/\mathrm{f}]$ we denote  the formula obtained by substituting every occurrence of $\mathrm{f}$ in $A$ by $\evt$. Finally, $\eta\hat \; na$ denotes the set of indices obtained by taking the union of $ \eta$ and $\lbrace na \rbrace$ and $\store, \evt \sim_a \mathrm{f}$ stands for $\store\union \lbrace \evt \sim_a \mathrm{f}\rbrace$ (for any events $\mathrm{e,f} \in \Evt$ and agent $a\in \mathcal{A}$).\footnote{Consequently, following the notation introduce in Definition \ref{defActionStore}, for $\alpha = \evt_1,\dots,\evt_k$ and $\beta= \mathrm{f}_1,\dots,\mathrm{f}_k$, we denote by $\store, \alpha \approx \beta$ the fact that, for all $1 \leq i \leq k$, there are $\mathrm{g}_{i_1}, \dots, \mathrm{g}_{i_p} \in \Evt$ and $a_{i_1},\dots,a_{i_{p+1}}$ such that $\evt_i \sim_{a_{i_1}} \mathrm{g}_{i_1}\sim_{a_{i_2}} \cdots \sim_{a_{i_p}} g_{a_{i_p}} \sim_{a_{i_{p+1}}} \mathrm{f}_i \in \store.$}

 
 The calculus $\calculus$ is composed by the following axioms and inference rules.

\begin{footnotesize}
\begin{flalign*}
&\textbf{Axioms:} & \\\
    &\begin{prooftree}[small]
    	\hypo{G \SepC \eta: X \update_\alpha \ \pi,\Gamma \Imp \Delta, \pi  \st \store}
    \end{prooftree} &
    &\begin{prooftree}[small]
    	\hypo{G \SepC \eta: X \update_\alpha \ \mathrm{pre(x)},\Gamma \Imp \Delta \update_{\alpha'} \ \Gamma' \Imp \Delta', \mathrm{pre(x)}  \st \store}
    \end{prooftree} & \\
    &\text{where } \pi \text{ is either $p$ or $\mathrm{pre(x})$} &
    &\text{where } \alpha = \alpha_1 \cdot \mathrm{e\cdot f} \cdot \alpha_2 \text{ and } \alpha' = \alpha_1 \cdot \mathrm{(e;f)}\cdot \alpha_2 \text{ (or vice versa)} &
    \hfill \\
&\textbf{Propositional rules:} &  \\
    &\begin{prooftree}[small]
		\hypo{ G \SepC \eta: X \update_\alpha \ \Gamma \Imp \Delta, A \st \store }
		\infer1[(L$\lnot$)]{G\SepC \eta: X \update_\alpha \ \lnot A, \Gamma \Imp \Delta \st \store}
	\end{prooftree}
	&
	&\begin{prooftree}[small]
		\hypo{ G \SepC \eta: X \update_\alpha \ A,\Gamma \Imp \Delta \st \store}
		\infer1[(R$\lnot$)]{G\SepC \eta: X \update_\alpha \ \Gamma \Imp \Delta,\lnot A \st \store}
	\end{prooftree} \\
	\hfill \\
	&\begin{prooftree}[small]
		\hypo{ G \SepC \eta: X  \update_\alpha \ A,B, \Gamma \Imp \Delta \st \store}
		\infer1[(L$\land$)]{G\SepC \eta: X  \update_\alpha \ A\land B, \Gamma \Imp \Delta \st \store}
	\end{prooftree} 
	& 
	&\begin{prooftree}[small]
		\hypo{ G \SepC \eta: X \update_\alpha \ \Gamma \Imp \Delta,A  \st \store}\quad
		\hypo{ G \SepC \eta: X \update_\alpha \ \Gamma \Imp \Delta,B \st \store}
		\infer[separation=2em]2[(R$\land$)]{G\SepC X \update_\alpha \Gamma \Imp \Delta,A \land B \st \store}
	\end{prooftree} & \\
\hfill \\
    &\textbf{Propositional event rules:} & \\
    &\begin{prooftree}[small]
        \hypo{ G \SepC \eta: X \update_\alpha \bigwedge_{\evt\sim_a \mathrm{x}} \, A, A[\mathrm{f}/\mathrm{x}], \Gamma \Imp \Delta \st \store, \evt\approx_a \mathrm{f}}
        \infer1[(L$\bigwedge\evt$)]{G \SepC \eta: X  \update_\alpha \bigwedge_{\evt\sim_a \mathrm{x}} \, A, \Gamma \Imp \Delta \st \store, \evt\approx_a \mathrm{f} }
    \end{prooftree} & 
    &\begin{prooftree}[small]
        \hypo{ G \SepC \eta: X \update_\alpha \ \Gamma \Imp \Delta, A[\mathrm{y}/\mathrm{x}] \st \store, \evt\sim_a \mathrm{y}}
        \infer1[(R$\bigwedge\evt$)]{G \SepC \eta: X  \update_\alpha \ \Gamma \Imp \Delta, \bigwedge_{\evt\sim_a \mathrm{x}} A \st \store }
    \end{prooftree} & \\
    & & &\text{where } \mathrm{y} \text{ does not occur in the conclusion.} &
\end{flalign*}

\noindent \textbf{Modal rules:}
\begin{flalign*}
    &\begin{prooftree}[small]
        \hypo{ G \SepC \eta: X \update_\alpha \ K_a A, \Gamma \Imp \Delta \update_\beta \ A, \Gamma'\Imp \Delta' \st \store, \alpha \approx_a \beta}
		\infer1[($\evt$LK$_1$)]{G\SepC \eta: X \update_\alpha \ K_a A, \Gamma \Imp \Delta \update_\beta \Gamma'\Imp \Delta' \st \store, \alpha \approx_a \beta}
    \end{prooftree} & \\
    \hfill \\
    &\begin{prooftree}[small]
        \hypo{G \SepC \eta: X \update_{\alpha\cdot\gamma} \ K_a A, \Gamma \Imp \Delta \update_{\beta} \Gamma' \Imp \Delta', \mathrm{pre(d_1})  \ \st \store}
        \hypo{G \Sep \overline{X} \update_{\beta\cdot\mathrm{d_1}} \Imp \mathrm{pre(d_2})  \ \st \store}
        \hypo{\cdots}
    	\hypo{G \Sep \overline{X} \update_{\beta\cdot\delta} \ A \Imp  \ \st \store}
   		\infer[separation=1.8em]4[($\evt$LK$_1)'$]{G \SepC \eta: X \update_{\alpha\cdot \gamma} \ K_a A, \Gamma \Imp \Delta \update_{\beta} \ \Gamma' \Imp \Delta' \ \st \store}
    \end{prooftree} & \\
    &\text{where } \overline{X}= \eta: X \update_{\alpha\cdot\gamma} K_a A, \Gamma \Imp \Delta \update_\beta \Gamma' \Imp \Delta' \text{ and } \alpha\cdot \gamma \approx_a \beta \cdot \delta\in \store \text{ with } \delta=\mathrm{d_1\cdots d_k} &\\
    &\text{ and } \beta\cdot \delta' \notin l_\evt(X) \text{ for any prefix } \delta'\prefix\delta . & \\
    \hfill \\
	&\begin{prooftree}[small]
    	\hypo{ G \SepC \eta\hat\;na : X \update_\alpha \ K_a A, \Gamma \Imp \Delta \SepC \theta\hat\;na: Y \update_\beta \ A, \Gamma' \Imp \Delta' \st \store, \alpha \approx_a  \beta}
    	\infer1[($\evt$LK$_2$)]{ G \SepC \eta\hat\;na: X \update_\alpha \ K_a A, \Gamma \Imp \Delta \SepC \theta\hat\;na: Y \update_\beta \ \Gamma' \Imp \Delta' \st \store, \alpha \approx_a  \beta}
	\end{prooftree} 
    \end{flalign*}
    \begin{flalign*}
	&\begin{prooftree}[small]
    	\hypo{G \SepC \overline{X} \SepC \theta\hat\;na: Y \update_{\beta} \, \Gamma' \Imp \Delta', \mathrm{pre(d_1})  \ \st \store }
        \hypo{G \SepC \overline{X} \SepC \overline{Y} \update_{\beta\cdot \mathrm{d_1}} \Imp \mathrm{pre(d_2}) \st \store}
    	\hypo{\cdots}
    	\hypo{G \SepC \overline{X} \SepC \overline{Y} \update_{\beta\cdot\delta} \, A \Imp \st \store}
   		\infer[separation=2em]4[($\evt$LK$_2)'$]{G \SepC \eta\hat\;na: X \update_{\alpha\cdot\gamma} K_a A, \Gamma \Imp \Delta \SepC \theta\hat\;na: Y \update_{\beta} \Gamma' \Imp \Delta'  \st \store}
	\end{prooftree} & \\
	&\text{where } \overline{X}= \eta\hat\;na: X \update_{\alpha\cdot\gamma} K_a A, \Gamma \Imp \Delta, \ \overline{Y}= \theta\hat\;na: Y \update_\beta \Gamma' \Imp \Delta' \text{ and } \alpha\cdot \gamma \approx_a \beta \cdot \delta \in \store \text{, with } \delta = \mathrm{d_1, \dots, d_k} & \\
    &\text{ and } \beta\cdot \delta' \notin l_\evt(Y) \text{ for any prefix } \delta'\prefix\delta , \text{ (see Definition \ref{def:ActionLabelsDS}).} &
    \end{flalign*}
    
    \begin{flalign*}
    &\begin{prooftree}[small]
		\hypo{ G \SepC \eta\hat\;na: X \update_\alpha \ \Gamma \Imp \Delta \Sep na: \update_\chi \Imp A \st \store, \alpha \sim_a \chi}
		\infer1[($\evt$RK)]{G\SepC \eta': X \update_\alpha \ \Gamma \Imp \Delta,K_a A \st \store}
	\end{prooftree} & \\
    &\text{where $\chi$ is not in the conclusion and } \eta'= \eta \text{ if } na \notin \Vert G\Vert \text{ and } \eta'= \eta\hat\;na \text{ if } na \in \Vert G\Vert .
\end{flalign*}

\noindent \textbf{Event rules:}
\begin{flalign*}
	&\begin{prooftree}[small]
    	\hypo{G \SepC \eta: X \update_\alpha \Gamma \Imp \Delta, \mathrm{pre(\evt)} \update_{\alpha \cdot \evt} \Gamma' \Imp \Delta '\st \store}
    	\hypo{G \SepC \eta: X \update_\alpha \Gamma \Imp \Delta \update_{\alpha\cdot \evt} \, A,\Gamma' \Imp \Delta' \st \store}
    	\infer[separation=2em]2[(L$[\cdot]$)]{ G \SepC \eta: X \update_\alpha [\evt]A, \Gamma \Imp \Delta \update_{\alpha\cdot \evt} \Gamma' \Imp \Delta'  \st \store} 
	\end{prooftree} & \\
    \hfill \\
	&\begin{prooftree}[small]
    	\hypo{G \SepC \eta: X \update_\alpha \, \mathrm{pre(\evt)}, \Gamma \Imp \Delta \update_{\alpha\cdot \evt} \Gamma' \Imp \Delta', A  \st \store}
    	\infer1[(R$[\cdot]$)]{ G \SepC \eta: X \update_\alpha \Gamma \Imp \Delta, [\evt]A \update_{\alpha\cdot \evt} \Gamma' \Imp \Delta' \st \store}
	\end{prooftree}  &
\end{flalign*}

\noindent \textbf{Dynamic rules:}
\begin{flalign*}
	&\begin{prooftree}[small]
    	\hypo{G \SepC \eta: X \update_\alpha \ p, \Gamma \Imp \Delta \update_{\alpha \cdot \evt} \ p, \Gamma' \Imp \Delta' \st \store}
    	\infer1[(Lat)]{G \SepC \eta: X \update_\alpha \, \Gamma \Imp \Delta \update_{\alpha \cdot \evt} \ p, \Gamma' \Imp \Delta' \st \store}
	\end{prooftree}
	&
	&\begin{prooftree}[small]
    	\hypo{G \SepC \eta: X \update_\alpha \, \Gamma \Imp \Delta, p \update_{\alpha \cdot \evt} \, \Gamma' \Imp \Delta', p \st \store}
    	\infer1[(Rat)]{G \SepC \eta: X \update_\alpha\, \Gamma \Imp \Delta \update_{\alpha \cdot \evt}\, \Gamma' \Imp \Delta',p \st \store}
	\end{prooftree} & \\
	\hfill & \\
	&\begin{prooftree}[small]
    	\hypo{G \SepC \eta: X \update_\alpha \ \Imp \update_{\alpha \cdot \evt} \, \Gamma \Imp \Delta \st \store}
    	\infer1[(New)]{G \SepC \eta: X \update_{\alpha \cdot \evt} \, \Gamma \Imp \Delta \st \store}
	\end{prooftree}
	&
	&\begin{prooftree}[small]
    	\hypo{G \SepC \eta: X \update_\alpha \, \mathrm{pre}(\evt),\Gamma \Imp \Delta \update_{\alpha \cdot \evt} \,\Gamma' \Imp \Delta' \st \store}
    	\infer1[(Recall)]{G \SepC \eta: X \update_\alpha \, \Gamma \Imp \Delta \update_{\alpha \cdot \evt} \, \Gamma' \Imp \Delta' \st \store}
	\end{prooftree} & \\
    &\text{where $\alpha \notin l_\evt(X)$}\\
\end{flalign*}
\end{footnotesize}

    Certain rules of the calculus \calculus display particular features that need to be stressed.
    First, the modal rules ($\evt$LK$_1$), ($\evt$LK$_1)'$, ($\evt$LK$_2$) and ($\evt$LK$_2)'$ may also be applied when $\alpha = \beta$ -- in which case $\alpha \approx_a \beta$ is not required. Similarly, the propositional event rule (L$\bigwedge\evt$) may also be applied when $\evt = \mathrm{f}$.
    Second, in both event rules (L$[\cdot]$) and (R$[\cdot]$), if $\Gamma' \Imp \Delta'$ is empty -- namely $\Gamma'$ and $\Delta'$ are both empty multisets -- then the dynamic sequent $\update_{\alpha\cdot\evt}\ \Gamma' \Imp \Delta'$ is not present in the conclusion (neither is it in the left-hand premise in case of (L$[\cdot]$)). We will refer to these particular cases as (L$[\cdot])'$ and (R$[\cdot])'$. For example, (L$[\cdot])'$ has the following form:
\begin{center}
	\begin{prooftree}[regular]
    	\hypo{G \SepC \eta: X \update_\alpha \Gamma \Imp \Delta, \mathrm{pre}(\evt) \st \store }
    	\hypo{G \SepC \eta: X \update_\alpha \Gamma \Imp \Delta \update_{\alpha\cdot \evt} A\Imp \st\store}
    	\infer[separation=2.5em]2[(L$[\cdot])'$]{ G \SepC \eta: X \update_\alpha [\evt]A, \Gamma \Imp \Delta \st \store} 
	\end{prooftree}
\end{center}

    \noindent Finally, in rule ($\evt$RK), when $\alpha = \epsilon$, no event is added in $\store$ because $\epsilon$ is only equivalent to itself. Thus the rule gets the following form:

    \begin{center}
    \begin{prooftree}[regular]
		\hypo{ G \SepC \eta\hat\;na: X \update_\epsilon \ \Gamma \Imp \Delta \Sep na: \update_\epsilon \ \Imp A \st \store}
		\infer1[(RK)]{G\SepC \eta': X \update_\epsilon \ \Gamma \Imp \Delta,K_a A \st \store}
	\end{prooftree} 
    \begin{small}
    where $\eta' =\begin{cases}
        \eta &\text{ if } na \notin \Vert G\Vert \\
        \eta\hat\;na &\text{ otherwise}
    \end{cases}$\\
    \end{small}
    \end{center}

 We now comment on the rules that compose the calculus \calculus. Axioms, propositional rules, and modal rules $(\evt$LK$_{1})$, $(\evt$LK$_{2})$ and $(\evt$RK) are the ones that can be found in \cite{Poggiolesi2008} and \cite{Poggiolesi2010}, but adapted to the indexed dynamic setting. Consider for example the rule $(\evt$RK). Read bottom-up, the rule says that if a formula of the form $K_{a}A$ is false at a world-sequent ($w, \alpha$) in some updated model $\M^{\mathrm{M}(\alpha)}$, then there exists a world-sequent ($v ,\beta$) indistinguishable from ($w, \alpha$) for agent $a$ that falsifies $A$. More precisely, there is $(v ,\beta)$ in that updated model such that both $w$ and $v$ lie in the same $a$-cell \emph{i.e.} those worlds are indistinguishable for agent $a$, and  $\alpha \sim^{\mathrm{M}(\alpha)}_{a} \beta$ \emph{i.e.} the sequences of events are indistinguishable from $a$'s perspective,\footnote{Note that if $\alpha = \evt_1 \cdots \evt_k$ and $\beta = \mathrm{f}_1 \cdots \mathrm{f}_k$ then $\alpha \sim_a \beta$ implies $\evt_i$ and $\mathrm{f}_i$ belong to the same event model, for each $1 \leq i \leq k$, so that ($v ,\beta$) belongs to the same updated model $\M^{\mathrm{M}(\alpha)}$ than $(w,\alpha)$. In other words, $\M^{\mathrm{M}(\beta)}=\M^{\mathrm{M}(\alpha)}$: updating $\M$ by $\mathrm{M(\beta)}$ yields the same model as the update by $\mathrm{M(\alpha)}$.} and in the updated model $\M^{\mathrm{M}(\alpha)}$ the formula $A$ is false at ($v ,\beta$). This naturally reflects the semantics of epistemic modalities in DEL.

Rules ($\evt$LK$_1)'$ and ($\evt$LK$_2)'$, on the other hand, should be seen as rules peculiar to the proposed calculus: they are dynamic modal rules in that they display the relationship between knowledge and events. To understand them better, let us consider one of them, namely ($\evt$LK$_2)'$. If read bottom-up, the rule says the following. For a formula of the form $K_a A$ to be true at some world-sequent $(w,\alpha\cdot\gamma)$ in some updated model $\M' := \M^{\mathrm{M}(\alpha\cdot \gamma)} = (W', \lbrace \sim'_a \rbrace_{a \in \mathcal{A}}, V')$, every world-sequent $(v,\beta\cdot\delta)\in W'$ indistinguishable from $(w,\alpha \cdot \gamma)$ for agent $a$ must satisfy $A$.
Note that $(w,\alpha \cdot \gamma)\sim'_a (v,\beta\cdot \delta)$ if, and only if, $w\sim_a v$, $\alpha \sim_a \beta$ and $\gamma \sim_a \delta$.
Hence, for any world-sequent $(v,\beta)\in \M^{\mathrm{M}(\alpha)}$ and any event sequence $\delta$ such that $\gamma \sim_a \delta$ either $(v,\beta\cdot\delta) \notin \M'$, or $A$ is true at $(v,\beta\cdot \delta)$ in $\M'$.\footnote{Notice $\alpha\cdot \gamma \sim_a \beta \cdot \delta$ implies $\mathrm{M}(\alpha\cdot \gamma)=\mathrm{M}(\beta\cdot\delta)$ and therefore $\M^{\mathrm{M}(\beta\cdot\delta)}=\M'$.} The latter case is represented by $\update_{\beta\cdot \delta} \ A \Imp$ in the rightmost premise of the rule. The former case, on the other hand, is represented by all the other premises: indeed, for $(v,\beta\cdot\delta)$ not to belong to $\M'$, one of the following must hold (where $\delta = \mathrm{d_1 \cdots d_k}$): $\mathrm{pre^{M(\beta)}(d_1)}$ is false at $(v,\beta)$ so $(v,\beta)$ does not belong to $\M^{\mathrm{M}(\beta\cdot \mathrm{d_1})}$ -- and \textit{a fortiori} not to $\M'$ either -- or $(v,\beta) \in \M^{\mathrm{M}(\beta\cdot \mathrm{d_1})}$ but $\mathrm{pre^{M(\beta\cdot d_1)}(d_2)}$ is false at $(v,\beta\cdot\mathrm{d_1})$ in that updated model, etc. This is why ($\evt$LK$_2)'$ has $k+1$ premises, when the sequence $\gamma$ is composed of $k$ events. 
As an example, consider the following instance of the rule, where $\alpha = \epsilon$, $\gamma = \mathrm{c_1\cdot c_2},\ \delta = \mathrm{d_1 \cdot d_2}$ and $\mathrm{\alpha \cdot c_1\cdot c_2}\approx_a \beta \cdot \mathrm{d_1\cdot d_2} \in \store$. Let $\overline{G}:= G \ \vert \ \eta: X \update_{\alpha \cdot \gamma} K_a A, \Gamma  \Imp \Delta$.

\begin{center}
\begin{prooftree}[small]
    \hypo{\overline{G} \SepC \theta: Y \update_\beta \Gamma' \Imp \Delta', \mathrm{pre(d_1}) \st \store }
    \hypo{ \overline{G} \SepC \theta: Y \update_\beta \Gamma' \Imp \Delta' \update_{\beta \cdot \mathrm{d_1}} \Imp \mathrm{pre(d_2}) \st \store }
    \hypo{\overline{G} \SepC \theta: Y \update_\beta \Gamma' \Imp \Delta' \update_{\beta \cdot \delta} A \Imp  \st \store }
    \infer[separation=1.5em]3[($\evt$LK$_2)'$]{G \Sep \eta: X \update_{\alpha \cdot \gamma} K_a A, \Gamma \Imp \Delta \Sep \theta: Y \update_\beta  \Gamma' \Imp \Delta' \st \store  }
\end{prooftree}
\end{center}

Let $\ModelM$ be a model where $W = \lbrace w,v \rbrace$. Worlds $w, v$ are represented by $\eta: X \update_{\alpha \cdot \gamma} K_a A, \Gamma \Imp \Delta$ and $\theta: Y \update_\beta \Gamma' \Imp \Delta'$ respectively. Then, we have $w \in \M' = \M^{\mathrm{M}(\alpha\cdot \gamma)}$ and $(v,\beta) \in \M^{\mathrm{M(\beta)}}=\M^{\mathrm{M}(\alpha)}$ (since $\alpha \approx_a \beta$). For $\M',(w,\alpha\cdot \gamma) \satisfies K_a A$ to hold, either $(v,\beta\cdot\delta) \notin \M'$ or $\M', (v,\beta\cdot \delta) \satisfies A$. The first case arises when $\M^{\mathbf{M}(\beta)}, (v,\beta) \not \satisfies \mathrm{pre(d_1})$ or $\M^{\mathbf{M}(\beta\cdot d_1)}, (v,\beta\cdot \mathrm{d_1}) \not \satisfies \mathrm{pre(d_2})$, which corresponds, in the first and second premise, to sequents $\update_{\beta} \ \Gamma' \Imp \Delta', \mathrm{pre(d_1})$ and $\update_{\beta \cdot \mathrm{d_1}} \ \Imp \mathrm{pre(d_2})$ respectively. The second case is captured, in the third premise, by the sequent $\update_{\beta \cdot \delta} \ A \Imp$.

Whilst propositional event rules account for the link between conjunction, events and variables, and are unique to our calculus\footnote{E.g. the calculus  proposed in \cite{Nomuraandothers2} seem to need analogous rules, which are however not explicitly stated.},
event rules introduce the dynamic event modality $[\cdot]$ on the right and on the left of the sequent and reflect the semantic interpretation of the event modality introduced in Definition \ref{defSemDEL}. Finally, dynamic rules account for dynamic features of the calculus. Rules (Lat) and (Rat) state that whatever the truth-value of an atom is, it remains as such in any updated model, thereby representing so-called atomic permanence. Rule (New) states that (reading it bottom-up) whenever the world-sequent $(w, \alpha\cdot \evt)$ belongs to an updated model $\M^{\mathrm{M(\alpha\cdot a)}}$, then world $(w, \alpha)$ occurs in the previous updated model $\M^{\mathrm{M}(\alpha)}$. Finally, rule (Recall) (still reading it bottom-up) states that whenever the world-sequent $(w, \alpha\cdot \evt$) belongs to an updated model $\M^{\mathrm{M(\alpha \cdot \evt)}}$, necessarily the precondition formula $\mathrm{pre^{M(\alpha)}(\evt)}$ is true at the world $(w, \alpha)$ of the previous updated model $\M^{\mathrm{M}(\alpha)}$

\begin{definition} \emph{Derivations} in $\calculus$ are defined in the standard way and are denoted by $d, d', etc$. The \emph{height} of a derivation $d$, denoted $h(d)$, is also inductively defined in the standard way (see \cite{BasicProofTheoryTroelstra}). As usual, we write $\vdash_{\calculus} G \st \store$ to denote that the IDHS with event store $G \st \store$ is derivable in $\calculus$.
\end{definition}

To end this section, we provide an example of a proof in the calculus \calculus that shows one instance of the axiom of atomic permanence is derivable, for a given pointed event model $\evt^{\mathrm{M}}$.

\begin{center}
    \begin{prooftree}[regular]
        \hypo{\mathrm{pre^M(e)} \Imp p, \mathrm{pre^M(e)}}
        \hypo{p \Imp p \update_{\evt^{\mathrm{M}}} p \Imp}
        \infer1[(Lat)]{ \Imp p \update_{\evt^{\mathrm{M}}} p \Imp }
        \infer2[(L$[\cdot]$)]{[\mathrm{e^M}] p,  \mathrm{pre^M(e)} \Imp p}
        \infer1[(R$\imp$)]{[\mathrm{e^M}] p \Imp  \mathrm{pre^M(e)} \imp p}
        \infer1[(R$\imp$)]{\Imp [\mathrm{e^M}] p \imp (\mathrm{pre^M(e)} \imp p)}
    \end{prooftree}
\end{center}



\section{Admissibility results} \label{sectionAdmissibility}

In this section, we show that all logical rules are hp-invertible, and several structural rules, e.g. weakening, contraction and merge, are hp-admissible.

We start by introducing a complexity measure which takes into account both formulas and events, and will be used in the proof of the next lemma, as well as in the proof of Lemma \ref{lemmaComposition} and in  Section \ref{sectionCut}.

\begin{definition}[DHS-Complexity relatively to event store]\label{defComplexity}
    The \emph{$\store$-complexity} of a formula $c_\store(A)$, of an event $c_\store(\evt)$ and of an event model $c_\store(\mathrm{M})$ is defined by induction as follows:
\begin{small}
    \begin{align*}
        c_\store(p)&:= 1  \quad \text{ for } p\in P & c_\store(K_a A) &:= c_\store(A) + 1 \\
        c_\store(\lnot A) &:= c_\store(A) + 1 & c_\store([\mathrm{e^M}]A) &:= c_\store(\mathrm{e^M})+ c_\store(A) +1 \\
        c_\store(A \land B) &:= c_\store(A) + c_\store(B) +1 &  c_\store(\bigwedge_{\mathrm{e^M}\sim_a \mathrm{x}}A)&:= 2|\mathrm{M}|c_\store (A) + 1 \\
        c_\store (\mathrm{M}) &:= max\lbrace c_\store(\mathrm{pre^M(e})) \ |\ \evt\in \mathrm{E^M} \rbrace  & c_\store((\mathrm{M;N})) &:=c_\store(\mathrm{M}) + c_\store(\mathrm{N})\\
        c_\store(\mathrm{e^M}) &:= c_\store(\mathrm{M}) & c_\store(\mathrm{(e^M;f^N)})&:= c_\store(\mathrm{e^M}) + c_\store (\mathrm{f^N})
    \end{align*}
\end{small}
    and the $\store$-complexity for event variables $\mathrm{x} \in \store$ is defined as below\footnote{Note that by Definition \ref{defActionStore}, every variable $\mathrm{x}\in \store$ is bound. Hence, a pointed event model such that $\mathrm{\evt^M} \approx \mathrm{x} \in \store$ is known to exist.}
\begin{small}
        $$c_\store(\mathrm{x}) := c_\store(pre(\mathrm{x})) := c_\store(\mathrm{e^M}) \text{ for some } \mathrm{e^M}\approx \mathrm{x}\in \store$$
\end{small}

    \noindent whilst the $\store$-complexity of a sequence of events $\alpha=\evt_1\cdots \evt_k$ is defined as the sum of the complexity of the events it is composed of: $c_\store(\alpha):=c_\store(\evt_1)+\cdots+c_\store(\evt_k)$.
    
    Finally, the DHS-complexity of a formula $A$ \emph{relatively to a sequence of events} $\alpha$, \emph{and to a store} $\store$, denoted $c_\store(A,\alpha)$, is defined as:
    \begin{align*}
        c_\store(A,\alpha) := c_\store(A) + c_\store(\alpha).
    \end{align*}
\end{definition}


\begin{lemma}\label{lemmaDerivabilityTrivialSequents}
All IDHS with event store of the form $G \Sep \eta: X \update_\alpha A, \Gamma \Imp \Delta, A \st \store$ are derivable in \calculus.
\end{lemma}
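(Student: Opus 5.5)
We will argue by induction on the complexity $c_\store(A,\alpha)$ of Definition \ref{defComplexity}, simultaneously for all $G$, $\eta$, $X$, $\Gamma$, $\Delta$ and all stores $\store$. Each case reduces $A$ by applying its left and right rule, bottom-up, and closes the premises with the induction hypothesis on strictly smaller complexity, possibly in a longer context. Since every rule of \calculus{} is context-sharing and the induction hypothesis is stated for arbitrary contexts, no weakening lemma is needed.

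\emph{Base cases.} If $A=p$ or $A=\mathrm{pre(x)}$, the sequent is an axiom. If $A=\bot$, we would add $\bot$ to the axioms, or treat it through the negation rules if it is defined; this is a minor point.

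\emph{Propositional cases.} For $\lnot B$, apply (R$\lnot$) and then (L$\lnot$) to reach $B,\Gamma \Imp \Delta, B$, and use the induction hypothesis. For $B\land C$, apply (R$\land$) and then (L$\land$) in each branch. For $\bigwedge_{\evt\sim_a\mathrm{x}}B$, first apply (R$\bigwedge\evt$) with a fresh $\mathrm{y}$, adding $\evt\sim_a\mathrm{y}$ to $\store$. Then apply (L$\bigwedge\evt$) with $\mathrm{f}=\mathrm{y}$ to obtain $B[\mathrm{y}/\mathrm{x}]$ on both sides. The measure $2|\mathrm{M}|c_\store(B)+1$ dominates $c_\store(B[\mathrm{y}/\mathrm{x}])$, because $\mathrm{y}$ is bound to $\evt$'s model and so has the same complexity.

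\emph{Modal case.} For $K_aB$, apply ($\evt$RK) bottom-up. This creates a new world component $na:\update_\chi\Imp B$ with a fresh $\chi$ and stores $\alpha\sim_a\chi$, while $K_aB$ stays on the left of the original component. Then apply ($\evt$LK$_2$) with $\beta=\chi$ (the side condition $\alpha\approx_a\chi\in\store$ holds) to put $B$ on the left of the new component. The induction hypothesis on $c_\store(B,\chi)=c_\store(B)+c_\store(\alpha)<c_\store(K_aB)+c_\store(\alpha)$ closes the branch, since $\chi$ has the same complexity as $\alpha$ componentwise. We must also check that the index conditions (1)--(3) of Definition \ref{defIDHS} are preserved; the choice of $\eta'$ in ($\evt$RK) is designed for this.

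\emph{Dynamic case.} For $[\evt]B$, there are two subcases. If no $\update_{\alpha\cdot\evt}$ component is present, apply (R$[\cdot]$)$'$ to obtain $\mathrm{pre}(\evt),[\evt]B,\Gamma\Imp\Delta\update_{\alpha\cdot\evt}\Imp B$. Then apply (L$[\cdot]$); note that the component $\update_{\alpha\cdot\evt}$ now exists, so the full rule applies. Its left premise has $\mathrm{pre}(\evt)$ on both sides of $\update_\alpha$, which closes by the induction hypothesis since $c_\store(\mathrm{pre}(\evt))\le c_\store(\evt)<c_\store([\evt]B)$. Its right premise has $B$ on both sides of $\update_{\alpha\cdot\evt}$, and $c_\store(B,\alpha\cdot\evt)=c_\store(B)+c_\store(\alpha)+c_\store(\evt)<c_\store([\evt]B,\alpha)$. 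The subcase where the component already exists is identical.

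The main obstacle is the $K_a$ case, together with the claim that complexity does not increase under variable substitution. We need $c_\store(\chi)=c_\store(\alpha)$, which should follow because each fresh $\mathrm{x}_i$ is bound via $\approx$ to the model of $\evt_i$. We also need $c_\store$ to be independent of the choice of $\evt^{\mathrm{M}}\approx\mathrm{x}$. If that independence fails, we would instead induct on $c(A)$ alone, with a secondary induction on $\alpha$ for the event case.
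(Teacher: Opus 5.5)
Your proposal is correct and is essentially the paper's own proof, which consists only of the remark ``by standard induction on the complexity of $A$'' with the measure $c_\store$ of Definition~\ref{defComplexity}; your case analysis (right rule then left rule on the main connective, a fresh variable for $\bigwedge_{\evt\sim_a\mathrm{x}}$, ($\evt$RK) followed by ($\evt$LK$_2$) for $K_a$, (R$[\cdot]$) followed by (L$[\cdot]$) for $[\evt]$) is exactly what that induction amounts to. One caveat, which comes from the paper's measure rather than your strategy: under the clause $c_\store((\mathrm{e^M;f^N}))=c_\store(\mathrm{e^M})+c_\store(\mathrm{f^N})$ your step $c_\store(\mathrm{pre}(\evt))\le c_\store(\evt)$ fails for composed events, since $\mathrm{pre^M(e)}\land[\mathrm{e^M}]\mathrm{pre^N(f)}$ can reach complexity $2c_\store(\mathrm{M})+c_\store(\mathrm{N})+2>c_\store([(\mathrm{e^M;f^N})]p)$; so for this lemma you should measure every event model, composed or not, by the maximal complexity of its actual preconditions (or induct on the stage of $A$ in $\bigcup_n\lDEL^n$ and then on size), after which your dynamic case goes through verbatim.
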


\begin{proof} By standard induction on the complexity of $A$.
\end{proof}



\begin{lemma}\label{lemmaAdmissibilityWeakening}
The rules of \emph{internal, external} and \emph{dynamic weakening} are all hp-admissible in \calculus, where the only variables that may be introduced already occur in $\store$.
\begin{small}
\begin{align*}
\begin{prooftree}[small]
    \hypo{G \SepC \eta: X \update_\alpha \ \Gamma \Imp \Delta \st \store}
    \infer1[(IW)]{G \SepC \eta: X \update_\alpha \ \Gamma ,\Gamma' \Imp \Delta, \Delta' \st \store}
\end{prooftree} \qquad
&
\begin{prooftree}[small]
    \hypo{G \SepC \eta: X \st \store}
    \infer1[(EW)]{G \SepC \eta': X \Sep \theta\hat\; na: Y \st \store}
\end{prooftree} \ \dagger
&
\begin{prooftree}[small]
        \hypo{G \SepC \eta: X  \st \store}
        \infer1[(DW)]{G \SepC \eta: X \update_\alpha \ \Gamma \Imp \Delta \st \store}
    \end{prooftree} \ \ddagger \\
&
\dagger \text{where } \eta'= \begin{cases}
    \eta &\text{ if } na \in \Vert G \Vert \\
    \eta\hat\; na & \text{ otherwise}
\end{cases} \qquad
&
\ddagger \text{ where } \alpha \notin l_{\evt}(X) \qquad \qquad \qquad
\end{align*}
\end{small}
\end{lemma}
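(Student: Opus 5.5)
We prove hp-admissibility of each of (IW), (EW) and (DW) separately, by induction on the height $h(d)$ of a derivation $d$ of the premise. In each case we show that the weakened IDHS with event store is derivable with a derivation of height at most $h(d)$. The three arguments share the same shape.

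\emph{Base case.} If the premise is an axiom, so is the conclusion. The principal formulas $\pi, \pi$, or $\mathrm{pre(x)}$ in the two dynamic components $\update_\alpha$ and $\update_{\alpha'}$, remain in the weakened hypersequent, since weakening only adds formulas, components or world-components.

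\emph{Inductive step.} Let $\Rule$ be the last rule applied in $d$. We apply the induction hypothesis to each premise of $\Rule$, weakening with the same material, and then reapply $\Rule$. For rules whose principal part is disjoint from the weakened part this is immediate. The points needing attention are the following.

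\emph{Side conditions.} Rules with freshness conditions are (R$\bigwedge\evt$), with fresh $\mathrm{y}$, and ($\evt$RK), with fresh $\chi$ and the choice of a new index $na$. Here the weakened material might contain the very variable or index chosen in $d$. The restriction that any introduced variables already occur in $\store$ prevents weakening from adding new variables outside the store. Any clash with $\mathrm{y}$ or $\chi$ is removed first by a height-preserving renaming of event variables and of indices in $d$. This renaming is proved by a routine induction on height, using the fact that all rules are uniform in variable and index names. The $\eta'$ clause of ($\evt$RK), which depends on whether $na \in \Vert G\Vert$, must also be checked against the enlarged $G$. If (EW) introduces $na$ elsewhere, we rename the index used by ($\evt$RK) to a fresh one. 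Rules (New), ($\evt$LK$_1)'$ and ($\evt$LK$_2)'$ carry conditions of the form $\alpha \notin l_\evt(X)$ or $\beta\cdot\delta'\notin l_\evt(X)$. For (IW) and (EW) these conditions are unaffected.

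\emph{Interaction of (DW) with label conditions.} This is the main obstacle. (DW) adds a new dynamic component $\update_\alpha\Gamma\Imp\Delta$, which can violate the side condition $\alpha\notin l_\evt(X)$ of a later (New), or $\beta\cdot\delta'\notin l_\evt(X)$ in ($\evt$LK$_1)'$ and ($\evt$LK$_2)'$. Suppose the last rule of $d$ is (New), creating the component labelled $\alpha$ from $\update_{\alpha\cdot\evt}$. Then the weakened conclusion already contains $\update_\alpha\Gamma\Imp\Delta$. In that case we do not reapply (New). Instead we apply the induction hypothesis in the form of (IW) to the premise $G\SepC\eta:X\update_\alpha\Imp\update_{\alpha\cdot\evt}\ldots$, turning its empty component $\update_\alpha\Imp$ into $\update_\alpha\Gamma\Imp\Delta$. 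This yields the weakened conclusion with strictly smaller height.

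The primed modal rules are handled in the same way. If a component $\update_{\beta\cdot\delta'}$ added by weakening blocks the rule, we replace the premise carrying the empty component $\update_{\beta\cdot\delta'}\Imp\ldots$ by its internally weakened version. In place of the blocked primed rule we use the unprimed ($\evt$LK$_1$)/($\evt$LK$_2$) applied to the existing component, together with hp-admissible internal weakening. This is where care is needed: the proof of (DW) requires (IW), and the two inductions have to be set up jointly, or (IW) proved first, so that these steps go through. Finally, we check that (EW) preserves conditions (1)--(3) of Definition~\ref{defIDHS}. The new world-component shares only the index $na$, and this index is attached to $\eta$ when $na\notin\Vert G\Vert$. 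Hence the result is again an IDHS, because no new cycle is created.
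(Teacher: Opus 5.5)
Your overall strategy (induction on the height of the premise's derivation, pushing the weakening into the premises and reapplying the last rule, with (IW) proved before (DW)) is the paper's, whose proof is a one-line appeal to this induction. However, two of the specific repairs you give for the delicate cases fail as stated.

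The first concerns ($\evt$LK$_1)'$/($\evt$LK$_2)'$ under (DW). Suppose the added component is $\update_{\beta\cdot\delta'}\Gamma''\Imp\Delta''$ with $\delta'=\mathrm{d_1}\cdots\mathrm{d}_j$ a \emph{proper} non-empty prefix of $\delta$. Then the unprimed rule is not available: it would require $\alpha\cdot\gamma\approx_a\beta\cdot\delta'$, and these sequences have different lengths. Your recipe only covers $\delta'=\delta$. What works is to reapply the \emph{primed} rule with $\beta\cdot\delta'$ in the role of $\beta$ and $\mathrm{d}_{j+1}\cdots\mathrm{d}_k$ in the role of $\delta$, splitting $\alpha\cdot\gamma$ accordingly. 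Its first premise is the (IW)-weakened premise containing $\update_{\beta\cdot\delta'}\Gamma''\Imp\Delta'',\mathrm{pre}(\mathrm{d}_{j+1})$. Its remaining premises are the (DW)-weakened later premises, and the first $j$ original premises are simply discarded. The same kind of switch is needed, and is missing from your list, for (L$[\cdot])'$ and (R$[\cdot])'$, which create $\update_{\alpha\cdot\evt}$. If (DW) adds a component labelled $\alpha\cdot\evt$, you must use (L$[\cdot]$)/(R$[\cdot]$), with (IW) applied to the premise containing $\update_{\alpha\cdot\evt}A\Imp$ (resp. $\update_{\alpha\cdot\evt}\Imp A$).

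The second concerns (EW) against ($\evt$RK). Suppose the last rule is ($\evt$RK) for agent $a$, its principal world-component carries no $a$-index in the conclusion (so the witness index $na$ is fresh), and (EW) adds a fresh $ma$ to that very component. In the premise this component is $\eta\hat\;na$, which already has an $a$-index. So ``weakening the premise with the same material'' yields an ill-formed index set. Renaming $na$ to another fresh index, your only remedy for index clashes, does not help. The repair goes the other way and identifies the two cells. Apply (EW) to the premise so that $\theta\hat\;na:Y$ joins the \emph{existing} cell $na$ (the case $na\in\Vert G\Vert$ of (EW)). Then reapply ($\evt$RK), whose conclusion now retains $\eta\hat\;na$ because $na$ occurs in the context, and rename $na$ to $ma$. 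Semantically, the new world is $a$-accessible from the principal one, hence lies in the same $a$-cell as the witness.

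Both repairs are local and height-preserving, so the lemma does go through, but as written these two steps of your argument do not.
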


\begin{proof}
    For all three rules, the proof proceeds straightforwardly by induction on the height of the derivation of the premise. \end{proof}


\noindent

As is standard in a hypersequent framework, there is a rule of (Merge), which can be shown to be hp-admissible and plays a crucial role for the admissibility of the Cut-rule. We first introduce it informally and then move to the formal definition.

 In the setting of standard Epistemic Logic and observed from a semantic perspective, the rule of (Merge)  captures the operation of modifying a model $\ModelM$ by identifying two worlds $w,v \in W$, represented by two sequents, that are indistinguishable for an agent $a$, that is, worlds belonging to the same equivalence class $W_a$. We extend this idea to Dynamic Epistemic Logic by allowing not only worlds but also events to be identified whenever they are indistinguishable for the same agent $a$. Consider an (updated) model $\M':=\M^{\mathrm{M}(\alpha)}$, where $\alpha=\evt_1,\ldots,\evt_k$ is a sequence of events. The Merge rule identifies two worlds $(w,\alpha)$ and $(v,\beta)$ whenever they are indistinguishable for agent $a$, that is, whenever $w\sim_a v$ and $\alpha\sim_a\beta$. Operationally, this requires simultaneously identifying the worlds $w$ and $v$ and the event sequences $\alpha$ and $\beta$. More precisely, if $\alpha=\evt_1,\ldots,\evt_k$ and $\beta=\mathrm{f}_1,\ldots,\mathrm{f}_k$, then the identification is performed componentwise by merging each pair of corresponding events $\evt_i$ and $\mathrm{f}_i$, for $1\leq i\leq k$. As a consequence, every intermediate updated model along the execution of the event sequences must be modified accordingly. For every pair of corresponding prefixes $\alpha'\prefix\alpha$ and $\beta'\prefix\beta$, the worlds $(w,\alpha')$ and $(v,\beta')$ are merged in the intermediate updated model generated by these prefixes. At the proof-theoretic level, this requires combining the corresponding dynamic sequents $\update_{\alpha'}\Gamma\Imp\Delta$ and $\update_{\beta'}\Gamma'\Imp\Delta'$ into the sequent, say $\update_{\alpha'}\Gamma , \Gamma '\Imp\Delta , \Delta '$.
Moreover, every occurrence of events $\mathrm{f}_i\in \beta$ must be replaced by $\evt_i \in\alpha$ both in the indexed dynamic hypersequent $G$ and in the event store $\store$. Finally, the sets of indices must also be merged in such a way that conditions (1), (2), and (3) remain satisfied. In particular, if $w$ and $v$ belong to different partition cells, \emph{e.g.} in addition to belonging to, say,$1a$, they also belong to $1b$ and $2b$, respectively, the resulting merged world-sequent can belong to only one of these two cells, say $1b$, in order for the resulting index set to remain well-formed. Therefore, all occurrences of the deleted index $2b$ must be replaced by the surviving index $1b$.

Let us now move to the formal definition of (Merge). For this, we need some preliminary notions that are introduced in the following. First, we define $X\xMerge Y$ as the dynamic sequent that results from merging $X$ and $Y$.

\begin{definition} Let $X$ and $Y$ be two dynamic sequents, then  $X \xMerge Y$ is inductively defined as follows:

    $-$ if $X= \Gamma \Imp \Delta$ and $Y= \Gamma \Imp \Delta'$ then $X \xMerge Y := \Gamma, \Gamma' \Imp \Delta, \Delta'$;
       
    $-$ if $X= X' \update_{\alpha}\,  \Gamma \Imp \Delta$ and $Y=  Y' \update_{\alpha} \, \Gamma' \Imp \Delta'$ then $X \xMerge Y := X'\xMerge Y' \update_{\alpha} \Gamma, \Gamma' \Imp \Delta, \Delta'$;
       
     $-$ if $X=  X' \update_{\alpha} \Gamma \Imp \Delta$ and $Y=  Y' \update_{\beta} \Gamma' \Imp \Delta'$, where $\alpha \notin \ l_\evt(Y')$ and $\beta \notin \ l_\evt(X')$, then $X \xMerge Y :=  X'\xMerge Y' \update_{\alpha} \Gamma \Imp \Delta \update_{\beta} \Gamma' \Imp \Delta'$.
\end{definition}

\noindent For instance, if $X = \Gamma_1 \Imp \Delta_1 \update_{\evt} \Gamma_2 \Imp \Delta_2 \update_{\mathrm{e\cdot f}} \Gamma_3 \Imp \Delta_3$ and $Y= \Gamma_1' \Imp \Delta_1' \update_{\evt} \Gamma_2' \Imp \Delta_2' \update_{\mathrm{f}} \Gamma_3' \Imp \Delta_3' \update_{\mathrm{f\cdot e}} \Gamma_4' \Imp \Delta_4'$, then $X \xMerge Y = \Gamma_1, \Gamma'_1 \Imp \Delta_1, \Delta_1' \update_{\evt} \Gamma_2, \Gamma'_2 \Imp \Delta_2,\Delta'_2 \update_{\mathrm{e \cdot f}} \Gamma_3 \Imp \Delta_3 \update_{\mathrm{f}} \Gamma'_3 \Imp \Delta_3' \update_{\mathrm{f\cdot e}} \Gamma'_4 \Imp \Delta'_4$.

\medskip

We now introduce further notions to capture the merging of two sequences of events in an indexed dynamic hypersequent with event store. For any events $\mathrm{e,f}$ and any sequences of events $\gamma$, $\gamma[\mathrm{e/f}]$ is obtained by standard substitution of all occurrences of $\mathrm{f}$ by $\evt$. This extends to substitution of whole sequences of events: if $\alpha = \mathrm{e_1}\cdots\mathrm{e_k}$ and $\beta = \mathrm{f_1}\cdots\mathrm{f_k}$, then $\gamma[\alpha/\beta]$ denotes the sequence wherein all $\mathrm{f_i}$ have been replaced by $\mathrm{e_i}$ for $1 \leq i \leq k$. Similarly, for any sequent $\Gamma \Imp \Delta$, the sequent $(\Gamma \Imp \Delta)[\alpha/\beta]$ is obtained by replacing all $\mathrm{f_i}$ occurring in either $\Gamma$ or $\Delta$ by $\mathrm{e_i}$ (for $1 \leq i \leq k$).

\begin{definition}
    Let $X$ be a dynamic sequent and $\alpha,\beta$ be two sequences of events. $X[\alpha/\beta]$ is inductively defined as follows:
    
        $-$  If $X = \Gamma \Imp \Delta$, then $X[\alpha/\beta] := (\Gamma \Imp \Delta)[\alpha/\beta]$;

        $-$ If $X = X' \update_\gamma \Gamma \Imp \Delta \update_{\gamma[\alpha / \beta]} \Gamma' \Imp \Delta'$ then $X[\alpha / \beta]:= X'[\alpha / \beta] \update_{\gamma[\alpha / \beta]} (\Gamma, \Gamma' \Imp \Delta, \Delta')[\alpha / \beta]$;

        $-$ If $X = X' \update_\gamma \Gamma \Imp \Delta$ and $\gamma[\alpha / \beta] \notin l_\evt(X')$ then $X[\alpha / \beta]:= X'[\alpha / \beta] \update_{\gamma[\alpha / \beta]} (\Gamma \Imp \Delta)[\alpha / \beta]$
      
        
    
    Let $G = \eta_1: X_1 \Sep \cdots \Sep \eta_k: X_k$ be an IDHS. Then $G[\alpha/\beta] = \eta_1: X_1[\alpha/\beta] \Sep \cdots \Sep \eta_k: X_k[\alpha/\beta]$.
    \end{definition}

    \begin{definition}  Let $\store$ be an event store and $\alpha = \mathrm{e_1}\cdots\mathrm{e_k}$ and $\beta = \mathrm{f_1}\cdots\mathrm{f_k}$ be two sequences of events. Then $\store[\alpha/\beta]$ is defined inductively as follows, where $a\in \mathcal{A}$:

    $-$ If $\store = \emptyset$, then $\store[\alpha/\beta]=\store = \emptyset$;

    $-$ If $\store = \store', \mathrm{f_i}\sim_a \mathrm{f_j}$, then $\store[\alpha/\beta]= \begin{cases}
            \store'[\alpha/\beta],\mathrm{e_i}\sim_a \mathrm{e_j} &\text{if } i \neq j \text{ and } \mathrm{e_i}\approx_a \mathrm{e_j} \notin \store' \\
            \store'[\alpha/\beta] &\text{otherwise}
        \end{cases}$;

    $-$ If $\store= \store', \mathrm{g}\sim_a \mathrm{f}_i$, where $\mathrm{g}\notin \beta$, then $\store[\alpha/\beta]= \begin{cases}
            \store'[\alpha/\beta],\mathrm{g}\sim_a \mathrm{e}_i  &\text{if } \mathrm{g} \neq \mathrm{e}_i \text{ and } \mathrm{g}\approx_a \mathrm{e}_i \notin \store'\\
            \store'[\alpha/\beta] &\text{otherwise}
        \end{cases}$;

    $-$ If $\store= \store',\mathrm{f}_i \sim_a \mathrm{g}$, where $\mathrm{g}\notin \beta$, then $\store[\alpha/\beta]= \begin{cases}
            \store'[\alpha/\beta],\mathrm{e}_i\sim_a \mathrm{g} &\text{if } \mathrm{g} \neq \mathrm{e}_i \text{ and } \mathrm{g}\approx_a \mathrm{e}_i \notin \store'\\
            \store'[\alpha/\beta] &\text{otherwise}
        \end{cases}$;

    $-$ If $\store = \store', \mathrm{g}\sim_a \mathrm{g}'$ where $\mathrm{g},\mathrm{g}' \notin \beta$, then $\store[\alpha/\beta] = \store'[\alpha/\beta],\mathrm{g}\sim_a \mathrm{g}'$.

\noindent Finally, $(G \st \store)[\alpha/\beta] := G[\alpha/\beta] \st \store [\alpha/\beta] $.

\end{definition}

\noindent Note that when substituting $\alpha$ for $\beta$ in $\store$, we remove occurrences of the form $\evt \sim_a \evt$ possibly obtained through the substitution. Finally, the following definition shows how sets of indices will be merged through an application of (Merge) -- and occurrences of indices modified accordingly.

\begin{definition}
    Let $\eta,\theta$ be two sets of indices. Then, $\eta \indexMerge \theta$ is inductively defined as follows.
    
        $-$ if $\eta=\theta = \emptyset$, then $\eta \indexMerge \theta = \emptyset$;
        
        $-$ if $\eta = \eta'\hat\;na$ and $a\notin \theta$ (resp. $\theta = \theta'\hat\;na$ and $a\notin \eta$), then $\eta \indexMerge \theta = \eta'\indexMerge\theta\hat\;  na$ (resp. $\eta \indexMerge \theta = \eta\indexMerge\theta'\hat\;na$);
        
        $-$ if $\eta = \eta'\hat\;na$ and $\theta = \theta'\hat\;ka$ (where we might have either $n\neq k$ or $n = k$), then $\eta \indexMerge \theta = \eta' \indexMerge \theta'\hat\; na$.
\end{definition}
\noindent For any IDHS $G$ and indices $na,ka$, $G[na/ka]$ is obtained by standard substitution of all occurrences of $ka$ by $na$ in all sets of indices present in $G$.
$G[\eta / \theta]$ is further defined by substituting $na$ to $ka$ in $G$ for all $a$ occurring both in $\eta$ and $\theta$, \emph{i.e.} such that $na \in \eta$ and $ka \in \theta$.

\medskip
We now have all ingredients to formulate the rule of (Merge) and show its admissibility.

\begin{lemma}
    The rule of (\emph{Merge}) is hp-admissible in \calculus.
    \vspace{0.3cm}

        \begin{prooftree}[regular]
            \hypo{G \Sep \eta\hat\; na : X \update_\alpha \Gamma \Imp \Delta \Sep \theta\hat\; na: Y \update_\beta \Gamma' \Imp \Delta' \st \store, \alpha \approx_a \beta  }
            \infer1[(Merge)]{\left(G[\eta/\theta] \Sep (\eta \indexMerge \theta)' : X \xMerge Y \update_\alpha \Gamma, \Gamma' \Imp \Delta, \Delta' \st \store \right)[\alpha/\beta]  }
         \end{prooftree}
\begin{small}
         where  $(\eta \indexMerge\theta)' = \begin{cases}
             \eta \indexMerge \theta \hat\; na &\text{if } na \in \Vert G \Vert \\
             \eta \indexMerge \theta           &\text{if } na \notin \Vert G \Vert
         \end{cases} $

\end{small}

\end{lemma}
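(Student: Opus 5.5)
We proceed by induction on the height $h(d)$ of the derivation $d$ of the premise of (Merge), proving simultaneously for all $G$, $X$, $Y$, $\alpha$, $\beta$, $\eta$, $\theta$ and $\store$ that the conclusion has a derivation of height at most $h(d)$. Two invariants will be checked first and reused throughout. First, the conclusion is a well-formed IDHS with event store. Condition (1) follows because $\indexMerge$ keeps at most one index per agent, and substituting $G[\eta/\theta]$ renames the discarded indices. Conditions (2) and (3) hold because identifying two nodes that share $na$ contracts an edge of the index graph, so connectedness is preserved and every new cycle runs through the shared $na$-cell. All variables of the new store remain bound, since substitution only identifies events that were $\approx_a$-related. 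Second, the operations $X\xMerge Y$, $[\alpha/\beta]$ and $\store[\alpha/\beta]$ commute with the local modifications performed by rules on parts of $G$ that do not involve the merged components. Consequently, whenever the last rule of $d$ acts away from the two merged world-components and does not depend on $\beta$, we simply apply the induction hypothesis to its premise(s) and reapply the same rule.

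\emph{Base case.} If the premise is an axiom, the principal pair $\pi,\pi$ sits in one dynamic component. That component either survives unchanged or is absorbed into a merged component $\update_{\alpha'[\alpha/\beta]}\Gamma,\Gamma'\Imp\Delta,\Delta'$ that still contains both occurrences of $\pi$ (with $\mathrm{pre(x)}$ possibly renamed). The same holds for the composition axiom, since substitution respects the pattern $\alpha_1\cdot\mathrm{e\cdot f}\cdot\alpha_2$ versus $\alpha_1\cdot(\mathrm{e;f})\cdot\alpha_2$. Hence the conclusion is again an axiom.

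\emph{Inductive step.} We split on the last rule $\Rule$ of $d$.
\begin{itemize}
\item \emph{Propositional, event, dynamic and (L$\bigwedge\evt$)/(R$\bigwedge\evt$) rules.} These act inside one world-component. After the merge, that component may have been fused with another, or its label $\gamma$ renamed to $\gamma[\alpha/\beta]$. We apply the IH and reapply $\Rule$ to the merged component. For (New) and (DW)-like side conditions $\gamma\notin l_\evt(X)$, if the merge creates a clash $\gamma[\alpha/\beta]\in l_\evt$, the rule instance collapses and we instead use the hp-admissible weakenings of Lemma~\ref{lemmaAdmissibilityWeakening}. For (R$\bigwedge\evt$) the eigenvariable $\mathrm{y}$ can be chosen outside $\beta$, so it is untouched.
\item \emph{Modal rules ($\evt$LK$_1$), ($\evt$LK$_2$) and their primed versions.} The side condition $\alpha'\approx_a\beta'\in\store$ is transported to $\store[\alpha/\beta]$ because substitution maps $\approx_a$-chains to $\approx_a$-chains. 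When the two components linked by ($\evt$LK$_2$) are exactly the merged ones, the rule becomes an instance of ($\evt$LK$_1$) (respectively ($\evt$LK$_1)'$) within the single merged component. Note that the rules explicitly allow $\alpha=\beta$.
\item \emph{($\evt$RK).} The fresh component $na':\update_\chi\Imp A$ and the fresh variables $\chi$ are unaffected by the substitution. The only delicate point is the choice of $\eta'$ versus $\eta\hat\;na'$, which must be recomputed against $\Vert G[\eta/\theta]\Vert$ after the merge; this matches the case distinction in $(\eta\indexMerge\theta)'$.
\end{itemize}

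We expect the main obstacle to be the primed modal rules ($\evt$LK$_1)'$/($\evt$LK$_2)'$. Their side condition $\beta\cdot\delta'\notin l_\evt(Y)$ can fail after the merge, because $[\alpha/\beta]$ may identify a previously fresh label $\beta\cdot\delta'$ with an existing one. In that case some premises no longer match the rule shape. The plan is to show that the offending premise then contains, already in the merged component, the required dynamic component. We would therefore replace the instance by an application of ($\evt$LK$_1$)/($\evt$LK$_2$) to the existing component, or by a shorter sequence of premises. The height bound is then recovered from hp-admissible weakening and the IH applied to the remaining premises. A careful bookkeeping lemma, stating that $(X\xMerge Y)[\alpha/\beta]$ merges prefix-wise exactly the components labelled $\alpha'\prefix\alpha$ and $\beta'\prefix\beta$, should be proved first to make all these case checks mechanical.
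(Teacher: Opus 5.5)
Your route is the paper's: induction on the height of the premise, axioms sent to axioms, and otherwise the induction hypothesis applied to the premises of the last rule, which is then reapplied. Your checks of well-formedness and label collisions go beyond the paper's two examples, and your plan for the primed modal rules works: restart the primed rule from the longest prefix of the renamed $\beta\cdot\delta$ already present, or use ($\evt$LK$_1$)/($\evt$LK$_2$) if all of it is present, discarding the earlier premises.

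There is, however, a genuine gap, which the paper's sketch shares (its (R$[\cdot]$) example silently takes $\mathrm{pre(f)}[\alpha/\beta]=\mathrm{pre(e)}$): you never check how $[\alpha/\beta]$ acts on preconditions. If $\mathrm{f}_i\in\beta$ is a pointed event model, $\mathrm{pre}(\mathrm{f}_i)$ is an ordinary formula in which $\mathrm{f}_i$ does not occur, so the substitution leaves it untouched. Whenever (Recall), (L$[\cdot]$), (R$[\cdot]$) or a primed modal rule acts on $\mathrm{f}_i$, the induction hypothesis then yields a merged premise containing $\mathrm{pre}(\mathrm{f}_i)$, where reapplying the rule to $\mathrm{e}_i$ requires $\mathrm{pre}(\mathrm{e}_i)$. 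This cannot be patched, because the statement is false in this generality. Let $\mathrm{M}$ have $\mathrm{e}\sim^{\mathrm{M}}_a\mathrm{f}$, $\mathrm{pre(e)}=p$, $\mathrm{pre(f)}=\lnot p$. Start from the axiom $1a:\update_{\mathrm{e}}\Imp q\SepC 1a:p\Imp p\update_{\mathrm{f}}\,p\Imp\st\mathrm{e}\sim_a\mathrm{f}$. Applying (L$\lnot$), (Lat), (R$\lnot$), (Recall) and (New) derives $1a:\update_{\mathrm{e}}\Imp q\SepC 1a:\update_{\mathrm{f}}\Imp\lnot p\st\mathrm{e}\sim_a\mathrm{f}$. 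Yet (Merge) with $\alpha=\mathrm{e}$, $\beta=\mathrm{f}$ gives $\update_{\mathrm{e}}\Imp q,\lnot p$, i.e.\ $[\mathrm{e}](q\lor\lnot p)$. This fails wherever $p$ is true and $q$ false, so it is not derivable by soundness. Your argument breaks exactly at the (Recall) step: its merged premise is $\lnot p\Imp\update_{\mathrm{e}}\Imp q,\lnot p$, not the required $p\Imp\update_{\mathrm{e}}\Imp q,\lnot p$.

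Two further steps fail for the same reason.

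\emph{Concrete $\mathrm{f}_i$, variable $\mathrm{e}_i$.} Here $\store[\alpha/\beta]$ may delete the only relation binding $\mathrm{e}_i$: for instance $\mathrm{f}\sim_a\mathrm{x}$ becomes $\mathrm{x}\sim_a\mathrm{x}$ and is dropped. The conclusion is then not even well formed.

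\emph{Variable $\mathrm{f}_i=\mathrm{x}$, concrete $\mathrm{e}_i$.} An axiom with principal formula $\mathrm{pre(x)}$ becomes $\mathrm{pre}(\mathrm{e}_i),\Gamma\Imp\Delta,\mathrm{pre}(\mathrm{e}_i)$ with a possibly compound formula. With the axioms as stated (principal formula an atom or $\mathrm{pre(x)}$ for a variable), this is no longer an axiom. It is derivable by Lemma~\ref{lemmaDerivabilityTrivialSequents}, and the composition axiom similarly by Lemma~\ref{lemmaComposition}, but not at height $0$. So your base-case claim fails, and so does height preservation.

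What your induction does prove, and what the paper actually uses, is the restricted rule in which every $\mathrm{f}_i\neq\mathrm{e}_i$ is an event variable. Then $\mathrm{pre}(\mathrm{f}_i)[\alpha/\beta]=\mathrm{pre}(\mathrm{e}_i)$, and every rule can be reapplied.
\begin{itemize}
\item It is hp-admissible when the corresponding $\mathrm{e}_i$ are variables too. This is the contraction case, which merges a fresh $\zeta$ into a fresh $\chi$.
\item It is merely admissible, via the two lemmas in the base case, when some $\mathrm{e}_i$ is concrete. This is the cut case, which substitutes $\beta\cdot\delta$ for a fresh $\chi$ and needs no height bound there.
\end{itemize}
You should state and prove the lemma with this restriction.
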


\begin{proof} 
     The proof proceeds by induction on the height of the derivation of the premise. Let $\eta,\theta$ be sets of indices such that $\eta \inter \theta = \lbrace na\rbrace$ and let $\alpha \approx_a \beta \in \store$.
     If $G \ \vert \ \eta: X \update_\alpha \Gamma \Imp \Delta \ \vert \ \theta: Y \update_\beta \Gamma' \Imp \Delta' \st \store$ is an axiom, so is $(G[\eta/\theta] \ \vert \ (\eta\indexMerge\theta)': X \xMerge Y \update_\alpha \Gamma, \Gamma' \Imp \Delta, \Delta' \st \store)[\alpha/\beta] $.
     Otherwise, we consider the last applied rule $ \Rule$: we apply the inductive hypothesis to the premise(s) of $ \Rule$ and then use $ \Rule$ again. We only show two paradigmatic example.
     
     If $\Rule$ is an occurrence of ($\evt$RK) as below, where $\chi$ is not in the conclusion and $kb \in \theta$
\begin{center}
        \begin{prooftree}[regular]
            \hypo{G \Sep \eta: X \update_\alpha \ \Gamma \Imp \Delta \Sep \theta: \update_\beta \ \Gamma' \Imp \Delta' \Sep kb: \update_\chi \Imp A \st \store, \beta \sim_b \chi}
            \infer1[($\evt$RK)]{G \Sep \eta: X \update_\alpha \ \Gamma \Imp \Delta \Sep \theta': \update_\beta \ \Gamma' \Imp \Delta', K_b A \st \store}
        \end{prooftree}
\begin{small}
         where $\theta'=\begin{cases}
         \theta                                  &\text{if } kb \in \Vert G \Vert \\
         \theta \setminus \lbrace kb \rbrace     &\text{otherwise}
     \end{cases}$
\end{small}
\end{center}
    \noindent we apply the inductive hypothesis and replace $\beta$ by $\alpha$ in both the IDHS and the event store $\store$, thus obtaining $\alpha \sim_b \chi$\footnote{Note that since $\chi$ is new, $\alpha \sim_b \chi \notin \store$ so we should keep it -- obtained by substituting $\beta$ by $\alpha$ -- while applying the inductive hypothesis.} that allows us to apply ($\evt$RK) again. Below, either $l\neq k$, if $lb\in \eta$,\footnote{Since $\eta \cap \theta = na$, if $b\in \eta$ and $b\in \theta$, necessarily $k\neq l$. This is linked to condition (1) in Definition \ref{defIDHS}.} or $l=k$, if $b\notin\eta$.
    \begin{center}
        \begin{prooftree}[regular]
            \hypo{\left( G[\eta/\theta] \Sep (\eta\indexMerge \theta)'\hat\;lb: X \xMerge Y \update_\alpha \ \Gamma, \Gamma' \Imp \Delta, \Delta' \Sep lb: \update_\chi \Imp A \st \store, \alpha \sim_b \chi \right) [\alpha / \beta]}
            \infer1[($\evt$RK)]{\left(G[\eta/\theta] \Sep (\eta\indexMerge\theta)'': X \xMerge Y \update_\alpha \ \Gamma, \Gamma' \Imp \Delta, \Delta', K_b A \st \store \right)[\alpha/\beta]}
        \end{prooftree}
    \end{center}
\begin{small}
    where $(\eta\indexMerge\theta)' = \begin{cases}
        \eta\indexMerge \theta                                &\text{if } na\in \Vert G \Vert\\
        \eta\indexMerge\theta \setminus \lbrace na \rbrace    &\text{otherwise}
    \end{cases}$ and $(\eta\indexMerge\theta)''= \begin{cases}
        (\eta\indexMerge\theta)'                              &\text{if } lb \in \Vert G \Vert \\
        (\eta\indexMerge\theta)'\setminus \lbrace lb \rbrace   &\text{otherwise}.
    \end{cases}$
\end{small}

\medskip

    If now $\Rule$ is an occurrence of (R$[\cdot]$) where $\evt\in \alpha$ and $\mathrm{f} \in \beta$, consider the derivation below.

    \begin{center}
        \begin{prooftree}
            \hypo{G \Sep \eta: X \update_\alpha \Gamma \Imp \Delta  \Sep \theta: Y \update_\beta \mathrm{pre(f)}, \Gamma' \Imp \Delta' \update_{\beta \cdot \mathrm{f}} \Imp A \st \store }
            \infer1[(R$[\cdot]$)]{G \Sep \eta: X \update_\alpha  \Gamma \Imp \Delta \Sep \theta: Y \update_\beta \Gamma' \Imp \Delta', [\mathrm{f}]A \st \store }
        \end{prooftree}
    \end{center}

\noindent We again apply the inductive hypothesis to the premise, thereby replacing $\beta$ by $\alpha$ and in particular $\mathrm{f}$ by $\mathrm{e}$. We then apply (R$[\cdot]$) again.

    \begin{center}
        \begin{prooftree}
            \hypo{\left( G[\eta / \theta] \Sep \eta\indexMerge \theta: X\xMerge Y \update_\alpha \mathrm{pre(e)}, \Gamma, \Gamma' \Imp \Delta , \Delta' \update_{\alpha \cdot \evt} \Imp A \st \store \right) [\alpha / \beta] }
            \infer1[(R$[\cdot]$)]{ \left( G[\eta / \theta] \Sep \eta\indexMerge \theta: X\xMerge Y \update_\alpha  \Gamma, \Gamma' \Imp \Delta , \Delta', [\evt]A \st \store \right) [\alpha / \beta] }
        \end{prooftree}
    \end{center}

\end{proof}

The rule of (Merge) is the most general that can be formulated for a dynamic setting with an event store. However, it is also possible to formulate two more  restricted versions of (Merge), namely one called ($w$-Merge) where the sequences $\alpha$ and $\beta$ are one and the same so only the worlds $w$ an $v$ are merged, and another called $(\evt$-Merge) where the dynamic sequents $X$ and $Y$ -- the worlds so to say -- are one and the same so only the sequences of events $\alpha$ and $\beta$ are merged. The proofs of hp-admissibility of both ($w$-Merge) and ($\evt$-Merge) are very similar to that of (Merge) and thus omitted.

\begin{lemma}
    The rules of (\emph{$w$-Merge}) and (\emph{$\evt$-Merge}) are hp-admissible in \calculus.

 \vspace{0.3cm}
    \begin{prooftree}[regular]
        \hypo{G \Sep \eta\hat\; na : X \update_\alpha \ \Gamma \Imp \Delta \Sep \theta\hat\; na: Y \update_\alpha  \ \Gamma' \Imp \Delta' \st \store}
        \infer1[($w$-Merge)]{G[\eta/\theta] \Sep (\eta \indexMerge \theta)' : X \xMerge Y \update_\alpha \ \Gamma, \Gamma' \Imp \Delta, \Delta' \st \store  }
    \end{prooftree} 
\begin{small}
    where $(\eta \indexMerge\theta)' = \begin{cases}
        \eta \indexMerge \theta \hat\; na & \text{if } na \in \Vert G \Vert \\
        \eta \indexMerge \theta & \text{if } na \notin \Vert G \Vert
    \end{cases}$,\\
\end{small}

\hfill \\

    \begin{prooftree}[regular]
        \hypo{G \Sep \eta : X \update_\alpha \ \Gamma \Imp \Delta \update_\beta \ \Gamma' \Imp \Delta' \st \store, \alpha \approx_a \beta  }
        \infer1[($\evt$-Merge)]{\left(G \Sep \eta: X \update_\alpha \ \Gamma,\Gamma' \Imp \Delta, \Delta' \st \store \right)[\alpha/\beta]  }
    \end{prooftree}

\end{lemma}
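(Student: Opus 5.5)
The plan is to prove both statements by induction on the height $h(d)$ of a derivation $d$ of the premise, following the pattern used for (Merge). Alternatively, one could try to read ($w$-Merge) and ($\evt$-Merge) as instances of (Merge): take $\beta=\alpha$, where $\store[\alpha/\alpha]=\store$, or take $X$ and $Y$ to be the same world-component. This shortcut is not literal, because (Merge) requires two distinct world-components and a pair $\alpha\approx_a\beta$ with $\alpha\neq\beta$. So I would repeat the inductive argument directly and point to the (Merge) proof wherever the cases coincide.

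\textbf{Base case.} The premise is an axiom. For ($w$-Merge), the active formulas $\pi$, or the pair $\mathrm{pre(x)}$ with its composed counterpart in $\update_{\alpha'}$, survive $\xMerge$, because merging only unions multisets and keeps every dynamic component. So the conclusion is again an axiom. For ($\evt$-Merge), the substitution $[\alpha/\beta]$ is uniform on formulas and labels, so $\pi,\Gamma\Imp\Delta,\pi$ becomes $\pi[\alpha/\beta],\dots\Imp\dots,\pi[\alpha/\beta]$. The composition axiom is preserved because $\alpha_1\cdot\mathrm{e\cdot f}\cdot\alpha_2$ and $\alpha_1\cdot(\mathrm{e;f})\cdot\alpha_2$ are mapped to sequences of the same shape.

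\textbf{Inductive step.} Let $\Rule$ be the last rule applied. If its principal formula lies outside the merged components, I apply the inductive hypothesis to each premise and reapply $\Rule$. Side conditions are stable under the operations. Freshness of $\chi$ in ($\evt$RK) and of $\mathrm{y}$ in (R$\bigwedge\evt$) is unaffected, since merging introduces no new variables. The condition $\alpha\notin l_\evt(X)$ in (New), (DW) and the primed modal rules is checked on the merged sequent. If a label has become present through merging, the relevant premise is obtained instead by the inductive hypothesis together with (DW) in reverse, or by noting that the component already exists and the rule instance simply fuses with it.

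For ($w$-Merge), the interesting cases are the modal rules. Consider ($\evt$LK$_2$) or ($\evt$LK$_2)'$ acting between $X$ and $Y$ through the shared index $na$. After merging, the two world-components coincide, and the corresponding rule becomes ($\evt$LK$_1$) or ($\evt$LK$_1)'$ on the single merged component. If instead the rule acts through another index $kb\in\eta$ or $kb\in\theta$, the index substitution $G[\eta/\theta]$ guarantees that the partner component carries the surviving index $(\eta\indexMerge\theta)$. Hence ($\evt$LK$_2$) still applies, exactly as in the ($\evt$RK) case of the (Merge) proof, including the same case split on whether $na\in\Vert G\Vert$.

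For ($\evt$-Merge), the delicate rules are those whose side conditions mention $\approx_a$ in $\store$: ($\evt$LK$_1$), ($\evt$LK$_1)'$ and (L$\bigwedge\evt$). I would show, as a small auxiliary claim, that $\mathrm{g}\approx_b\mathrm{h}\in\store$ implies $\mathrm{g}[\alpha/\beta]\approx_b\mathrm{h}[\alpha/\beta]\in\store[\alpha/\beta]$ or $\mathrm{g}[\alpha/\beta]=\mathrm{h}[\alpha/\beta]$. This follows by induction on the chain witnessing $\approx_b$ and by inspecting the clauses that define $\store[\alpha/\beta]$; the dropped pairs are exactly those whose relation is already derivable. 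In the degenerate case where the endpoints become identical, ($\evt$LK$_1$) is applied with $\alpha=\beta$, which the calculus explicitly permits, and (L$\bigwedge\evt$) with $\evt=\mathrm{f}$. Event rules (L$[\cdot]$), (R$[\cdot]$), (Recall), (Lat) and (Rat) acting on $\update_\beta$ become the same rules acting on $\update_\alpha$ after substitution, as in the (R$[\cdot]$) case of (Merge).

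\textbf{Main obstacle.} I expect the main difficulty to be ($\evt$LK$_1)'$ and ($\evt$LK$_2)'$ under ($\evt$-Merge). Here substitution can make a previously absent prefix $\beta\cdot\delta'$ coincide with an existing label, which violates the side condition $\beta\cdot\delta'\notin l_\evt(X)$. I would first try to handle this by applying the inductive hypothesis to the premises and then discarding the superfluous premises. The components with label $\beta\cdot\delta'$ already exist in the merged sequent, so the premises with $\update_{\beta\cdot\mathrm{d}_i}\Imp\mathrm{pre}(\mathrm{d}_{i+1})$ can be replaced by applications of ($\evt$LK$_1$) to the existing components, possibly followed by hp-admissible weakening (Lemma \ref{lemmaAdmissibilityWeakening}). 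I would then verify that the height does not increase.
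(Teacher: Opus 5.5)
Your proposal takes the same route as the paper, whose entire proof is an induction on the height of the derivation of the premise: apply the inductive hypothesis to the premises of the last rule and reapply that rule, exactly as in the proof of (Merge). You go beyond what the paper spells out on the primed modal rules, and that part is sound if you state it this way: reapply ($\evt$LK$_1)'$ or ($\evt$LK$_2)'$ starting from the longest prefix $\beta\cdot\mathrm{d}_1\cdots\mathrm{d}_m$ already present after merging, or use the unprimed rule if $\beta\cdot\delta$ itself is present; this only discards premises, so height is preserved. However, drop the alternative of ``(DW) in reverse'', since deleting an empty dynamic component $\update_\gamma\Imp$ is not sound in general, and rely on the fusion argument instead.
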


\begin{proof} The proof proceeds by induction on the height of the derivation of the premise and it is analogous to the one of the previous lemma.
\end{proof}

\medskip

Beyond (Merge), there is another structural rule, (New$^{\evt}$), which is both admissible in the calculus \calculus and  crucial to prove the admissibility of the (Cut-rule). 

\begin{lemma}\label{lemmaAdmissibilityNewA}
The rule of (\emph{New}$^{\evt}$) is admissible in \calculus.
    \begin{displaymath}
    \begin{prooftree}[regular]
        \hypo{G \Sep \eta: X \update_\alpha \ \mathrm{pre}(\evt), \Gamma \Imp \Delta \ \update_{\alpha \cdot \evt} \ \Imp \st \store}
        \infer1[(New$^{\evt}$)]{G \Sep \eta: X \update_\alpha \ \mathrm{pre}(\evt), \Gamma \Imp \Delta \st \store}
    \end{prooftree} \quad \text{where } \alpha\cdot \evt \notin l_\evt(X)
    \end{displaymath}
\end{lemma}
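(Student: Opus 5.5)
We prove admissibility of (New$^{\evt}$) by induction on the height of the derivation $d$ of the premise $G \SepC \eta: X \update_\alpha \mathrm{pre}(\evt), \Gamma \Imp \Delta \update_{\alpha\cdot\evt} \Imp \st \store$. We cannot hope for height preservation in general: the empty component $\update_{\alpha\cdot\evt}\Imp$ may be filled by rules before it is used, and we may need the (Recall)/(New) machinery together with Lemma \ref{lemmaDerivabilityTrivialSequents} to reroute such steps. Semantically, the statement says that the world $(w,\alpha\cdot\evt)$ exists whenever $\mathrm{pre}(\evt)$ holds at $(w,\alpha)$, so an empty component indexed by $\alpha\cdot\evt$ carries no information. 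The induction strengthens the claim: we remove a component $\update_{\alpha\cdot\evt}\Gamma'\Imp\Delta'$ whose content was built up in the derivation, by tracking the occurrences that originate from the empty sequent in the conclusion.

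In the base case, the premise is an axiom. If the principal occurrences lie outside the component $\update_{\alpha\cdot\evt}$, the conclusion is an axiom of the same kind. The empty sequent cannot contain $\pi$ on both sides, and by the side condition on $\alpha\cdot\evt$ it cannot be the $\alpha'$ of a composition axiom with content, so this case is trivial.

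In the inductive step, let $\Rule$ be the last rule. If $\Rule$ acts on components other than $\update_{\alpha\cdot\evt}$ and does not create content there, we apply the inductive hypothesis to the premise(s) and reapply $\Rule$. The side conditions of rules such as ($\evt$LK$_1)'$, (New), and (Recall) refer to labels in $l_\evt(X)$, and removing a label only makes these conditions easier to satisfy.

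The critical cases are those in which $\Rule$ puts formulas into $\update_{\alpha\cdot\evt}$ or uses it. The relevant rules are (R$[\cdot]$), (L$[\cdot]$) with principal $[\evt]A$ at $\alpha$, (Lat)/(Rat), (Recall), and the modal rules ($\evt$LK$_1$) and ($\evt$LK$_1)'$ targeting $\alpha\cdot\evt$. For these we generalise the lemma to the following form.

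\emph{If $G \SepC \eta: X \update_\alpha \mathrm{pre}(\evt),\Gamma\Imp\Delta \update_{\alpha\cdot\evt}\Gamma'\Imp\Delta' \st\store$ is derivable, where $\Gamma'\Imp\Delta'$ is a sequent that is valid relative to the data at $\alpha$, then so is the version without the last component.}

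This generalisation is not clean. Our plan instead is the standard reduction: replace uses of the component by their ``primed'' variants (L$[\cdot])'$ and (R$[\cdot])'$, which the paper already allows when the target sequent is empty. We treat the cases as follows.

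For (R$[\cdot]$) into the empty component, the conclusion $\update_\alpha \Gamma\Imp\Delta,[\evt]A$ loses $\update_{\alpha\cdot\evt}\Imp A$. We instead take the premise as it stands; this is a (R$[\cdot])'$ instance, so it needs no induction.

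For (Lat)/(Rat), the atoms copied to level $\alpha$ are kept.

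For (Recall) applied to $\alpha\cdot\evt$, the extra $\mathrm{pre}(\evt)$ is absorbed by contraction, which is available via Lemma \ref{lemmaAdmissibilityWeakening} and a contraction argument.

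The main obstacle is the case of $\Rule = (\evt$LK$_1)'$ or (L$[\cdot]$) with a nonempty premise chain through $\alpha\cdot\evt$. Once the empty component has received a formula $A$, an axiom eventually closes the branch using that formula. To eliminate the component, we must show that such a branch can be closed at level $\alpha$ instead. We would attack this by a secondary induction on $c_\store(A,\alpha\cdot\evt)$ (Definition \ref{defComplexity}). Because $\mathrm{pre}(\evt)$ is available on the left at $\alpha$, every formula in the $\alpha\cdot\evt$ component can be transported back, using the reduction-axiom shape of the rules: atoms via (Lat)/(Rat), $K_a$ via the event-knowledge pattern, and nested events via composition axioms. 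This transport yields an $\alpha$-level derivation, at the cost of the complexity measure.
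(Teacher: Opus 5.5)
There is a genuine gap, and it sits exactly at what you call the main obstacle. Take the case where the last rule is ($\evt$LK$_1$), or ($\evt$LK$_2$), which you do not list, and it writes $A$ into the empty component. Its premise then contains $\update_{\alpha\cdot\evt} A \Imp$, while $K_aA$ sits at some $\beta\cdot\mathrm{f}\approx_a\alpha\cdot\evt$. Here you need neither a transport argument nor a secondary induction.

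Apply ($\evt$LK$_1)'$ (resp.\ ($\evt$LK$_2)'$) with $K_aA$ at $\beta\cdot\mathrm{f}$, receiving component $\update_\alpha$, and appended sequence the single event $\evt$. This instance has only two premises:
\begin{itemize}
\item the premise of the original ($\evt$LK$_1$) itself;
\item the desired conclusion with $\mathrm{pre}(\evt)$ added to the succedent of the $\alpha$-component. This is an instance of Lemma~\ref{lemmaDerivabilityTrivialSequents}, precisely because $\mathrm{pre}(\evt)$ already occurs in that antecedent.
\end{itemize}
The side condition of the primed rule is exactly your hypothesis $\alpha\cdot\evt\notin l_\evt(X)$, and its conclusion is the one you want. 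This is what the primed rules exist for, and it is the paper's argument.

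The same device covers two further cases. An application of ($\evt$LK$_1)'$ whose receiving component is $\update_{\alpha\cdot\evt}$ itself is handled by re-instantiating it with receiving component $\update_\alpha$ and appended sequence $\evt\cdot\delta$. You then remove the empty $\update_{\alpha\cdot\evt}$ from its later premises by the induction hypothesis. (L$[\cdot]$) with principal $[\evt]A$ at $\alpha$ is no obstacle at all: its right premise is verbatim a premise of (L$[\cdot])'$, and its left premise yields the other premise by the induction hypothesis.

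What you propose instead is not an argument. You suggest transporting every formula of the $\alpha\cdot\evt$ component back to $\alpha$ ``via the reduction-axiom shape of the rules'', by induction on $c_\store(A,\alpha\cdot\evt)$. Nothing in the paper performs such a transport: Lemma~\ref{lemmaComposition} only relates $\evt\cdot\mathrm{f}$ to $(\evt;\mathrm{f})$, and the reduction axioms are merely derivable formulas. Using them to rewrite a derivation would require cut, whose admissibility proof itself invokes (New$^{\evt}$), so the route is circular. Your semantic ``generalisation'' has the same defect, as you concede.

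You also miss the case where $X$ already contains a component $\update_{\alpha\cdot\evt\cdot\mathrm{f}}\Gamma'\Imp\Delta'$. There (Lat), (Rat) and (Recall), with $\alpha\cdot\evt$ as the \emph{parent}, can write into $\update_{\alpha\cdot\evt}$, and after that your induction hypothesis no longer applies. The paper disposes of this immediately. The premise of (New$^{\evt}$) is then literally a premise of (New), with $\update_{\alpha\cdot\evt}\Imp$ as the empty parent, so one application of (New) gives the conclusion.

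Your remark on (Lat)/(Rat) looks in the wrong direction. Since $\update_{\alpha\cdot\evt}$ is empty in the conclusion, it can never be the child component of these rules.

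The cases you do settle are fine. (R$[\cdot]$) is handled by reading its premise as a premise of (R$[\cdot])'$. (Recall) into $\alpha\cdot\evt$ is handled by the induction hypothesis followed by (LC). This is legitimate, since the proof of Lemma~\ref{lemmaAdmissibilityC} does not use (New$^{\evt}$).
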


\begin{proof}
We proceed by induction on the height of the derivation of the premise.
If the premise $G\ \vert \ \eta: X \update_\alpha \mathrm{pre}(\evt), \Gamma \Imp \Delta \update_{\alpha \cdot \evt} \Imp \st \store$ is an axiom then so is $G \ \vert \ \eta: X \update_\alpha \mathrm{pre}(\evt), \Gamma \Imp \Delta \st \store$. If $G\ \vert \ \eta: X \update_\alpha \mathrm{pre}(\evt), \Gamma  \Imp \Delta\update_{\alpha \cdot \evt} \Imp$ is of the form $G \ \vert \ \eta: X \update_\alpha \mathrm{pre}(\evt),\Gamma \Imp \Delta \update_{\alpha \cdot \evt} \Imp \update_{\alpha \cdot \evt \cdot \mathrm{f}} \, \Gamma' \Imp \Delta'$ -- namely the dynamic sequent $X$ contains a sequent of the form $\update_{\alpha \cdot \evt \cdot \mathrm{f}} \, \Gamma' \Imp \Delta'$ -- we simply apply (New) to obtain the desired conclusion.

If $G\ \vert \ \eta: X \update_\alpha \mathrm{pre}(\evt), \Gamma \Imp \Delta \update_{\alpha \cdot \evt}\Imp \st \store$ has been obtained by a rule $\Rule$, which is a left-rule and does not involve the dynamic sequent $\update_{\alpha \cdot \evt} \Imp$, we apply the inductive hypothesis on the premise of $\Rule$, and then apply $\Rule$ again. Here is an example with (L$[\cdot]$):
    \begin{align*}
        &\begin{prooftree}[regular]
            \hypo{G \Sep \eta: X \update_\alpha \mathrm{pre}(\evt), \Gamma \Imp \Delta, \mathrm{pre(f)}  \ \update_{\alpha \cdot \evt} \Imp}
            \hypo{G \Sep \eta: X \update_\alpha \mathrm{pre}(\evt),\Gamma \Imp \Delta \update_{\alpha \cdot \evt} \Imp \update_{\alpha \cdot \mathrm{f}} \ A \Imp}
            \infer2[(L$[\cdot]$)]{G \Sep \eta: X \update_\alpha \mathrm{pre}(\evt), [\mathrm{f}]A,\Gamma \Imp \Delta \ \update_{\alpha \cdot \evt} \Imp }
        \end{prooftree}\\
    \\
    \byIH
    &\qquad \qquad
        \begin{prooftree}[regular]
            \hypo{G \SepC \eta: X \update_\alpha \mathrm{pre}(\evt), \Gamma \Imp \Delta, \mathrm{pre(f)}}
            \hypo{G \SepC \eta: X \update_\alpha \mathrm{pre}(\evt), \Gamma \Imp \Delta \update_{\alpha \cdot \mathrm{f}} A \Imp}
            \infer2[(L$[\cdot]$)]{G \SepC \eta: X \update_\alpha \mathrm{pre}(\evt), [{\mathrm{f}}]A, \Gamma \Imp \Delta}
        \end{prooftree}
    \end{align*}

If, on the other hand, $\Rule$ is an application of a left-rule that does involve the dynamic sequent $\update_{\alpha \cdot \evt} \Imp$, then we obtain the desired result using Lemma \ref{lemmaDerivabilityTrivialSequents}, applying the inductive hypothesis or directly using $\Rule$ again. In case of ($\evt$LK$_1$) (resp.($\evt$LK$_2$)), we might need to use ($\evt$LK$_1)'$ (resp. ($\evt$LK$_2)'$) instead.

We only show one of the most significant cases, namely the case where $\Rule$ is an application of ($\evt$LK$_1$) involving $\update_{\alpha\cdot \evt} \Imp$, where $\alpha\cdot\evt\approx_a \beta\cdot \mathrm{f} \in \store$:
\begin{center}
    \begin{prooftree}[regular]
        \hypo{G \Sep \eta: X \update_{\alpha} \ \mathrm{pre}(\evt),\Gamma \Imp \Delta \update_{\alpha\cdot\evt} \ A \Imp \update_{\beta\cdot\mathrm{f}} \ K_a A, \Gamma' \Imp \Delta' \st \store}
        \infer1[($\evt$LK$_1$)]{G \Sep \eta: X \update_{\alpha} \ \mathrm{pre}(\evt),\Gamma \Imp \Delta \update_{\alpha\cdot\evt} \Imp \update_{\beta\cdot\mathrm{f}}\  K_a A, \Gamma' \Imp \Delta' \st \store}
    \end{prooftree}
\end{center}

\noindent We use $G \ \vert \ \eta: X \update_{\alpha} \mathrm{pre}(\evt),\Gamma \Imp \Delta, \mathrm{pre}(\evt) \update_{\beta\cdot\mathrm{b}} K_a A, \Gamma' \Imp \Delta' \st \store$, that is derivable by Lemma \ref{lemmaDerivabilityTrivialSequents}, as an additional premise to apply ($\evt$LK$_1)'$, thus directly obtaining the desired result: 
\begin{center}
    \begin{prooftree}[small]
        \hypo{G \SepC \eta: X \update_{\alpha} \ \mathrm{pre}(\evt),\Gamma \Imp \Delta, \mathrm{pre}(\evt)  \update_{\beta\cdot\mathrm{f}}\  K_a A, \Gamma' \Imp \Delta' \st \store}
        \hypo{G \SepC \eta: X \update_{\alpha} \ \mathrm{pre}(\evt),\Gamma  \Imp \Delta \update_{\alpha\cdot\evt} \ A \Imp \update_{\beta\cdot\mathrm{f}}\  K_a A, \Gamma' \Imp \Delta' \st \store}
        \infer[separation= 1.5em]2[($\evt$LK$_1)'$]{G \Sep \eta: X \update_{\alpha} \ \mathrm{pre}(\evt),\Gamma \Imp \Delta \update_{\beta\cdot\mathrm{f}} \ K_a A, \Gamma' \Imp \Delta' \st \store}
    \end{prooftree}
\end{center}

\medskip

If $G\ \vert \ \eta: X \update_\alpha \mathrm{pre}(\evt), \Gamma \Imp \Delta \update_{\alpha \cdot \evt} \Imp \st \store$ has been obtained by a rule $ \Rule$ that does not involve any of the previous cases, then we apply the inductive hypothesis to the premise(s) of $ \Rule$ and then use $ \Rule$ again. As a way of example, we consider the case where $\Rule$ is ($\evt$RK).
\begin{center}
        \begin{prooftree}[regular]
            \hypo{G \Sep \eta\hat\;na: X \update_\alpha \mathrm{pre}(\evt), \Gamma \Imp \Delta \update_{\alpha \cdot \evt} \Imp \Sep na: \update_\chi \Imp A \st \store, \alpha \sim_a \chi}
            \infer1[($\evt$RK)]{G \Sep \eta': X \update_\alpha \mathrm{pre}(\evt),\Gamma \Imp \Delta, K_a A \ \update_{\alpha \cdot \evt} \Imp \st \store }
        \end{prooftree}
\begin{small}
    where $\eta'= \begin{cases} \eta\hat\;na &\text{ if } na \in \Vert G\Vert \\ \eta &\text{ otherwise} \end{cases}$
\end{small}
\end{center}

\noindent We simply apply the induction hypothesis to the premise and then apply ($\evt$RK) again:
\begin{center}
        \begin{prooftree}[regular]
            \hypo{G \Sep \eta\hat\;na:X \update_\alpha \mathrm{pre}(\evt), \Gamma \Imp \Delta \Sep na: \update_\chi \Imp A \st \store,  \alpha \sim_a \chi}
            \infer1[($\evt$RK)]{G \Sep \eta': X \update_\alpha \mathrm{pre}(\evt),\Gamma \Imp \Delta, K_a A \st \store}
        \end{prooftree}
\end{center}
\end{proof}

\medskip


We end this section by showing that all logical rules are height-preserving invertible and that the contraction rules are hp-admissible in \calculus.

\begin{lemma}\label{lemmaInvertible}
    All logical rules are hp-invertible.
\end{lemma}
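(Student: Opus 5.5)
We argue by induction on the height $h(d)$ of a derivation $d$ of the conclusion of a logical rule $\Rule$: for each premise of $\Rule$ we produce a derivation of height at most $h(d)$. The rules split into two families. The first consists of rules whose premises contain their conclusions as sub-structures, possibly with extra formulas, dynamic components or store entries: (L$\bigwedge\evt$), ($\evt$LK$_1$), ($\evt$LK$_2$), ($\evt$LK$_1)'$, ($\evt$LK$_2)'$, (Lat), (Rat), (New) and (Recall). For these, hp-invertibility follows at once from the hp-admissibility of weakening, Lemma \ref{lemmaAdmissibilityWeakening}. Internal weakening adds the side formulas. Dynamic weakening adds the new components $\update_{\beta\cdot \mathrm{d_1}}\Imp \mathrm{pre(d_2)}$, etc., and is applicable because of the side condition $\beta\cdot\delta'\notin l_\evt(X)$ or $\alpha\notin l_\evt(X)$. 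The store entries are already present in the conclusion. So the real work concerns the rules that decompose a principal formula: (L$\lnot$), (R$\lnot$), (L$\land$), (R$\land$), (R$\bigwedge\evt$), (L$[\cdot]$), (R$[\cdot]$) and ($\evt$RK).

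For each of these we do the usual case analysis on the last rule $\Rule'$ of $d$. If the conclusion is an axiom, the corresponding premise is an axiom too. Here we must check that axioms are built only on $p$ or $\mathrm{pre(x)}$, hence never on the principal formula, and that the composition axiom survives the addition of components. If $\Rule'$ does not act on the principal occurrence, we apply the inductive hypothesis to the premise(s) of $\Rule'$ and then reapply $\Rule'$. If $\Rule'$ is the rule itself acting on that occurrence, its premise is exactly what we need. The propositional cases are standard. For (L$[\cdot]$) and (R$[\cdot]$) one has to track the special forms (L$[\cdot])'$ and (R$[\cdot])'$, in which the component $\update_{\alpha\cdot\evt}$ is absent in the conclusion. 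When inverting against a $\Rule'$ that creates or uses that component (e.g.\ (New), (Recall), (Lat)), we reapply $\Rule'$ after the inductive hypothesis, since the premise shape still matches. If an empty component $\update_{\alpha\cdot\evt}\Imp$ appears spuriously, it is harmless because it can always be weakened in.

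The delicate cases involve fresh symbols. For (R$\bigwedge\evt$) and ($\evt$RK), the premise introduces an eigenvariable $\mathrm{y}$ or $\chi$ together with store entries, and ($\evt$RK) also introduces a new index $na$ and a new world component. If in $d$ the principal formula was introduced by the same rule, but with a different eigenvariable or index, we first rename. This is a height-preserving substitution of variables and indices, proved by a routine induction using only that eigenvariable conditions are preserved under injective renaming. If $\Rule'$ is another rule that itself introduces fresh variables or indices, we rename those in the subderivation so that they avoid the ones chosen for the inverted premise, and then proceed as before. The side conditions of ($\evt$RK) on $\eta'$, depending on whether $na\in\Vert G\Vert$, must also be checked to stay compatible with conditions (1)--(3) of Definition \ref{defIDHS} after the new world component is added. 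This is handled by the same case split on $\eta'$ as in the admissibility proof for (Merge).

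The step I expect to be the main obstacle is the interaction between (R$[\cdot]$)/(L$[\cdot]$) and the modal rules ($\evt$LK$_1)'$ and ($\evt$LK$_2)'$. When $\Rule'$ is such a rule, its premises refer to labels $\beta\cdot\delta'$ and require $\beta\cdot\delta'\notin l_\evt(X)$. The inverted premise adds the label $\alpha\cdot\evt$, which could coincide with some $\beta\cdot\delta'$ and so break that side condition. In this situation I would not reapply ($\evt$LK$)'$. Instead I would apply the unprimed ($\evt$LK$_1$)/($\evt$LK$_2$), now that the target component exists, and use hp-admissible weakening to absorb the premise asserting $\mathrm{pre(d_i)}$ into the existing component. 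This mirrors the treatment of the ($\evt$LK$_1$) case in the proof of Lemma \ref{lemmaAdmissibilityNewA}, and it keeps the height bounded because only one rule application is replaced by another.
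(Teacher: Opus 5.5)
Your route is the paper's. The paper proves (L$\bigwedge\evt$), ($\evt$LK$_1$), ($\evt$LK$_2$) by (IW), and ($\evt$LK$_1)'$, ($\evt$LK$_2)'$ by (IW) and (DW). For the remaining rules it uses a height induction with the axiom/principal/non-principal split, spelling out only ($\evt$RK) and (L$[\cdot])'$. Putting (Lat), (Rat), (Recall), (New) into the weakening family, and making eigenvariable/index renaming explicit, are sound refinements of what the paper leaves implicit.

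The one step that fails as written is your repair of the label clash. The paper passes over this case with ``apply the inductive hypothesis and use $\Rule$ again''.

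\begin{itemize}
\item \emph{Where the clash must occur.} Say you invert (R$[\cdot])'$ or (L$[\cdot])'$ on $[\evt]C$ in $\update_\alpha$. Let $d$ end with ($\evt$LK$_i)'$ on $K_bB$ in $\update_{\alpha_0\cdot\gamma}$, with base $\update_\beta$ and $\delta=\mathrm{d_1}\cdots\mathrm{d_k}$. Suppose $\alpha\cdot\evt=\beta\cdot\delta'$ for a nonempty $\delta'\prefix\delta$. Since $\alpha$ is already a label of the conclusion, the side condition of ($\evt$LK$_i)'$ forces $\alpha=\beta$ and $\evt=\mathrm{d_1}$. So the clash is always with $\update_{\beta\cdot\mathrm{d_1}}$, not with the target $\update_{\beta\cdot\delta}$.

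\item \emph{Why your fix covers only $k=1$.} For $k=1$ your switch to the unprimed rule works. For $k\geq 2$ the component $\update_{\beta\cdot\delta}$ is still absent, so ($\evt$LK$_i$) cannot be applied, and no weakening ``absorbs'' the missing steps.

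\item \emph{The correct move.} Reapply the primed rule re-based at $\beta\cdot\mathrm{d_1}$, with tail $\mathrm{d_2}\cdots\mathrm{d_k}$. The store condition $\alpha_0\cdot\gamma\approx_b\beta\cdot\delta$ is unchanged; you only split it one event later. Its premises are the IH-images of the original premises $2,\ldots,k+1$; the premise containing $\mathrm{pre(d_1)}$ is discarded.
\begin{itemize}
\item The second premise already contains $\update_{\beta\cdot\mathrm{d_1}}$. Inverting the unprimed rule there yields $\update_{\beta\cdot\mathrm{d_1}}\Imp\mathrm{pre(d_2)},C$ (resp.\ $C\Imp\mathrm{pre(d_2)}$).
\item The later premises lack $\update_{\beta\cdot\mathrm{d_1}}$, so the primed inversion adds $\update_{\beta\cdot\mathrm{d_1}}\Imp C$ (resp.\ $C\Imp$).
\item The side condition for the new tail holds: the only label added is $\beta\cdot\mathrm{d_1}$.
\end{itemize}
For $k=1$ this degenerates to your unprimed application. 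Height is preserved in every case.

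\item \emph{A smaller slip.} If the last step is (New) creating $\update_{\alpha\cdot\evt}$ itself, it cannot be reapplied, contrary to ``the premise shape still matches''. However, the IH-image of its premise is already the required premise.
\end{itemize}
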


\begin{proof} 
 Rules (L$\bigwedge\evt$), ($\evt$LK$_1$), ($\evt$LK$_2$) are hp-invertible by hp-admissibility of (IW).
 Rules ($\evt$LK$_1)'$ and ($\evt$LK$_2)'$ are hp-invertible by hp-admissibility of (IW) and (DW).
 As for the other rules, we reason by induction on the height of the derivation. We only show the cases of the rules ($\evt$RK) and (L$[\cdot]$). Invertibility of the remaining rules can be established analogously.
 
 Let us first consider ($\evt$RK). If $G\ \vert \ \eta': X \update_\alpha \Gamma \Imp \Delta, K_a A \st \store$ is an axiom, so is $G\ \vert \ \eta\hat\;na: X \update_\alpha \Gamma \Imp \Delta \ \vert \ na: \update_\chi \Imp A \st \store, \alpha \sim_a \chi$. Otherwise, we consider the last occurrence of a rule $\Rule$ in the derivation. If $ \Rule$ has $K_a A$ as its principal formula, then it must be an occurrence of ($\evt$RK) and the premise is precisely what we are looking for. If $K_a A$ is not principal in $\Rule$, we apply the inductive hypothesis to the premise(s) of $\Rule$ and then use $\Rule$ again as below.

\begin{align*}
\begin{prooftree}[regular]
	\hypo{G' \Sep \eta'': X' \update_\alpha \Gamma' \Imp \Delta', K_a A  \st \store'}
	\infer1[$\Rule$]{G \Sep \eta': X \update_\alpha \Gamma \Imp \Delta, K_a A  \st \store}
\end{prooftree}
\byIH
\begin{prooftree}[regular]
	\hypo{G' \Sep \eta''\;\hat\;na: X' \update_\alpha \Gamma' \Imp \Delta' \Sep na: \update_\chi \Imp A  \st \store', \alpha \sim_a \chi}
	\infer1[$\Rule$]{G \Sep \eta\;\hat\;na: X \update_\alpha \Gamma \Imp \Delta \Sep na: \update_\chi \Imp A \st \store, \alpha\sim_a \chi}
\end{prooftree}
\end{align*}

Let us now move to (L$[\cdot])'$.
If $G \ \vert \ \eta: X \update_\alpha [\evt]A, \Gamma \Imp \Delta \st \store$ is an axiom, so are $G \ \vert \ \eta: X \update_\alpha \Gamma \Imp \Delta, \mathrm{pre}(\evt) \st \store$ and $G \ \vert \ \eta: X \update_\alpha \Gamma \Imp \Delta \update_{\alpha \cdot \evt} A \Imp \st \store$.
If $ \Rule$ has $[\evt] A$ as its principal formula, then it must be an occurrence of (L$[\cdot])$ and the premises are precisely what we were looking for.
Otherwise, we consider the last occurrence of a rule $\Rule$ in the derivation.
If $[\evt] A$ is not principal in $\Rule$, we apply the inductive hypothesis to the premise(s) of $\Rule$ and then use $\Rule$ again.
\end{proof}

From now on, for any given rule $\Rule$, we denote by $\Rule^\star$ the inverted rule\footnote{In case rule $\Rule$ has several premises then $\Rule^\star$ denotes any of the corresponding inverted rules, e.g. for (R$\land$), we denote by (R$\land)^\star$ either $\begin{prooftree}[small]
    \hypo{G\Sep \eta: X \update_\alpha \Gamma \Imp \Delta,A\land B \st \store}
    \infer1{G\Sep \eta: X\update_\alpha \Gamma \Imp \Delta, A \st \store}
\end{prooftree}$ or
$\begin{prooftree}[small]
    \hypo{G\Sep \eta: X \update_\alpha \Gamma \Imp \Delta,A\land B \st \store}
    \infer1{G\Sep \eta: X\update_\alpha \Gamma \Imp \Delta, B \st \store}
\end{prooftree}$.}, that is derivable, by Lemma \ref{lemmaInvertible}.


\begin{lemma}\label{lemmaAdmissibilityC}
    The rules of \emph{contraction} are hp-admissible.
    \begin{align*}
    &\begin{prooftree}[regular]
        \hypo{G \Sep \eta: X \update_\alpha A, A, \Gamma \Imp \Delta \st \store}
        \infer1[(LC)]{G \Sep \eta: X \update_\alpha A,\Gamma \Imp \Delta \st \store}
    \end{prooftree}
    &
    &\begin{prooftree}[regular]
        \hypo{G \Sep \eta: X \update_\alpha \Gamma \Imp \Delta,A,A \st \store}
        \infer1[(RC)]{G \Sep \eta:  X \update_\alpha \Gamma \Imp \Delta, A \st \store}
    \end{prooftree}
    \end{align*}

\end{lemma}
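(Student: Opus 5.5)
We prove the two claims simultaneously by induction on the lexicographically ordered pair consisting of the $\store$-complexity of the contracted formula relative to its sequence of events, $c_\store(A,\alpha)$ (Definition \ref{defComplexity}), and the height of the derivation of the premise. The base case is routine: if the premise of (LC) or (RC) is an axiom, then so is the conclusion. An axiom needs only one occurrence of $\pi$ on each side, and for the second axiom one occurrence of $\mathrm{pre(x)}$ in each of the two relevant dynamic sequents. This is unaffected by removing a duplicate.

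For the inductive step we look at the last rule $\Rule$ of the derivation of the premise. If neither occurrence of $A$ is principal in $\Rule$, we apply the inductive hypothesis (same formula, smaller height) to the premise(s) and then apply $\Rule$ again. This works uniformly because the side conditions of the rules never depend on multiplicities. The side conditions in question are: freshness of $\chi$ and $\mathrm{y}$, the label conditions $\alpha\notin l_\evt(X)$ and $\beta\cdot\delta'\notin l_\evt(X)$, and the index conditions on $\eta'$.

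If one occurrence of $A$ is principal, we distinguish two kinds of rules.

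First, suppose the rule keeps its principal formula in the premise. These are (L$\bigwedge\evt$), ($\evt$LK$_1$), ($\evt$LK$_2$), ($\evt$LK$_1)'$, ($\evt$LK$_2)'$, (Lat), (Rat) and (Recall), where the copied formula lies in another dynamic sequent or world-component. Here the premise still contains both copies of $A$. We contract them by the inductive hypothesis on height and reapply the rule.

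Second, suppose the principal formula is decomposed: the propositional rules, (R$\bigwedge\evt$), ($\evt$RK), (L$[\cdot]$) and (R$[\cdot]$). Then we apply the hp-invertibility of Lemma \ref{lemmaInvertible} (the rules $\Rule^\star$) to the other copy of $A$ in the premise(s), height-preservingly. This yields two copies of each immediate constituent. We contract these by the inductive hypothesis on complexity and reapply $\Rule$.

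The cases demanding care are the following.

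\begin{itemize}
\item[(i)] ($\evt$RK) on the right. Inverting the second $K_a A$ produces a second new component $mb:\update_{\chi'}\Imp A$ with a fresh variable sequence $\chi'$ and $\alpha\sim_a\chi'$. To collapse the two new world-components we use (Merge) from the previous lemma, which is hp-admissible. We apply it to the two new components, identifying $\chi'$ with $\chi$ via $\chi\approx_a\chi'$ (derivable through $\alpha$ in the store). This gives a single component $\update_\chi\Imp A,A$, after which the inductive hypothesis on the complexity of $A$ contracts the duplicated $A$.
\item[(ii)] (R$[\cdot]$) and (L$[\cdot]$). Inverting a second $[\evt]A$ creates the dynamic sequent $\update_{\alpha\cdot\evt}$ a second time, or adds to the existing one. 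It also duplicates $\mathrm{pre}(\evt)$ on the left or right at $\alpha$. We contract $\mathrm{pre}(\evt)$ and $A$ by the complexity hypothesis. This is justified because $c_\store(\mathrm{pre}(\evt),\alpha)$ and $c_\store(A,\alpha\cdot\evt)$ are both strictly smaller than $c_\store([\evt]A,\alpha)$, which is exactly why the measure $c_\store(\cdot,\alpha)$ was designed to include event complexity.
\item[(iii)] (R$\bigwedge\evt$). We contract $A[\mathrm{y}/\mathrm{x}]$ and $A[\mathrm{y}'/\mathrm{x}]$ after first identifying $\mathrm{y}'$ with $\mathrm{y}$ via ($\evt$-Merge), or via a direct variable-renaming argument. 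The complexity clause $2|\mathrm{M}|c_\store(A)+1$ guarantees that $A[\mathrm{y}/\mathrm{x}]$ has lower complexity.
\end{itemize}

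I expect the main obstacle to be case (i), together with the analogous handling of fresh variables in case (iii). There we must check that the instance of (Merge) we use yields exactly the intended hypersequent: the index sets must combine correctly via $\indexMerge$ so that conditions (1)--(3) of Definition \ref{defIDHS} are preserved, and the substitution $[\chi/\chi']$ must leave the store well-formed. Everything else is the standard Dragalin-style argument.
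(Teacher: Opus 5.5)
Your proposal is correct and follows essentially the paper's argument: invert the principal rule via Lemma~\ref{lemmaInvertible}, use (Merge) to fuse the two fresh components in the ($\evt$RK) case and ($\evt$-Merge) to identify the two fresh variables in the (R$\bigwedge\evt$) case, then apply the inductive hypothesis and reapply the rule. The only real difference is the measure. The paper inducts on height alone, which already suffices because the hypothesis ranges over all formulas and inversion, (Merge) and ($\evt$-Merge) are height-preserving; your complexity component is therefore harmless but unnecessary. Two small slips. In case (i) the second new component must also carry the index $na$, since the world-component already has an $a$-index and an index set contains at most one index per agent, so the $\indexMerge$ concern you flag is vacuous. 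In case (ii) inverting the second $[\evt]A$ always adds to the already existing $\update_{\alpha\cdot\evt}$ component rather than creating a new one.
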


\begin{proof} 
We simultaneously prove the hp-admissibility of (LC) and (RC) by induction on the height of the derivation of the premise of each rule. If $G \ \vert \ X \update_\alpha A, A, \Gamma \Imp \Delta \st \store$ and $G \ \vert \ X \update_\alpha \Gamma \Imp \Delta, A,A \st \store$ are axioms, then so are $G \ \vert \ X \update_\alpha A, \Gamma \Imp \Delta \st \store$ and $G \ \vert \ X \update_\alpha \Gamma \Imp \Delta, A \st \store$. Otherwise, we distinguish cases on the last applied rule $ \Rule$. If neither of the occurrences of $A$ is principal, we apply the inductive hypothesis to the premise(s) of $ \Rule$, and then use $\Rule$ again. Consider the following example where $\Rule$ is an occurrence of (R$[\cdot])'$.

\medskip\noindent \begin{prooftree}[regular]
    \hypo{G \Sep \eta: X \update_\alpha \mathrm{pre(e)}, A, A, \Gamma \Imp \Delta \update_{\alpha \cdot \evt} \Imp B \st \store}
    \infer1[(R$[\cdot])'$]{G \Sep \eta: X \update_\alpha A, A, \Gamma \Imp \Delta, [\evt] B \st \store}
\end{prooftree}
$\byIH$
\begin{prooftree}[regular]
    \hypo{G \Sep \eta: X \update_\alpha \mathrm{pre(e)}, A,  \Gamma \Imp \Delta \update_{\alpha \cdot \evt} \Imp B \st \store}
    \infer1[(R$[\cdot])'$]{G \Sep \eta: X \update_\alpha A,  \Gamma \Imp \Delta, [\evt] B \st \store}
\end{prooftree}

\hfill \\

If one of the occurrences of $A$ is principal, we apply $\Rule^\star$ -- that is hp-admissible by Lemma \ref{lemmaInvertible} -- to the premise(s) of $ \Rule$, apply the inductive hypothesis and then use $ \Rule$ again. In case of (R$\bigwedge\evt$) or ($\evt$RK), we also need to use hp-admissibility of ($\evt$-Merge), and hp-admissibility of (Merge), respectively, before applying the inductive hypothesis. For the sake of clarity, let us examine those two cases. First consider the case where $\Rule$ is an occurrence of ($\evt$RK).

\begin{center}
\begin{prooftree}[regular]
    \hypo{G \Sep \eta\;\hat\;na : X \update_\alpha\Gamma \Imp \Delta, K_a A \Sep na: \update_{\chi} \Imp A \stSpace \store, \alpha \sim_a \chi}
    \infer1[($\evt$RK)]{G \Sep \eta' : X \update_\alpha\Gamma \Imp \Delta, K_a A, K_a A \stSpace \store}
\end{prooftree}
\begin{small}
    where $\eta'= \begin{cases} \eta\hat\;na &\text{ if } na \in \Vert G\Vert \\ \eta &\text{ otherwise} \end{cases}$
\end{small}
\end{center}

\noindent We use hp-invertibility of ($\evt$RK) and then hp-admissibility of (Merge) to obtain a single new world-component $na: \update_\chi \Imp A, A$\footnote{Note that in the derivation, since $\chi$ and $\zeta$ are new, they do not appear anywhere else in the indexed dynamic hypersequent or in the event store, so the substitution $[\chi / \zeta]$ is fairly simple.}. This allows us to apply the inductive hypothesis, and finally use ($\evt$RK) again to get the desired conclusion:

\begin{center}
\begin{prooftree}[regular]
    \hypo{G \Sep \eta\;\hat\;na : X \update_\alpha\Gamma \Imp \Delta \Sep na: \update_{\chi} \Imp A \Sep na: \update_{\zeta} \Imp A \stSpace \store, \alpha \sim_a \chi, \alpha \sim_a \zeta}
    \infer1[(Merge)]{G \Sep \eta\;\hat\;na : X \update_\alpha\Gamma \Imp \Delta \Sep na: \update_{\chi} \Imp A, A \stSpace \store, \alpha \sim_a \chi}
    \infer1[(IH)]{G \Sep \eta\;\hat\;na : X \update_\alpha\Gamma \Imp \Delta \Sep na: \update_{\chi} \Imp A \stSpace \store, \alpha \sim_a \chi}
    \infer1[($\evt$RK)]{G \Sep \eta' : X \update_\alpha\Gamma \Imp \Delta, K_a A \st \store}
\end{prooftree}
\end{center}

Now consider the case where $\Rule$ is the following occurrence of (R$\bigwedge\evt$).

\begin{center}
\begin{prooftree}[regular]
    \hypo{G \Sep \eta : X \update_\alpha\Gamma \Imp \Delta, \bigwedge_{\evt \sim_a \mathrm{x}} A, A[\mathrm{y} / \mathrm{x}] \stSpace \store, \evt \sim_a \mathrm{y}}
    \infer1[(R$\bigwedge\evt$)]{G \Sep \eta : X \update_\alpha\Gamma \Imp \Delta, \bigwedge_{\evt \sim_a \mathrm{x}} A, \bigwedge_{\evt \sim_a \mathrm{x}} A \stSpace \store}
\end{prooftree}
\end{center}

\noindent We use hp-invertibility of (R$\bigwedge\evt$) and then hp-admissibility of ($\evt$-Merge) to combine both event variables $\mathrm{y,z}$ thereby obtained. We then apply the inductive hypothesis and get the desired conclusion by applying (R$\bigwedge\evt$) again.

\begin{center}
\begin{prooftree}[regular]
    \hypo{G \Sep \eta : X \update_\alpha\Gamma \Imp \Delta, A[\mathrm{y} / \mathrm{x}], A[\mathrm{z}/\mathrm{x}] \stSpace \store, \evt \sim_a \mathrm{y}, \evt \sim_a \mathrm{z}}
    \infer1[($\evt$-Merge)]{G \Sep \eta : X \update_\alpha\Gamma \Imp \Delta,  A[\mathrm{y} / \mathrm{x}], A[\mathrm{y}/\mathrm{x}] \stSpace \store, \evt \sim_a \mathrm{y}}
    \infer1[(IH)]{{G \Sep \eta : X \update_\alpha\Gamma \Imp \Delta,  A[\mathrm{y} / \mathrm{x}] \stSpace \store, \evt \sim_a \mathrm{y}}}
    \infer1[(R$\bigwedge\evt$)]{G \Sep \eta : X \update_\alpha\Gamma \Imp \Delta, \bigwedge_{\evt \sim_a \mathrm{x}} A[\mathrm{y} / \mathrm{x}] \stSpace \store}
\end{prooftree}
\end{center}
\end{proof}



\section{Soundness and Completeness} \label{sectionCompleteness}

In this section, we show that $\calculus$ is sound and complete.


\begin{theorem}[Soundness]\label{theoremSoundness}
     For all formulas $A \in \lDEL$, if $\vdash_{\calculus} \Imp A$ then $\satisfies_{DEL} A$.
\end{theorem}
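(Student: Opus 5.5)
We prove the stronger claim that every derivable IDHS with event store $G \st \store$ is valid, i.e.\ $(G\st\store)^\tau$ is DEL-valid. Soundness follows, since $(\Imp A \st \emptyset)^\tau$ is just $A$. The argument is by induction on the height of the derivation. Because $(G\st\store)^\tau$ is the conjunction over all assignment functions $f_\store$ and over all roots $X\in G$ of $(f_\store(G))^\tau_X$, the inductive step is the following: fix an arbitrary assignment $f_\store$ for the conclusion and an arbitrary root $X$. Then assume a pointed model $(\M,w)$ falsifying $(f_\store(G))^\tau_X$, and build from it a countermodel to one of the premises.

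It helps to read a falsified rooted interpretation semantically. $(G)^\tau_X$ fails at $(\M,w)$ exactly when there is a map $\sigma$ sending each world-component $\eta_j:X_j$ to a world $w_j$ (with $w_i=w$ at the root) such that two conditions hold. First, $w_j\sim_{f(\eta_j,\eta_l)} w_l$ along the edges of the tree obtained by unfolding the index graph from $X$. Second, for every dynamic component $\update_{\beta}\Gamma\Imp\Delta$ of $X_j$, we have $\M,w_j\satisfies\mathrm{pre}(\beta)$ and $\M^{\mathrm{M}(\beta)},(w_j,\beta)$ satisfies all of $\Gamma$ and none of $\Delta$. I would state this as a preliminary lemma, proved by unfolding the definition of $(\cdot)^\tau_{X}$ and of $[\alpha]$.

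Conditions (1) and (3) of Definition~\ref{defIDHS} are what make this "tree" reading coherent with an S5 reading. Cycles only occur inside a single $a$-cell, so by transitivity and symmetry of $\sim_a$ every two components sharing an index $na$ are mapped to $\sim_a$-related worlds. As a consequence, the falsifying map is independent of the chosen root, and any root may be used for the premise. A second lemma handles the store: $f_\store$ respects $\approx_a$, so $\alpha\approx_a\beta\in\store$ yields $f_\store(\alpha)\sim_a f_\store(\beta)$ componentwise. Combined with $w_j\sim_a w_l$, this gives $(w_j,\alpha)\sim_a(w_l,\beta)$ in the updated model.

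With these lemmas the cases go as follows.

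\textbf{Axioms.} $\pi$ on both sides is immediate. For the composition axiom we use the validity of $\mathrm{pre}(\alpha_1\cdot \mathrm{e}\cdot \mathrm{f})\leftrightarrow\mathrm{pre}(\alpha_1\cdot(\mathrm{e;f}))$ together with the fact that the two updated models are isomorphic. This is the composition reduction axiom of Theorem~\ref{AxComp}.

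\textbf{Propositional, event and dynamic rules.} These are local to one dynamic component and follow the semantic clauses. (Lat) and (Rat) use atomic permanence. (Recall) and (New) use the fact that $(w,\alpha\cdot\mathrm{e})$ exists only if $(w,\alpha)$ exists and satisfies $\mathrm{pre}(\mathrm{e})$.

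\textbf{(R$\bigwedge\evt$).} Given a falsifying $\mathrm{f}$ with $\mathrm{e}\sim_a\mathrm{f}$, extend $f_\store$ by $\mathrm{y}\mapsto\mathrm{f}$.

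\textbf{($\evt$LK$_1$), ($\evt$LK$_2$).} The truth of $K_aA$ transfers to the related world by the second lemma.

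\textbf{($\evt$LK$_1)'$, ($\evt$LK$_2)'$.} Since the conclusion is falsified, $K_aA$ holds at $(w,\alpha\cdot\gamma)$. Either some $\mathrm{pre}(\mathrm{d}_i)$ fails at $(v,\beta\cdot \mathrm{d}_1\cdots \mathrm{d}_{i-1})$, which falsifies premise $i$ (the new component $\update_{\beta\cdot \mathrm{d}_1\cdots}$ is well-formed thanks to the side condition on prefixes). Or $(v,\beta\cdot\delta)$ exists and is $a$-related, so $A$ holds there and the last premise is falsified.

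\textbf{($\evt$RK).} Falsity of $K_aA$ at $(w,\alpha)$ gives some $(v,\beta)$ with $v\sim_a w$ and $\beta\sim_a f_\store(\alpha)$ at which $A$ fails. Map the new component $na:\update_\chi\Imp A$ to $v$, extend the assignment by $\chi\mapsto\beta$, and check that the index conditions and the root-independence still hold when $na$ was already present in $G$.

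I expect the main obstacle to be the structural bookkeeping rather than any individual rule. Specifically, the hard part is proving carefully that the rooted, tree-unfolded interpretations are equivalent to the existence of a coherent S5 assignment of worlds to components, i.e.\ that conditions (1)–(3) really license replacing the unfolding by actual $\sim_a$-cells. Checking that ($\evt$RK) and ($\evt$LK$_2$) preserve this with shared indices across different roots is the delicate point. I would first try to prove it by induction on the number of world-components, removing a leaf of the index graph.
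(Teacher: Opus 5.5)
Your proposal is correct. Its global strategy is the paper's: show that each rule preserves validity by turning a falsifying model and assignment for the conclusion into ones for a premise, extending the assignment on the fresh variables in (R$\bigwedge\evt$) and ($\evt$RK).

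\textbf{Where you differ.} The difference lies in how the multi-agent, multi-root structure is handled. The paper never isolates your characterisation of a falsified rooted interpretation by a map $\sigma$ from world-components to worlds, nor the resulting root-independence. Instead, in the ($\evt$RK) case it fixes the falsified root $X_i$ and uses condition (2) to follow a path $\eta_i\nearrow\eta$ through the nested $K$-formula. It then splits cases:
\begin{itemize}
\item the root is the active component;
\item it is not, and the last agent on the path differs from $a$;
\item the last agent is $a$, so the new component sits under $u_{k-1}$ rather than under the active component, and one needs $u_{k-1}\sim_a u_k\sim_a v$.
\end{itemize}

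\textbf{What your lemma buys, and what it costs.} Your lemma says that, by (1) and (3), every cell is a star in every unfolding, so $\sim_a$-coherence of $\sigma$ does not depend on the root. This absorbs the paper's case split once and for all and turns every rule into a local check. It also makes explicit where (1) and (3) are used, whereas the paper invokes (2) explicitly and (3) only implicitly. The price is that you must actually prove the lemma: each component occurs exactly once in the unfolding, and each cell is entered at a unique member. The paper avoids this by reasoning along one path at a time. Your leaf-removal induction is a reasonable way to prove it. You also cover the primed $K$-rules and the composition axiom, which the paper leaves as ``similar''.

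\textbf{A proviso on your strengthened invariant.} Under the paper's literal definition, an assignment only constrains store entries of the forms $\evt\sim_a\mathrm{x}$ and $\mathrm{x}\sim_a\mathrm{y}$. So your store lemma, and with it the soundness of (L$\bigwedge\evt$), fails for an arbitrary store containing a ground entry $\evt^{\mathrm{M}}\sim_a\mathrm{f}^{\mathrm{M}}$ with $\evt\not\sim^{\mathrm{M}}_a\mathrm{f}$. There are two ways to fix this:
\begin{itemize}
\item read ``preserving the relations in $\store$'' as covering every entry, so that such stores have no assignments and are vacuously valid; or
\item restrict the claim to stores occurring in a derivation of $\Imp A$ from the empty store, where every entry has the form $\mathrm{g}\sim_a\mathrm{y}$ with $\mathrm{y}$ a fresh variable.
\end{itemize}
The paper's claim that ``all rules preserve validity'' needs the same proviso.
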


\begin{proof}

We need to show that axioms are valid and that all rules are correct, \emph{i.e.} that they preserve validity. In each case, conditions $(2)$ and $(3)$ of Definition \ref{defIDHS} are crucial to carry out the proof.
In particular, for any sets of indices $\eta,\theta$ in an IDHS, condition $(2)$ ensures that there is a path from $\eta$ to $\theta$, namely sets of indices $\iota_1, \dots, \iota_{k+1}$ and agents $a_1, \dots, a_{k}$ such that $\iota_1 = \eta, \iota_k = \theta$ and for all $1 \leq i \leq k$, $a_i = f(\iota_{i}, \iota_{i+1})$ (see Definition \ref{defn:setofsameindexes}).
We denote such a path by $ \eta \nearrow \theta$ and $na \in \eta \nearrow \theta$ means that there is $1\leq i \leq k$ such that $\iota_i \inter \iota_{i+1}=\lbrace na \rbrace$.
Then, in order to show that all axioms are valid, one proceeds in a standard way.
   
As for the correctness of the rules, for the sake of clarity, we provide details for the case of rule ($\evt$RK), while the other rules can be treated similarly.

Let $\alpha = \mathrm{e_1\cdots e_k}$ and $\chi = \mathrm{x_1 \cdots x_k}$.
Suppose the premise $I_P:= G \ \vert \ \eta\hat\;na : X \update_{\alpha} \Gamma \Imp \Delta \ \vert \ na: \update_\chi \Rightarrow A\st \store, \alpha \sim_a \chi$ is valid but not the conclusion $I_C:= G \ \vert \ \eta': X \update_\alpha \Gamma \Imp \Delta, K_a A \st \store$, where $\chi$ does not occur and $\eta' = \eta \hat \; na$ if $na \in \Vert G \Vert$ and $\eta'=\eta$ otherwise.
    Then, there is an assignment function $f_{\store}$ such that $\not\satisfies f_\store(I_C)^\tau$. Let $I_C':=f_\store(I_C)$ be the IDHS obtained by replacing all the variables in $I_C$ with their respective assigned events -- thus we now have $G' \ \vert \ \eta': X' \update_{\alpha'} \Gamma' \Imp \Delta', K_a A'$.
    Since $\not\satisfies (I_C')^\tau$, there is a model $\ModelM$, a world $w\in W$ and a dynamic sequent $X'_i \in I_C'$ such that $\M,w\not\satisfies (I_C')_{X'_i}^\tau$. We now distinguish two cases.

     $(1)$ If $X'_i = X' \update_{\alpha'} \Gamma' \Imp \Delta',K_a A'$, we have
     \begin{align*}
         \M,w \not \satisfies X'^\tau \lor [\alpha'](\bigwedge \Gamma' \imp \bigvee \Delta' \lor K_a A') \lor \phi
     \end{align*}
     for some formula $\phi \in \lDEL$.\footnote{The formula $\phi$ corresponds to the rest of the interpretation $(I_C')^\tau_{X'}$ of the disconnected indexed dynamic hypersequent rooted at $X'$.} In particular, then, $\M^{\mathrm{M}(\alpha')},(w,\alpha') \not \satisfies K_a A'$. Hence, there is $(v,\gamma)\in \M^{\mathrm{M}(\alpha')}$ such that $(w,\alpha')\sim_a (v,\gamma)$, $\M,v\satisfies \mathrm{pre}(\gamma)$ and $\M^{\mathrm{M}(\alpha')},(v,\gamma) \not \satisfies A'$ -- where $\gamma = \mathrm{g_1\cdots g_k}$ and $\mathrm{e_i}\sim_a\mathrm{g_i}$ for all $1 \leq i \leq k$.

     However, since the premise is valid, in particular $\satisfies f'_{\store}(I_P)$ for the assignment $f'_{\store}$ defined by $f'_{\store}(\mathrm{x}):=f_{\store}(\mathrm{x})$ for all $\mathrm{x}\in \store$ and $f'_{\store}(\mathrm{x_i}):= \mathrm{g_i}$ for all $\mathrm{x_i} \in \chi$. Hence
     \begin{align*}
        \M,w \satisfies X'^\tau \lor [\alpha'](\bigwedge \Gamma' \imp \bigvee \Delta') \lor K_a [\gamma]A' \lor \phi.
     \end{align*}

    \noindent Necessarily, then, $\M,w \satisfies K_a[\gamma]A'$. Since $(w,\alpha)\sim_a (v,\gamma)$, also $w \sim_a v$ so $\M,v \satisfies [\gamma]A'$. But then, since $\M^{\mathrm{M}(\alpha')},(v,\gamma) \not \satisfies A'$ and $\M,v \satisfies \mathrm{pre}(\gamma)$, we get $\M^{\mathrm{M}(\alpha')},(v,\gamma) \satisfies A'$ and reach a contradiction.
    
    \medskip
    
     $(2)$ Otherwise, by the second condition of Definition \ref{defIDHS}, there is a path $\eta_i \nearrow \eta\hat\;na$ so we have
    \begin{align*}
        \M,w \not \satisfies X_i'^\tau \lor K_{a_1}((\cdots K_{a_k}(X'^\tau \lor [\alpha'](\bigwedge \Gamma' \imp \bigvee \Delta' \lor K_a A') \lor \psi_k) \cdots) \lor \psi_1) \lor \phi
    \end{align*}
     for some formulas $\psi_1, \dots, \psi_k, \phi \in \lDEL$. 
     In particular then, there are $u_1,\dots,u_k \in W$ such that $w\sim_{a_1} u_1 \sim_{a_2} \cdots \sim_{a_k} u_k$ and $\M,u_k \not \satisfies [\alpha'](\bigwedge \Gamma' \imp \bigvee \Delta' \lor K_a A')$. Hence, $\M^{\mathrm{M}(\alpha')}, (u_k,\alpha') \not\satisfies K_a A'$ so there is $(v,\gamma)\in \M^{\mathrm{M}(\alpha')}$ such that $(u_k,\alpha')\sim_a (v,\gamma)$, $\M,v \satisfies \mathrm{pre}(\gamma)$ and $\M^{\mathrm{M}(\alpha')},(v,\gamma) \not \satisfies A'$.

     However, since the premise is valid $\M,w \satisfies f_\store(I_P)$ for all $\store$-assignments. Let $f'_\store$ be defined as in $(1)$.
     We now need to further distinguish two cases, depending on whether $(i) \ a_k \neq a$, so $na \notin \eta_i \nearrow \eta\hat\;na$ or $(ii) \ a_k = a$ so $\eta_i \nearrow \eta\hat\;na = \eta_i \nearrow na$, namely if $\eta_i \nearrow \eta\hat\;na$ is given by $\iota_1,\dots,\iota_{k+1}$ and $a_1,\dots,a_k$, we have $\iota_k \inter  \eta\hat\;na = \iota_k \inter \lbrace na\rbrace =\lbrace na\rbrace$.
     
     In case $(i)$ we get from $\M,w \satisfies f_\store(I_P)$:
    \begin{align*}
        \M,w \satisfies X_i'^\tau \lor K_{a_1}((\cdots K_{a_k}(X'^\tau \lor [\alpha'](\bigwedge \Gamma' \imp \bigvee \Delta') \lor K_a [\gamma]A') \lor \psi_k) \cdots) \lor \psi_1) \lor \phi
    \end{align*}
    so necessarily $\M,u_k \satisfies K_a[\gamma]A'$. Hence $\M,v \satisfies [\gamma]A'$, which contradicts $\M^{\mathrm{M}(\alpha')},(v,\gamma)\not\satisfies A'$.
    
     In case $(ii)$ we rather have:
    \begin{align*}
        \M,w \satisfies X_i'^\tau \lor K_{a_1}((\cdots K_{a_{k-1}}(K_{a}(X'^\tau \lor [\alpha'](\bigwedge \Gamma' \imp \bigvee \Delta') \lor \psi_{k}) \lor K_a [\gamma]A' \lor \psi_{k-1}) \cdots) \lor \psi_1) \lor \phi
    \end{align*}
    so necessarily $\M,u_{k-1}\satisfies K_a[\gamma]A'$. Since $u_{k-1}\sim_{a}u_k$ and $u_k \sim_a v$, $u_{k-1}\sim_a v$ so $\M,v\satisfies[\gamma]A'$, which again contradicts $\M^{\mathrm{M}(\alpha')},(v,\gamma)\not\satisfies A'$.

    In all cases, we reach a contradiction. Therefore, the conclusion must also be valid.
\end{proof}


\medskip

We now move to the completeness of the calculus. To obtain this result, we first need to show a  derivability lemma, see Lemma \ref{lemmaComposition}.\footnote{Note that this procedure is also adopted in other proof-theoretic works on DEL, e.g. \cite{Nomuraandothers2}.} 

Note that equivalent events have the same complexity: for all $\evt,\mathrm{f} \in \mathrm{Evt}$, if $\evt\sim_a \mathrm{f}\in \store$ then $c_\store(\evt)=c_\store(\mathrm{f})$. More generally, if $\evt\approx \mathrm{f}$, then $c_\store(\evt)=c_\store(\mathrm{f})$.\footnote{Indeed, $\evt \approx \mathrm{f}$ implies both events belong to a same model $\mathrm{M}$ so $c_\store(\evt)=c_\store(\mathrm{f})=c_\store(\mathrm{M})$.} Moreover, for all $\mathrm{e,f}\in \mathrm{Evt}$ if $\evt\approx_a \mathrm{f} \in \store$ then $c_\store(A[\mathrm{f/x}]) < c(\bigwedge_{\evt\sim_a \mathrm{x}}A)$. We can now show the required derivability lemma, which is crucial to obtain the completness of the calculus.

\begin{lemma}\label{lemmaComposition}
    Let $A \in \lDEL$ be a formula, $\evt,\mathrm{f}\in \Evt$ be two events and $\alpha,\beta$ be finite events sequences. Then, the following indexed dynamic hypersequents with event store are derivable:
    \begin{align*}
        &(1) \quad G \Sep \eta: X \update_{\alpha\cdot \evt \cdot \mathrm{f} \cdot \beta}\ A, \Gamma \Imp \Delta \update_{\alpha \cdot (\evt;\mathrm{f}) \cdot \beta} \ \Gamma' \Imp \Delta',A \st \store \\
        &(2) \quad G \Sep \eta: X \update_{\alpha\cdot \evt \cdot \mathrm{f} \cdot \beta} \ \Gamma \Imp \Delta, A \update_{\alpha \cdot (\evt;\mathrm{f}) \cdot \beta} \ A, \Gamma' \Imp \Delta' \st \store
    \end{align*}

\end{lemma}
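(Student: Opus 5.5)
We prove (1) and (2) simultaneously by induction on the $\store$-complexity $c_\store(A,\alpha\cdot\evt\cdot\mathrm{f}\cdot\beta)$, which equals $c_\store(A,\alpha\cdot(\evt;\mathrm{f})\cdot\beta)$ because $c_\store((\evt;\mathrm{f}))=c_\store(\evt)+c_\store(\mathrm{f})$. The measure must count the event sequence as well as $A$. When $A$ is of the form $[\mathrm{g}]B$, the event $\mathrm{g}$ migrates from the formula into the label, so the sum $c_\store(A)+c_\store(\gamma)$ strictly decreases, while $c_\store(A)$ alone need not.

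Each case follows the same strategy. We apply the invertible rule introducing $A$ on both sides, one in each of the two dynamic components $\update_{\alpha\cdot\evt\cdot\mathrm{f}\cdot\beta}$ and $\update_{\alpha\cdot(\evt;\mathrm{f})\cdot\beta}$. We then close with the induction hypothesis, or with the second axiom when we reach a precondition or atom.

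\begin{itemize}
\item \textbf{Atoms.} For $A=p$ we apply (Lat) repeatedly downward along the prefixes of $\alpha\cdot\evt\cdot\mathrm{f}\cdot\beta$, using (New) where intermediate components are missing, until $p$ sits in the shared component $\update_\alpha$. We then climb with (Rat) along $\alpha\cdot(\evt;\mathrm{f})\cdot\beta$ and close with the first axiom.
\item \textbf{Preconditions.} For $A=\mathrm{pre(x)}$ the second axiom applies directly.
\item \textbf{Propositional connectives.} The cases $\lnot,\land$ are immediate from the induction hypothesis (two instances for $\land$, swapping (1) and (2) for $\lnot$).
\item \textbf{Conjunction over events.} For $\bigwedge_{\evt'\sim_a\mathrm{x}}B$ we apply (R$\bigwedge\evt$) on the right with a fresh $\mathrm{y}$. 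We then apply (L$\bigwedge\evt$) on the left with that same $\mathrm{y}$, using $\evt'\sim_a\mathrm{y}\in\store$. The induction hypothesis applies because $c_\store(B[\mathrm{y}/\mathrm{x}])<c_\store(\bigwedge_{\evt'\sim_a\mathrm{x}}B)$, as noted just before the lemma.
\item \textbf{Event modality.} For $A=[\mathrm{g}]B$ we apply (R$[\cdot]$) then (L$[\cdot]$), using (New) as needed. This produces two premises. The first has $\mathrm{pre(g)}$ on both sides at the composed and uncomposed labels, which we close by the induction hypothesis at lower complexity. The second has $B$ at labels $\gamma\cdot\mathrm{g}$ and $\gamma'\cdot\mathrm{g}$, which we close by the induction hypothesis with $\beta$ replaced by $\beta\cdot\mathrm{g}$. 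The measure drops since $c_\store(B)+c_\store(\mathrm{g})<c_\store([\mathrm{g}]B)$.
\end{itemize}

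The main obstacle is the modal case $A=K_aB$. Reading (1) upward, we apply ($\evt$RK) to the right occurrence $K_aB$ in $\update_{\alpha\cdot(\evt;\mathrm{f})\cdot\beta}$. This creates a fresh component $na:\update_\chi\Imp B$ with $\alpha\cdot(\evt;\mathrm{f})\cdot\beta\sim_a\chi$ in $\store$, where $\chi=\chi_1\cdot\mathrm{z}\cdot\chi_2$ and $\mathrm{z}$ is a variable bound in the composed model. The left occurrence $K_aB$ sits at the uncomposed label. To use it we need an indistinguishable uncomposed sequence, but the store only relates $\mathrm{z}$ to the composed event $(\evt;\mathrm{f})$.

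We plan two steps to handle this. First, we use that events of $(\mathrm{M};\mathrm{N})$ are pairs, so every assignment sends $\mathrm{z}$ to some $(\mathrm{x}';\mathrm{y}')$ with $\evt\sim_a\mathrm{x}'$ and $\mathrm{f}\sim_a\mathrm{y}'$. Second, we apply ($\evt$LK$_2)'$ to the left $K_aB$ with target sequence $\chi_1\cdot\mathrm{x}'\cdot\mathrm{y}'\cdot\chi_2$. The side premises ask for the preconditions $\mathrm{pre(x')}$, $\mathrm{pre(y')}$ and the remaining preconditions of $\chi_2$ in the new component. We close these against the preconditions available on the composed branch, which is exactly what the second axiom and (Recall) are designed for. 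The main premise pairs $B$ at $\chi_1\cdot\mathrm{x}'\cdot\mathrm{y}'\cdot\chi_2$ with $B$ at $\chi_1\cdot\mathrm{z}\cdot\chi_2$, and the induction hypothesis closes it.

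The delicate points are making the store relations $\alpha\cdot\evt\cdot\mathrm{f}\cdot\beta\approx_a\chi_1\cdot\mathrm{x}'\cdot\mathrm{y}'\cdot\chi_2$ available, and identifying $\mathrm{z}$ with $(\mathrm{x}';\mathrm{y}')$. I expect to need a case split on the components of the composed event, possibly justified via a (R$\bigwedge\evt$)-style step or ($\evt$-Merge). Case (2) is symmetric, with the roles of composed and uncomposed labels exchanged.
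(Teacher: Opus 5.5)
\textbf{There is a genuine gap in the $K_aB$ case: the step it depends on is left open, and the tools you suggest for it do not work.} Your cases for atoms, preconditions, $\lnot$, $\land$, $\bigwedge_{\evt'\sim_a\mathrm{x}}B$ and $[\mathrm{g}]B$ agree with the paper. For $K_aB$ your architecture is also right: ($\evt$RK) on the composed occurrence, ($\evt$LK$_2$)$'$ on the uncomposed one, preconditions recovered by (Recall), main premise by induction.

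Suppose ($\evt$RK) introduces an atomic variable $\mathrm{z}$ and records only $(\evt;\mathrm{f})\sim_a\mathrm{z}$. Then nothing in the store relates $\alpha\cdot\evt\cdot\mathrm{f}\cdot\beta$ to any other sequence, so ($\evt$LK$_2$)$'$ cannot be applied to the left $K_aB$ with the target you want. Bottom-up, the only rules that add store entries are (R$\bigwedge\evt$) and ($\evt$RK), and neither helps here. Even granting such an entry, your main premise pairs $B$ at $\chi_1\cdot\mathrm{x}'\cdot\mathrm{y}'\cdot\chi_2$ with $B$ at $\chi_1\cdot\mathrm{z}\cdot\chi_2$. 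That is not an instance of the induction hypothesis. Like the second axiom, the hypothesis is purely syntactic and needs the composed label to be literally $\chi_1\cdot(\mathrm{x}';\mathrm{y}')\cdot\chi_2$. The suggested repairs do not close this:
\begin{itemize}
\item That every assignment sends $\mathrm{z}$ to a pair is a semantic fact and licenses no derivation step.
\item ($\evt$-Merge) only identifies two events already $\approx_a$-related inside one world-component; it cannot rewrite a variable into a compound term.
\item No rule of the calculus case-splits on the components of a composed event.
\end{itemize}

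The missing idea is to make the identification unnecessary. Choose the eigen-events of ($\evt$RK) so that the composed position is itself a composition of fresh variables, and record the store entry componentwise. This is what the paper does, with $\alpha$ empty. The new component is $na:\update_{(\evt';\mathrm{f}')\cdot\beta'}\Imp A'$ with $\store'=\store,\evt\cdot\mathrm{f}\cdot\beta\sim_a\evt'\cdot\mathrm{f}'\cdot\beta'$. This is a reading of ($\evt$RK) that the paper adopts without comment, and your proof would have to adopt it explicitly. With this choice the case goes through as follows:
\begin{itemize}
\item The main premise $\update_{\evt'\cdot\mathrm{f}'\cdot\beta'}A'\Imp\;\update_{(\evt';\mathrm{f}')\cdot\beta'}\Imp A'$ is a literal instance of the induction hypothesis.
\item So are the precondition premises for the events of $\beta'$, once (Recall) has placed those preconditions on the left along the composed branch. (EW) and (DW) adjust the context.
\item The preconditions of $\evt'$ and $\mathrm{f}'$ come from (Recall) at the root of the new component, which yields $\mathrm{pre}((\evt';\mathrm{f}'))=\mathrm{pre}(\evt')\land[\evt']\mathrm{pre}(\mathrm{f}')$.
\item (L$\land$) then leaves $\mathrm{pre}(\evt')$ at the root. (L$[\cdot]$)$'$, whose left premise $\mathrm{pre}(\evt')\Imp\mathrm{pre}(\evt')$ is an axiom, creates the component $\update_{\evt'}\,\mathrm{pre}(\mathrm{f}')\Imp$.
\item ($\evt$LK$_2$)$'$ is applied from that component. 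Its first premise closes by Lemma \ref{lemmaDerivabilityTrivialSequents}.
\item (New)$^\ast$ and ($\evt$RK) conclude.
\end{itemize}
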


\begin{proof}
 We simultaneously prove that $(1)$ and $(2)$ are derivable by induction on the sum of the DHS-complexity of each occurrence of $A$, denoted $\overline{c}(A):= c(A, \alpha\cdot \evt \cdot \mathrm{f} \cdot \beta) + c(A, \alpha \cdot (\evt;\mathrm{f}) \cdot \beta)$.

    Without loss of generality, we prove the result for empty $G$, $X$, $\eta$ and $\alpha$. In the following, $\beta = \mathrm{f_1} \cdot \mathrm{f_2} \cdots \mathrm{f_{k-1}}$. We distinguish the following cases, where we only present the details for $(1)$, whilst $(2)$ is treated similarly. In the following, we denote by $\Rule^\ast$ any finitely many applications of rule $\Rule$.

    \noindent$\bullet$ Case $A = p$. To derive $(1)$, we (bottom-up) apply $2\times k$ times (New) to go from $\update_{(\evt;\mathrm{b})\cdot \beta} \ \Gamma \Imp \Delta$ to $\update_\epsilon \ \Imp$ and from $\update_{\evt\cdot \mathrm{b} \cdot \beta} \ \Gamma' \Imp \Delta'$ to $\update_{\evt} \ \Imp $, then we apply $k+1$ times (Rat) and $k$ times (Lat) to obtain an axiom with $p\Imp p$.

    \begin{center}
    \begin{prooftree}[regular]
        \hypo{p \Imp p \update_{\evt} \ \Imp p \update_{\evt\cdot \mathrm{f}} \ \Imp p \update \cdots \update_{\evt\cdot \mathrm{f} \cdot \beta} \Gamma \Imp \Delta, p \update_{(\evt;\mathrm{f})} \ p\Imp  \update \cdots\update_{(\evt;\mathrm{f}) \cdot \beta} \ p, \Gamma' \Imp \Delta' \st \store}
        \infer1[(Lat)$^\ast$]{\Imp p \update_{\evt} \ \Imp p \update_{\evt\cdot \mathrm{f}} \ \Imp p \update \cdots \update_{\evt\cdot \mathrm{f} \cdot \beta} \Gamma \Imp \Delta, p  \update_{(\evt;\mathrm{f})} \ \Imp \update \cdots \update_{(\evt;\mathrm{f}) \cdot \beta} \ p, \Gamma' \Imp \Delta' \st \store}
        \infer1[(Rat)$^\ast$]{\Imp \update_{\evt} \  \Imp \update_{\evt\cdot \mathrm{f}} \ \Imp \update \cdots \update_{\evt\cdot \mathrm{f} \cdot \beta} \Gamma \Imp \Delta, p  \update_{(\evt;\mathrm{f})} \ \Imp \update \cdots \update_{(\evt;\mathrm{f}) \cdot \beta} \ p, \Gamma' \Imp \Delta' \st \store}
        \infer1[(New)$^\ast$]{\update_{\evt\cdot \mathrm{f} \cdot \beta} \ \Gamma \Imp \Delta, p \update_{(\evt;\mathrm{f})\cdot \beta}  \ p, \Gamma' \Imp \Delta' \st \store}
    \end{prooftree}
    \end{center}
    \hfill \\

    \noindent$\bullet$ Case $A =\lnot A'$. We apply the inductive hypothesis, then (L$\lnot$) and (R$\lnot$). This is straightforward.

    \noindent$\bullet$ Case $A = A' \land A''$. We apply the inductive hypothesis, then we apply admissibility of (IW) and use (L$\land$) and (R$\land$). This is straightforward. 

    \noindent$\bullet$ Case $A = \bigwedge_{\mathrm{g}\sim_a\mathrm{x}}A'(\mathrm{x})$. We start from $\update_{\mathrm{e\cdot f}\cdot \beta}  A[\mathrm{y/x}],\Gamma \Imp \Delta \update_{(\mathrm{e;f})\cdot\beta} \Gamma' \Imp \Delta', A[\mathrm{y/x}] \st \store, \mathrm{g}\sim_a \mathrm{y}$ that is derivable by inductive hypothesis. Then we apply admissibility of (IW) and then use (L$\bigwedge \evt$) and (R$\bigwedge\evt$).
    
    \begin{center}
    \begin{prooftree}[regular]
        \hypo{\update_{\mathrm{e\cdot f}\cdot \beta}  A[\mathrm{y/x}],\Gamma \Imp \Delta \update_{(\mathrm{e;f})\cdot\beta} \Gamma' \Imp \Delta', A[\mathrm{y/x}] \st \store, \mathrm{g}\sim_a \mathrm{y}}
        \infer1[(IW)]{\update_{\mathrm{e\cdot f}\cdot \beta} \bigwedge_{\mathrm{g}\sim_a \mathrm{x}}A, A[\mathrm{y/x}], \Gamma \Imp \Delta \update_{(\mathrm{e;f})\cdot\beta} \Gamma' \Imp \Delta', A[\mathrm{y/x}] \st \store, \mathrm{g}\sim_a \mathrm{y}}
        \infer1[(L$\bigwedge \evt$)]{\update_{\mathrm{e\cdot f}\cdot \beta} \bigwedge_{\mathrm{g}\sim_a \mathrm{x}}A, \Gamma \Imp \Delta \update_{(\mathrm{e;f})\cdot\beta} \Gamma' \Imp \Delta', A[\mathrm{y/x}] \st \store, \mathrm{g}\sim_a \mathrm{y} }
        \infer1[(R$\bigwedge \evt$)]{\update_{\mathrm{e\cdot f}\cdot \beta} \bigwedge_{\mathrm{g}\sim_a \mathrm{x}}A, \Gamma \Imp \Delta \update_{(\mathrm{e;f})\cdot\beta} \Gamma' \Imp \Delta',  \bigwedge_{\mathrm{g}\sim_a \mathrm{x}}A  \st \store}
    \end{prooftree}
    \end{center}

    \noindent$\bullet$ Case $A = K_a A'$. Note that $\overline{c}(A') < \overline{c}(K_a A')$ and, for all $\mathrm{e'}\cdot \mathrm{f'}\cdot\beta' \sim_a \evt\cdot \mathrm{f}\cdot\beta$ and all $\beta'_i\cdot \mathrm{f'_i} \prefix \beta'$, $c(\mathrm{pre(f'_i}), \mathrm{e'} \cdot \mathrm{f'} \cdot \beta'_i) + c(\mathrm{pre(f'_i}), (\mathrm{e'} ; \mathrm{f'}) \cdot \beta'_i) < \overline{c}(K_a A')$. This allows us to use the inductive hypothesis to obtain the following indexed dynamic hypersequents -- where $\store'=\store, \evt\cdot\mathrm{f}\cdot\beta \sim_a \mathrm{e'}\cdot\mathrm{f'}\cdot\beta'$:
    \begin{small}
    \begin{align*}
        (1) \quad &\update_{\mathrm{e'} \cdot \mathrm{f'}} \Imp \mathrm{pre(f'_1}) \update_{\mathrm{(e';f')}} \mathrm{pre(f'_1}) \Imp \st \store' \\
        (2) \quad &\update_{\mathrm{e'} \cdot \mathrm{f'} \cdot \mathrm{f'_1}} \Imp \mathrm{pre(f'_2}) \update_{\mathrm{(e';f')} \cdot \mathrm{f'_1}} \mathrm{pre(f'_2}) \Imp \st \store' \\
        \vdots \ \quad & \\
        (k-1) \quad &\update_{\mathrm{e'} \cdot \mathrm{f'} \cdots \mathrm{f'_{k-2}}} \Imp \mathrm{pre(f'_{k-1}}) \update_{\mathrm{(e';f')} \cdots \mathrm{f'_{k-2}}} \mathrm{pre(f'_{k-1}}) \Imp \st \store' \\
        (k) \quad &\update_{\mathrm{e'} \cdot \mathrm{f'} \cdot \beta'} A' \Imp \update_{\mathrm{(e';f')} \cdot\beta'} \Imp A' \st \store'
    \end{align*}
    \end{small}

    \noindent By admissibility of (EW) and (DW), from each $(i)$, $1\leq i\leq k$, we get the following $(i')$, where $X$ denotes the dynamic sequent $\update_{\evt \cdot \mathrm{f} \cdot \beta} \  K_a A', \Gamma \Imp \Delta \update_{\mathrm{(e;f)} \cdot \beta} \ \Gamma' \Imp \Delta'$:
    \begin{small}
    \begin{align*}
        (1') \quad &na: X \Sep na: \mathrm{pre}(\evt') \Imp \update_{\evt'} \mathrm{pre(f')} \Imp  \update_{\evt' \cdot \mathrm{f'}} \ \Imp \mathrm{pre(f'_1}) \ \update_{(\mathrm{e';f'}) \cdot \mathrm{f'_1}} \ \mathrm{pre(f'_2}) \Imp \ \update_{(\mathrm{e';f'})} \mathrm{pre(f'_1}) \Imp \update \cdots \update_{(\mathrm{e';f'})\cdot\beta'} \Imp A' \st \store' \\
        (2') \quad &na: X \Sep na: \mathrm{pre(e'}) \Imp \update_{\mathrm{e'}} \mathrm{pre(f'}) \Imp  \update_{\mathrm{e'} \cdot \mathrm{f'} \cdot \mathrm{f'_1}} \Imp \mathrm{pre(f'_2}) \update_{(\mathrm{e';f'})} \ \mathrm{pre(f'_1}) \Imp \ \update_{(\mathrm{e';f'}) \cdot \mathrm{f'_1}} \ \mathrm{pre(f'_2}) \Imp \ \update \cdots \update_{(\mathrm{e';f'})\cdot\beta'} \Imp A' \st \store \\
        \vdots \ \quad & \\
        (k') \quad &na: X \Sep na: \mathrm{pre(e'}) \Imp \update_{\mathrm{e'}} \ \mathrm{pre(f'}) \Imp \update_{\mathrm{e'} \cdot \mathrm{f'} \cdot \beta'} A' \Imp \update_{(\mathrm{e';f'}) } \ \mathrm{pre(f}_1') \Imp \ \update_{(\mathrm{e';f'}) \cdot \mathrm{f'_1}} \ \mathrm{pre(f'_2}) \Imp \ \update \cdots \update_{(\mathrm{e';f'}) \cdot\beta'} \Imp A' \st \store'
    \end{align*}
    \end{small}

    \noindent To apply ($\evt$LK$_2)'$ with $\update_{\evt\cdot \mathrm{f}\cdot \beta} \ K_a A', \Gamma' \Imp \Delta'$, we need $k+1$ premises, since the sequence of events here is $\mathrm{f} \cdot \mathrm{f_1} \cdots \mathrm{f_{k-1}}$. Hence, we also need the following IDHS, that is derivable by Lemma \ref{lemmaDerivabilityTrivialSequents}:
    \begin{small}
    \begin{displaymath}
        (0)' \quad na: X \Sep na: \mathrm{pre(e'}) \Imp \update_{\mathrm{e'}} \ \mathrm{pre(f'}) \Imp \mathrm{pre(f'}) \update_{(\mathrm{e';f'})} \ \mathrm{pre(f'_1}) \Imp \update \cdots \update_{(\mathrm{e';f'}) \cdot \beta'} \ \Imp A' \st \store'
    \end{displaymath}
    \end{small}
    We now apply ($\evt$LK$_2)'$)  to $(0)'-(k)'$ and continue the derivation as follows:
    \begin{center}
    \begin{prooftree}[small]
            \hypo{na: X \SepC na: \mathrm{pre(e'}) \Imp \mathrm{pre(e'})  \cdots}
                \hypo{(0)'}
                \hypo{(1)'}
                \hypo{\cdots}
                \hypo{(k)'}
            \infer[separation=4em]4[($\evt$LK$_2)'$]{ na: X \SepC na: \mathrm{pre(e'})\Imp \update_{\mathrm{e'}} \ \mathrm{pre(f'}) \Imp \update \cdots \update_{\mathrm{(e';f')} \cdot \beta'} \ \Imp A' \st \store'}
        \infer[separation=1.4em]2[(L$[\cdot]$)]{na: X \Sep na: \mathrm{pre(e'}),[\mathrm{e'}]\mathrm{pre(f'}) \Imp \update_{(\mathrm{e';f'})} \ \mathrm{pre(f_1'}) \Imp \update \cdots \update_{(\mathrm{e';f'}) \cdot \beta'} \Imp A' \st \store'}
        \infer1[(L$\land$)]{na: X \Sep na: \mathrm{pre(e'}) \land [\mathrm{e'}]\mathrm{pre(f'}) \Imp \update_{(\mathrm{e';f'})} \ \mathrm{pre(f_1'}) \Imp \update \cdots 
        \update_{(\mathrm{e';f'}) \cdot \beta'} \Imp A' \st \store'}
        \infer1[(Recall)$^\ast$]{na: X \Sep na: \ \Imp \update_{(\mathrm{e';f'})} \Imp \update \cdots \update_{(\mathrm{e';f'}) \cdot \beta'} \Imp A' \st \store'}
        \infer1[(New)$^\ast$]{na: \update_{\evt\cdot \mathrm{f}\cdot \beta} \ K_a A', \Gamma \Imp \Delta \update_{(\mathrm{e;f}) \cdot \beta}\ \Gamma' \Imp \Delta'  \Sep  na: \update_{(\mathrm{e';f'}) \cdot \beta'} \Imp A' \st \store' }
        \infer1[($\evt$RK)]{\update_{\evt\cdot \mathrm{f}\cdot \beta} \ K_a A', \Gamma \Imp \Delta \update_{(\mathrm{e;f}) \cdot \beta}\ \Gamma' \Imp \Delta', K_a A' \st \store}
    \end{prooftree}
    \end{center}

\noindent$\bullet$ Case $A=[\mathrm{g}]A'$. We use the inductive hypothesis and admissibility of (DW), then apply (L$[\cdot]$) and (R$[\cdot]$).This is straightforward.

\end{proof}

\begin{theorem}[Completeness]\label{theoremCompleteness}
     For all formulas $A \in \lDEL$, if $\ \satisfies_{DEL} A$ then $\vdash_{\calculus} \Imp A$.
\end{theorem}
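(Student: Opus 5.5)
We will prove completeness indirectly, through the Hilbert system \textbf{DEL}. By Theorem \ref{AxComp}, $\satisfies_{DEL} A$ implies $\vdash_{\textbf{DEL}} A$. It therefore suffices to show that every theorem of \textbf{DEL} is derivable in \calculus{} as $\Imp A$. We argue by induction on the length of the Hilbert derivation, and for this we need three things: every axiom is derivable, necessitation preserves derivability, and modus ponens preserves derivability.

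\textbf{Axioms.} Propositional tautologies are derivable using the propositional rules together with Lemma \ref{lemmaDerivabilityTrivialSequents}. The S5 axioms are handled as in \cite{Poggiolesi2008}:
\begin{itemize}
\item K and T follow from (RK) and ($\evt$LK$_1$) applied with $\alpha=\beta=\epsilon$.
\item Axiom 4 uses ($\evt$LK$_2$) across two world-components sharing the same index $na$.
\item Axiom 5 uses the indexing: a fresh component created by (RK) for $\hat K_a$ shares $na$ with the root, so ($\evt$LK$_2$) can reach it.
\end{itemize}
For the reduction axioms we split each biconditional into its two implications and decompose bottom-up with the event rules:
\begin{itemize}
\item Atomic permanence is done as in the example at the end of Section \ref{sectionCalculus}, using (L$[\cdot]$), (Lat), (Rat) and (Recall).
\item Negation and conjunction use (L$[\cdot]$), (R$[\cdot]$), (L$[\cdot])'$, (R$[\cdot])'$ and (Recall), closing the leaves with Lemma \ref{lemmaDerivabilityTrivialSequents}.
\item Composition follows from Lemma \ref{lemmaComposition}. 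Unfold $[\mathrm{e^M}][\mathrm{f^N}]A$ with (R$[\cdot]$) twice and $[(\mathrm{e;f})]A$ with (L$[\cdot]$). The precondition $\mathrm{pre(e)}\land[\mathrm{e}]\mathrm{pre(f)}$ is handled by (L$\land$)/(R$\land$) and (L$[\cdot]$)/(R$[\cdot]$). The remaining leaves are exactly instances (1) and (2) of Lemma \ref{lemmaComposition}, possibly after (IW) and (DW).
\end{itemize}

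\textbf{Event and knowledge axiom.} This is the step I expect to be the main obstacle. Its right-hand side $\bigwedge_{\evt\sim_a\mathrm{f}} K_a[\mathrm{f^M}]A$ has to be read as $\bigwedge_{\evt\sim_a \mathrm{x}} K_a[\mathrm{x}]A$, so that the propositional event rules apply.
\begin{itemize}
\item Left to right: apply (R$\bigwedge\evt$), which introduces a fresh variable $\mathrm{y}$ with $\evt\sim_a\mathrm{y}$. Then apply ($\evt$RK) with $\alpha=\epsilon$, producing a new component $na:\ \Imp [\mathrm{y}]A$, and then (R$[\cdot]$), giving $\mathrm{pre(y)}\Imp \update_{\mathrm{y}}\Imp A$. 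On the left, (L$[\cdot]$) applied to $[\evt]K_aA$ in the root gives $\update_{\evt} K_aA\Imp$ together with a precondition premise, which is closed by the hypothesis $\mathrm{pre(e)}$. Finally apply ($\evt$LK$_2)'$ with $\gamma=\evt$, $\delta=\mathrm{y}$ and $\evt\approx_a\mathrm{y}\in\store$. Its premises are either the $\mathrm{pre(y)}$-axiom (second axiom form, or first with $\pi=\mathrm{pre(x)}$) or $\update_{\mathrm{y}}A\Imp A$, which is closed by Lemma \ref{lemmaDerivabilityTrivialSequents}.
\item Right to left: start with (R$[\cdot]$), then ($\evt$RK) with $\alpha=\evt$, which creates $na:\update_\chi\Imp A$ with $\evt\sim_a\chi$. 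Use (New) and (Recall) to bring $\mathrm{pre}(\chi)$ into the base component. Then use (L$\bigwedge\evt$) to instantiate the conjunction at $\chi$, and (L$[\cdot]$) together with ($\evt$LK$_2$) to transport $[\chi]A$.
\end{itemize}
The delicate point is checking that the store conditions, the side condition $\beta\cdot\delta'\notin l_\evt(Y)$, and the index conditions (1)--(3) of Definition \ref{defIDHS} are respected at each step. Where they are not, we will insert (EW) or (DW) from Lemma \ref{lemmaAdmissibilityWeakening} to supply the missing components.

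\textbf{Inference rules.} For Nec: from a derivation of $\Imp A$, apply (EW) and then (RK) to obtain $\Imp K_aA$, keeping the fresh component $na:\ \Imp A$. For MP there is no cut at this stage, so we cannot close the case directly. The plan is to rely on the Cut-rule admissibility proved in Section \ref{sectionCut}: derive $\Imp A\imp B$, use the hp-invertibility of (R$\imp$) (Lemma \ref{lemmaInvertible}) to get $A\Imp B$, and then cut against $\Imp A$. If the intended order of the paper forbids using Cut here, the fallback is to prove completeness semantically instead, by a countermodel construction from failed proof search, using invertibility (Lemma \ref{lemmaInvertible}) and contraction (Lemma \ref{lemmaAdmissibilityC}).
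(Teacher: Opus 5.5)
Your proposal is correct and follows the paper's proof: reduce to the Hilbert system \textbf{DEL} via Theorem \ref{AxComp}, derive the axioms by bottom-up proof search (using Lemma \ref{lemmaComposition} for composition), get Nec from (EW) and (RK), and get MP from Cut admissibility. One small slip needs fixing. In the left-to-right half of the event-and-knowledge axiom, once (R$[\cdot]$) has created $\update_{\mathrm{y}}$, the rule ($\evt$LK$_2)'$ is blocked by its side condition $\beta\cdot\delta'\notin l_\evt(Y)$. The paper instead applies plain ($\evt$LK$_2$) from $\update_{\evt^{\mathrm{M}}} K_aA$ to $\update_{\mathrm{y}}$, which yields exactly the premise $\update_{\mathrm{y}} A\Imp A$ that you close with Lemma \ref{lemmaDerivabilityTrivialSequents}; likewise, axiom K needs ($\evt$LK$_2$) rather than ($\evt$LK$_1$).
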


\begin{proof}
    Since \textbf{DEL} is complete w.r.t. the semantics of DEL (see Theorem \ref{AxComp}), it is sufficient to demonstrate that all axioms of \textbf{DEL} are derivable in $\calculus$ and that the rules of \textit{modus ponens} and necessitation are admissible. \textit{Modus ponens} can be shown to be admissible using the Cut-rule (shown to be admissible in $\calculus$, see Section \ref{sectionCut}) whilst the necessitation rule can be shown to be admissible using external weakening. Derivations for the axioms of propositional logic and modal logic \textbf{S5m} are obtained by direct (bottom-up) proof-search, see  \cite{Poggiolesi2008} for details. Derivations for the reduction axioms are also obtained by (bottom-up) proof search (except for the axiom of event composition that we will consider separately). As an illustrative example, consider the axiom of announcement and knowledge, $[\evt^{\mathrm{M}}]K_a A \imp \mathrm{pre(\evt^{M})} \eq \bigwedge_{\evt^{\mathrm{M}} \sim_a \mathrm{x}}K_a[\mathrm{x}]A)$ for any given pointed event model $\evt^{\mathrm{M}}$, which we prove to be a theorem by means of the following derivation:

\vspace{0.3cm}

\begin{center}
    \begin{prooftree}[small]
    \hypo{ \mathrm{pre(\evt^M)}\Imp \mathrm{pre(\evt^M)}, \bigwedge_{\mathrm{x}\sim_a \evt^{\mathrm{M}}}K_a[\mathrm{x}]A }
        \hypo{1a: \mathrm{pre(\evt^M)} \Imp \update_{\evt^{\mathrm{M}}} K_a A \Imp \Sep 1a: \mathrm{pre(y)} \ \Imp \update_{\mathrm{y}} A \Imp A \st \evt^{\mathrm{M}}\sim_a \mathrm{y}}
        \infer1[($\evt$LK$_2$)]{1a: \mathrm{pre(\evt^M)} \Imp \update_{\evt^{\mathrm{M}}} K_a A \Imp \Sep 1a: \mathrm{pre(y}) \Imp \ \update_{\mathrm{y}} \ \Imp A \ \st \evt^{\mathrm{M}}\sim_a \mathrm{y}}
        \infer1[(R$[\cdot]$)]{1a: \mathrm{pre(\evt^M)} \Imp \update_{\evt^{\mathrm{M}}} K_a A \Imp  \Sep 1a: \  \Imp [\mathrm{y}]A \ \st \evt^{\mathrm{M}}\sim_a \mathrm{y} }
        \infer1[($\evt$RK)]{ \mathrm{pre(\evt^M)} \Imp K_a[\mathrm{y}]A \update_{\evt^{\mathrm{M}}} K_a A \Imp \ \st \evt^{\mathrm{M}}\sim_a \mathrm{y}}
        \infer1[(R$\bigwedge\evt$)]{ \mathrm{pre(\evt^M)} \Imp \bigwedge_{\evt^{\mathrm{M}} \sim_a \mathrm{x}}K_a[\mathrm{x}]A \update_{\evt^{\mathrm{M}}} K_a A \Imp}
    \infer[separation=0.8em]2[(L$[\cdot]$)]{\mathrm{pre(\evt^M)}, [\evt^{\mathrm{M}}]K_a A \Imp \bigwedge_{\evt^{\mathrm{M}} \sim_a \mathrm{x}}K_a[\mathrm{x}]A}
    \infer1[(R$\imp$)]{ [\evt^{\mathrm{M}}]K_a A \Imp \mathrm{pre(\evt^M)} \imp \bigwedge_{\evt^{\mathrm{M}} \sim_a \mathrm{x}}K_a[\mathrm{x}]A}
    \infer1[(R$\imp$)]{\Imp [\evt^{\mathrm{M}}]K_a A \imp (\mathrm{pre}(\evt^{\mathrm{M}}) \imp \bigwedge_{\evt^{\mathrm{M}} \sim_a \mathrm{x}} K_a[\mathrm{x}]A)}
    \end{prooftree} 
\end{center}

\vspace{0.3cm}

and the derivation below, where $X: \mathrm{pre}(\evt), \bigwedge_{\evt^{\mathrm{M}} \sim_a \mathrm{x}}  K_a[\mathrm{x}]A, K_a[\mathrm{y}]A \Imp \update_{\mathrm{pre(\evt^M)}} \Imp$
\vspace{0.3cm}

\begin{center}
\begin{prooftree}[small]
    \hypo{\mathrm{pre(\evt^M)} \Imp \mathrm{pre(\evt^M)} \cdots}
        \hypo{1a: X \Sep 1a: \mathrm{pre(y)} \Imp \mathrm{pre(y)} \cdots }
            \hypo{1a: X \Sep 1a: \mathrm{pre(y)} \Imp \update_{\mathrm{y}} \ A \Imp A \  \st \mathrm{\evt^M}\sim_a \mathrm{y}}
        \infer[separation=3em]2[(L$[\cdot]$)]{1a: \mathrm{pre}(\evt), \bigwedge_{\evt^{\mathrm{M}} \sim_a \mathrm{x}}  K_a[\mathrm{x}]A, K_a[\mathrm{y}]A \Imp \update_{\evt^{\mathrm{M}}} \Imp \Sep 1a: \mathrm{pre(y)}, [\mathrm{y}]A \Imp \update_{\mathrm{y}} \Imp A \ \st \evt^{\mathrm{M}} \sim_a \mathrm{y}}
        \infer1[(Recall)]{1a: \mathrm{pre(\evt^M)}, \bigwedge_{\evt^{\mathrm{M}} \sim_a \mathrm{x}}  K_a[\mathrm{x}]A, K_a[\mathrm{y}]A \Imp \update_{\evt^{\mathrm{M}}} \Imp \Sep 1a: \ [\mathrm{y}]A \Imp \ \update_{\mathrm{y}} \Imp A \  \st \evt^{\mathrm{M}} \sim_a \mathrm{y}}
        \infer1[($\evt$LK$_2$)]{1a: \mathrm{pre(\evt^M)}, \bigwedge_{\evt^{\mathrm{M}} \sim_a \mathrm{x}}  K_a[\mathrm{x}]A, K_a[\mathrm{y}]A \Imp \ \update_{\evt^{\mathrm{M}}}\Imp \Sep 1a: \ \Imp \ \update_{\mathrm{y}} \Imp A \ \st \mathrm{\evt^M} \sim_a \mathrm{y}}
        \infer1[(New)]{1a: \mathrm{pre(\evt^M)}, \bigwedge_{\evt^{\mathrm{M}} \sim_a \mathrm{x}}  K_a[\mathrm{x}]A, K_a[\mathrm{y}]A \Imp \update_{\evt^{\mathrm{M}}}\Imp \Sep 1a: \update_{\mathrm{y}} \Imp A \  \st \evt^{\mathrm{M}} \sim_a \mathrm{y}}
        \infer1[(L$\bigwedge \evt$)]{1a: \mathrm{pre(\evt^M)}, \bigwedge_{\evt^{\mathrm{M}} \sim_a \mathrm{x}}  K_a[\mathrm{x}]A \Imp \update_{\mathrm{\evt^M}} \Imp \Sep 1a: \update_{\mathrm{y}} \ \Imp A \  \st \evt^{\mathrm{M}} \sim_a \mathrm{y}}
        \infer1[($\evt$RK)]{\mathrm{pre(\evt^M)}, \bigwedge_{\evt^{\mathrm{M}} \sim_a \mathrm{x}} K_a[\mathrm{x}]A \Imp \update_{\evt^{\mathrm{M}}} \ \Imp K_a A }
    \infer[separation=-1em]2[(L$\imp$)]{\mathrm{pre(\evt^M)}, \mathrm{pre(\evt^M)} \imp \bigwedge_{\evt^{\mathrm{M}} \sim_a \mathrm{x}} K_a[\mathrm{x}]A \Imp \update_{\evt^{\mathrm{M}}}\Imp K_a A }
    \infer1[(R$[\cdot]$)]{ \mathrm{pre(\evt^M)} \imp \bigwedge_{\evt^{\mathrm{M}} \sim_a \mathrm{x}} K_a[\mathrm{x}]A \Imp [\evt^{\mathrm{M}}]K_a A }
    \infer1[(R$\imp$)]{\Imp (\mathrm{pre(\evt^M)} \imp \bigwedge_{\evt^{\mathrm{M}} \sim_a \mathrm{x}} K_a[\mathrm{x}]A) \imp [\evt^{\mathrm{M}}]K_a A }
\end{prooftree}
\end{center}
\vspace{0.3cm}

 We now move to the axiom of event composition $[\mathrm{e^M}][\mathrm{e^N}]A \eq [(\mathrm{e^M;f^N})]A$. Providing a direct proof for that axiom schema requires Lemma \ref{lemmaComposition} that demonstrates derivability of dynamic hypersequents $\update_{\mathrm{e^M} \cdot \mathrm{f^N}}  \Imp A \update_{(\mathrm{e^M;f^N})} A \Imp$ and $\update_{\mathrm{e^M} \cdot \mathrm{f^N}} \ A \Imp \update_{(\mathrm{e^M} ; \mathrm{f^N})} \Imp A$, for all formulas $A$. We only show the derivation for one direction, whilst the other can be obtained in a similar way.\\
    
\vspace{0.3cm}

    \noindent
    {\footnotesize
    \begin{prooftree}[regular]
                \hypo{\mathrm{pre(e^M}) \Imp \mathrm{pre(e^M}) \cdots}
                \hypo{\mathrm{pre(e^M}), \mathrm{pre(e^M}) \Imp \update_{\mathrm{e^M}} \ \mathrm{pre(f^N}) \Imp \mathrm{pre(f^N}) \update_{\mathrm{e^M} \cdot \mathrm{f^N}} \ \Imp A}
                \infer1[(R$[\cdot]$)]{\mathrm{pre(e^M}) \Imp [\mathrm{e^M}]\mathrm{pre(f^N}) \update_{\mathrm{e^M}} \mathrm{pre(f^N}) \ \Imp \ \update_{\mathrm{e^M} \cdot \mathrm{f^N}} \ \Imp A}
            \infer2[(R$\land$)]{\mathrm{pre(e^M}) \Imp \mathrm{pre(e^M}) \land [\mathrm{e^M}]\mathrm{pre(f^N}) \update_{\mathrm{e^M}} \ \mathrm{pre(f^N}) \Imp \update_{\mathrm{e^M} \cdot \mathrm{f^N}} \ \Imp A}
            \hypo{}
            \ellipsis{Lemma \ref{lemmaComposition}}{\cdots \update_{\mathrm{e^M} \cdot \mathrm{f^N}} \ \Imp A \update_{(\mathrm{e^M};\mathrm{f^N})} \ A \Imp}
        \infer2[(L$[\cdot]$)]{\mathrm{pre(e^M}), [(\mathrm{e^M;f^N})]A \Imp \ \update_{\mathrm{e^M}} \ \mathrm{pre(f^N}) \Imp \update_{\mathrm{e^M} \cdot \mathrm{f^N}} \Imp A}
        \infer1[(R$[\cdot]$)]{\mathrm{pre(e^M}), [(\mathrm{e^M;f^N})]A \Imp \ \update_{\mathrm{e^M}} \ \Imp [\mathrm{f^N}]A}
        \infer1[(R$[\cdot]$)]{[(\mathrm{e^M;f^N})]A \Imp [\mathrm{e^M}][\mathrm{f^N}]A}
        \infer1[(R$\imp$)]{\Imp [(\mathrm{e^M;f^N})]A \imp [\mathrm{e^M}][\mathrm{f^N}]A}
    \end{prooftree}
}

\end{proof}


\section{Cut admissibility}\label{sectionCut}

In this section we demonstrate that the Cut-rule is admissible in the calculus $\calculus$. To do so, we first need to prove a preliminary lemma.

\begin{lemma}\label{lemmaPermutationRules}
    Propositional rules, modal rules, event rules, as well as (New), (Recall) and (Lat) (resp. (Rat)) permute down with respect to $k$ applications of (Rat) (resp. (Lat)). Furthermore, any occurrence of (Rat) (resp. (Lat)) also permutes down with respect to $k$ applications of (Rat) (resp. (Lat)) when the atom $p$ that is removed in its conclusion is not active in either of these $k$ applications of (Rat) (resp. (Lat)).
\end{lemma}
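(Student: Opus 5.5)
The plan is to prove the permutation claim by induction on $k$, reducing everything to the case $k=1$. So first I show that a rule $\Rule$ (from the listed classes) applied immediately above a single application of (Rat) can be exchanged with it. The general statement then follows by iterating: push $\Rule$ below the lowest of the $k$ (Rat)-steps one at a time, at each stage using the one-step case on the topmost remaining pair. Only (Rat) needs explicit treatment; the (Lat) version is symmetric, with antecedent and succedent swapped.

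For the one-step case, consider $\Rule$ with premise(s) $S_1,\dots,S_m$ and conclusion $S$, followed by (Rat) with conclusion $S'$. The (Rat) step removes an occurrence of $p$ from the succedent of a dynamic component $\update_\alpha\,\Gamma\Imp\Delta,p$, given that $\update_{\alpha\cdot\evt}\,\Gamma'\Imp\Delta',p$ occurs in the same world-component. The (Rat) step is purely structural: it only deletes an atom. Its principal occurrence $p$ (in $\update_\alpha$) cannot be principal in $\Rule$, because no rule of the listed classes produces an atom in its conclusion. The one exception is (Rat)/(Lat) itself, which is handled by the side condition below.

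Hence I apply (Rat) first to each premise $S_i$, deleting the same occurrence of $p$ from $\update_\alpha$, and then apply $\Rule$. This requires two checks, which I go through by cases on the rule classes.

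\emph{First}, the side formula $p$ in $\update_{\alpha\cdot\evt}$ must still be present in each $S_i$. Rules only add material going upward, except in two situations: rules that create the component $\update_{\alpha\cdot\evt}$, namely (New) and the primed event rules (L$[\cdot])'$, (R$[\cdot])'$; and rules with several premises where that component is absent in some premise. I handle these as follows.
\begin{itemize}
\item For (R$[\cdot]$), (L$[\cdot]$) and ($\evt$LK)$'$, the context components are copied into every premise, so $p$ persists.
\item For (New), the component $\update_{\alpha\cdot\evt}$ is in the conclusion, so it is also in the premise.
\item For the primed event rules, $\update_{\alpha\cdot\evt}$ is either newly created empty, in which case it could not already contain $p$ in the conclusion, or it is present throughout.
\end{itemize}

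\emph{Second}, the side conditions of $\Rule$ must survive the deletion of $p$. These conditions are: freshness of variables for (R$\bigwedge\evt$) and ($\evt$RK); $\alpha\notin l_\evt(X)$ for (New); and the prefix conditions $\beta\cdot\delta'\notin l_\evt$ for ($\evt$LK)$'$. None of them mention atomic formulas, and deleting $p$ leaves all event labels unchanged, so they still hold.

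When $\Rule$ is itself (Lat) or (Rat), one extra case arises. The formula introduced upward by the upper rule, that is, the atom it removes in its conclusion, might be exactly the side occurrence $p$ in $\update_{\alpha\cdot\evt}$ that the lower (Rat) relies on. If so, swapping would leave the lower (Rat) without its required side formula in the premise. This is precisely why the statement adds the hypothesis that the atom removed by the upper (Rat) is not active in the $k$ lower applications. Under that hypothesis the two occurrences are distinct, and the swap goes through as above. For (Lat) over (Rat) no conflict is possible, since the two rules act on opposite sides of the sequent.

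I expect the main obstacle to be the bookkeeping for the multi-premise rules ($\evt$LK$_1)'$ and ($\evt$LK$_2)'$, together with (L$[\cdot]$). Their premises restructure the world-component: for example $\overline{X}\update_{\beta\cdot\mathrm{d}_1}\Imp\mathrm{pre}(\mathrm{d}_2)$ adds new components. I would first verify there that $\overline{X}$ literally copies $\update_\alpha\,\Gamma\Imp\Delta,p$ and $\update_{\alpha\cdot\evt}\,\Gamma'\Imp\Delta',p$, so that (Rat) can be applied uniformly in each premise. The permuted derivation then has height no larger than the original, because the single (Rat) is duplicated only along parallel branches.
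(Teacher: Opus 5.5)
Your overall scheme is the natural one behind the paper's one-line ``straightforward'' proof: reduce to $k=1$, push the (Rat) step into each premise of $\Rule$, and check that its context persists upward and that the freshness and label side conditions are unaffected. It handles the first part of the lemma.

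The gap is in the only nontrivial case, (Rat) above (Rat), where you put the conflict in the wrong place. The atom removed by the upper (Rat) is absent from its conclusion, which is the premise of the lower (Rat). So it can never be the side formula the lower step relies on. The situation you describe cannot arise, and the hypothesis is not needed to exclude it.

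The real obstruction runs the other way. An upper (Rat) removing $q$ from $\update_\beta$ keeps its side occurrence of $q$ in $\update_{\beta\cdot\mathrm{f}}$. A lower (Rat) with $\alpha=\beta\cdot\mathrm{f}$ may delete exactly that occurrence. For instance, start from $\update_\beta \Imp q \update_{\beta\cdot\mathrm{f}} \Imp q \update_{\beta\cdot\mathrm{f}\cdot\evt} \Imp q$ and apply (Rat) on $\update_\beta$, then (Rat) on $\update_{\beta\cdot\mathrm{f}}$. If you perform the second step first, you get $\update_\beta \Imp q \update_{\beta\cdot\mathrm{f}} \Imp \update_{\beta\cdot\mathrm{f}\cdot\evt} \Imp q$, and the first step can no longer be applied.

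Your two checks are that the lower step's side formula survives upward and that $\Rule$'s side conditions survive. Neither tests whether the active formulas of $\Rule$ that remain in its conclusion survive the deletion, and that is exactly what fails here. Your remark that (Rat)/(Lat) is an ``exception'' gestures at this, but the paragraph that handles it has the roles of the two occurrences reversed.

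To close the gap, add this third check. For every rule other than (Rat) it is immediate:
\begin{itemize}
\item the principal formulas are compound, so a lower (Rat) cannot delete them;
\item the side atom of (Lat) lies in an antecedent, which is the real reason your (Lat)-over-(Rat) remark is correct;
\item (New) and (Recall) only need certain components to exist, and (Rat) never removes components.
\end{itemize}
For (Rat) over (Rat), the only danger is a lower step deleting the upper step's side occurrence. That requires the lower step to act on the same atom $q$. The hypothesis excludes exactly this, once read as saying that none of the $k$ lower steps acts on the atom $q$. Your swap then goes through. Read as a statement about the removed occurrence, the hypothesis would be vacuous and the claim false by the example above.
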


\begin{proof}
    This is straightforward.
\end{proof}

\begin{theorem}\label{theoremCut}
    The rule of Cut is admissible.
    \begin{center}
    \begin{prooftree}[regular]
        \hypo{G \Sep \eta : X \update_\alpha \ \Gamma  \Imp \Delta, A \st \store}
        \hypo{G \Sep \eta : X \update_\alpha \ A, \Gamma \Imp \Delta \st \store}
        \infer[separation=3em]2[(Cut)]{G \Sep \eta : X \update_\alpha \ \Gamma \Imp \Delta \st \store}
    \end{prooftree}
    \end{center}
\end{theorem}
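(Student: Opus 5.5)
We prove admissibility of (Cut) in the usual way: by a principal induction on the DHS-complexity of the cut formula \emph{relative to its position and store}, namely $c_\store(A,\alpha)$ of Definition \ref{defComplexity}, and a secondary induction on the sum of the heights of the derivations $d_1$ of the left premise and $d_2$ of the right premise. Following Poggiolesi's approach for hypersequent S5, we would in fact prove a slightly more general statement, allowing the cut formula to be cut simultaneously from several occurrences. This is a multicut that also absorbs contraction. The hp-admissibility of (LC) and (RC) (Lemma \ref{lemmaAdmissibilityC}) lets us recover the stated rule.

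\textbf{Cases where the cut formula is not principal.} If one premise is an axiom, the conclusion is either an axiom or obtained from the other premise by (IW) (Lemma \ref{lemmaAdmissibilityWeakening}). If $A$ is not principal in the last rule $\Rule$ of $d_1$ or $d_2$, we permute the cut upward. We apply the inverted rule $\Rule^\star$ (Lemma \ref{lemmaInvertible}) to the other premise, cut with lower height, and reapply $\Rule$. Rules introducing fresh material need care. For ($\evt$RK) and (R$\bigwedge\evt$) the fresh variables $\chi,\mathrm{y}$ must be added to the other premise by (EW)/(IW) and by weakening the store. For (New), (DW) is used, and for (L$[\cdot]$) and (R$[\cdot]$) the added dynamic component $\update_{\alpha\cdot\evt}$ is handled by (DW) as well. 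The dynamic rules (Lat) and (Rat) where $A=p$ is active in the atomic-permanence sense form a special case. Here we use Lemma \ref{lemmaPermutationRules} to push the (Lat)/(Rat) applications down, so that the cut on $p$ at $\alpha$ becomes a cut on $p$ at a shorter prefix, or reaches an axiom. Cuts on $\mathrm{pre(x)}$ and the second axiom (composition) are handled through Lemma \ref{lemmaComposition}.

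\textbf{Principal cases.} When $A$ is principal on both sides, we reduce to cuts of lower complexity. Negation and conjunction are standard. For $A=[\evt]B$, (R$[\cdot]$) on the left and (L$[\cdot]$) on the right give premises with $\mathrm{pre}(\evt)$ at $\alpha$ and $B$ at $\alpha\cdot\evt$. Since $c_\store(\mathrm{pre}(\evt),\alpha)$ and $c_\store(B,\alpha\cdot\evt)$ are both below $c_\store([\evt]B,\alpha)$, two cuts suffice: first on $\mathrm{pre}(\evt)$ at $\alpha$, then on $B$ at $\alpha\cdot\evt$. The surplus empty component $\update_{\alpha\cdot\evt}$ is then eliminated via (New$^\evt$) (Lemma \ref{lemmaAdmissibilityNewA}). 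For $\bigwedge_{\evt\sim_a\mathrm{x}}B$, the right premise of (R$\bigwedge\evt$) has a fresh $\mathrm{y}$ with $\evt\sim_a\mathrm{y}$, while (L$\bigwedge\evt$) supplies $B[\mathrm{f}/\mathrm{x}]$ with $\evt\approx_a\mathrm{f}$. We apply ($\evt$-Merge) to identify $\mathrm{y}$ with $\mathrm{f}$, then cut on $B[\mathrm{f}/\mathrm{x}]$. This cut has lower complexity thanks to the factor $2|\mathrm{M}|$ in $c_\store$, so every precondition and instance falls below it. Finally the remaining copy of the big conjunction is removed by the secondary induction.

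\textbf{Main obstacle: the knowledge case.} The hardest case is $A=K_aB$, with ($\evt$RK) on the left and one of ($\evt$LK$_1$), ($\evt$LK$_2$), ($\evt$LK$_1)'$, ($\evt$LK$_2)'$ on the right. We first cut $K_aB$ away from the premise of the left rule by the secondary induction, with the left derivation intact. The left premise has a new component $na:\update_\chi\Imp B$ with $\alpha\sim_a\chi$. We use (Merge) (resp.\ $w$-Merge, $\evt$-Merge) to identify this component with the target $\theta\hat\;na:Y\update_\beta$ (or $\update_{\beta\cdot\delta}$), substituting $\beta$ for $\chi$. This yields $B$ on the right exactly where the right premise has $B$ on the left, so a cut on $B$ of lower complexity applies. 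In the primed rules the extra premises carrying $\mathrm{pre(d_i)}$ require additional cuts on preconditions at $\beta\cdot\mathrm{d_1}\cdots\mathrm{d}_{i-1}$, whose complexity is controlled by $c_\store(A,\alpha)$ because it counts the event sequence. The resulting empty components are cleaned up with (New$^\evt$). I expect the bookkeeping to be the delicate point. Merging must preserve conditions (1)–(3) of Definition \ref{defIDHS} and the store substitution $[\beta/\chi]$, and we must check that every precondition cut really has strictly lower relative complexity.
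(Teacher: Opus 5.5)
Your overall architecture matches the paper's. You use a main induction on $c_\store(A,\alpha)$ and a secondary induction on the sum of heights, and you handle non-principal cases by applying $\Rule^\star$ to the other premise. For $K_aB$ you also follow the paper: a height-reduced cut of the ($\evt$RK)-conclusion against the premise(s) of the ($\evt$LK) rule, then (DW) and (Merge) on the ($\evt$RK)-premise, a cut on $B$ at $\beta\cdot\delta$, and a descending cascade of cuts on the $\mathrm{pre}(\mathrm{d}_i)$ interleaved with (DW) and (New$^\evt$). The multicut is unnecessary, since plain Cut with hp-contraction suffices.

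Two steps, however, fail as you state them. In the (R$[\cdot])'$/(L$[\cdot])'$ case, you cut $\mathrm{pre}(\evt)$ at $\alpha$ first and then $B$ at $\alpha\cdot\evt$. This leaves you with $G\Sep\eta:X\update_\alpha\Gamma\Imp\Delta\update_{\alpha\cdot\evt}\Imp\st\store$, to which (New$^\evt$) does not apply. That rule needs $\mathrm{pre}(\evt)$ in the antecedent at $\alpha$, which your first cut has removed. The empty component cannot be dropped otherwise, because it expresses $[\alpha]\lnot\mathrm{pre}(\evt)$. The order has to be the paper's:
\begin{itemize}
\item weaken the premise $G\Sep\eta:X\update_\alpha\Gamma\Imp\Delta\update_{\alpha\cdot\evt}B\Imp\st\store$ of (L$[\cdot])'$ with $\mathrm{pre}(\evt)$ at $\alpha$;
\item cut $B$ at $\alpha\cdot\evt$ against the premise of (R$[\cdot])'$;
\item apply (New$^\evt$);
\item only then cut $\mathrm{pre}(\evt)$ at $\alpha$.
\end{itemize}
Likewise, when a premise is an axiom whose principal formula is the cut formula, the conclusion follows from the other premise by (LC)/(RC), not by (IW).

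Your atomic case is also not a proof as written. Lemma \ref{lemmaPermutationRules} moves other rules \emph{below} blocks of (Rat)/(Lat), not the reverse. You also do not say how the antecedent side acquires $p$ at the shorter prefix. The paper instead climbs the (Rat)-chain of the left derivation, permutes the first other rule down, and uses the height induction. In its subcase (5c,iii), where the two axioms use $p$ in different components, it applies (RC) to the premise of the $l$-th (Rat) to lower the height.

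Your ``shorter prefix'' idea does give a simpler alternative that bypasses Lemma \ref{lemmaPermutationRules}. Suppose the left premise ends in (Rat) and the right one in (Lat), both with the cut atom at $\alpha=\alpha'\cdot\evt$ principal. Cut $p$ at $\alpha$ twice by the height induction:
\begin{itemize}
\item the premise of (Rat) against the right premise weakened with $p$ in the succedent at $\alpha'$;
\item the left premise weakened with $p$ in the antecedent at $\alpha'$ against the premise of (Lat).
\end{itemize}
Then cut $p$ at $\alpha'$, where $c_\store(p,\alpha')<c_\store(p,\alpha)$. All other atomic configurations are non-principal on one side or axiomatic.

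Finally, Lemma \ref{lemmaComposition} does not by itself dispose of the composition axiom with $\mathrm{pre(x)}$ as principal cut formula. Using it requires a further cut on $\mathrm{pre(x)}$ at a sequence of the \emph{same} complexity, since $c_\store((\evt;\mathrm{f}))=c_\store(\evt)+c_\store(\mathrm{f})$, and with no height bound, so neither induction applies. The paper's case (i) does not address this configuration either.
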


\begin{proof}
    The proof proceeds by main induction on the DHS-complexity of the cut-formula $c_\store(A,\alpha)$ (see Definition \ref{defComplexity}), and secondary induction on the sum of the heights of the derivations $d_1$ and $d_2$ of each premise. We distinguish cases on the last rule applied on the left-hand premise.

(i) If $G \ \vert \ X \update_\alpha \Gamma \Imp \Delta, A \st \store$ is an axiom, then either $G \ \vert \ X \update_\alpha A, \Gamma \Imp \Delta \st \store$ is also an axiom and so is the conclusion; or the conclusion can  be obtained by hp-admissibility of contraction.

(ii) If $G \ \vert \ X \update_\alpha \Gamma \Imp \Delta, A \st \store$ has been obtained by a rule $\Rule$ where the cut-formula $A$ is not principal, then we apply $\Rule^\star$ on $G \ \vert \ X \update_\alpha A, \Gamma \Imp \Delta \st \store$ obtaining an indexed dynamic hypersequent with event store $I\st\store^{\prime}$. We use $I\st\store^{\prime}$ as well as the premise of the rule $\Rule$ to cut the formula $A$, where this cut is admissible by  induction on the sum of the heights. We then apply the rule $\Rule$ again to obtain the desired conclusion.

(iii) If $G \ \vert \ X \update_\alpha \Gamma \Imp \Delta, A \st \store$ has been obtained by a rule $\Rule$ where $A$ is principal, and the right-hand premise has also been obtained by a rule where $A$ is principal, then we distinguish the following subcases: $(1)$ $\Rule$ is a propositional rule, $(2)$ $\Rule$ is a propositional event rule, $(3)$ $\Rule$ is a modal rule, $(4)$ $\Rule$ is an event rule, and $(5)$ $\Rule$ is a dynamic rule.
We treat in details cases (3) and (4). Cases (1) and (2) can be treated analogously. As for (5), the case where $\mathcal{R}$ is the rule (Rat) is detailed below -- the other cases of dynamic rules can be treated straightforwardly.

    $(3)$ Consider the case where $\mathcal{R}$ is the rule ($\evt$RK). Then the rule that has introduced the right-hand premise is either ($\evt$LK$_{1})$ or ($\evt$LK$_{2})$, or ($\evt$LK$_1)'$ or ($\evt$LK$_2)'$. We first analyse the case of the rule ($\evt$LK$_{2})$ and then treat the case of ($\evt$LK$_2)'$ -- the cases of rules ($\evt$LK$_{1})$ and ($\evt$LK$_{1})'$ can be dealt in a similar way, respectively. Suppose that we have the following situation, where $\eta\cap \theta = \lbrace na \rbrace$, $\alpha \approx_a \beta \in \store$ and $\overline{Y}:= \theta: Y \update_\beta \Gamma' \Imp \Delta'$:

\begin{center}
\begin{prooftree}[tiny]
	   \hypo{G \SepC \eta : X \update_\alpha \ \Gamma \Imp \Delta \Sep \overline{Y} \Sep na: \update_\chi \ \Imp A \st \store, \alpha \sim_a \chi}
	   \infer1[($\evt$RK)]{G \SepC \eta : X \update_\alpha \ \Gamma \Imp \Delta, K_a A \Sep \theta: Y \update_\beta \ \Gamma' \Imp \Delta'  \st \store}
 	      \hypo{G \SepC \eta : X \update_\alpha \ K_a A, \Gamma \Imp \Delta \Sep \theta: Y \update_\beta \ A, \Gamma' \Imp \Delta'  \st \store}
 	      \infer1[($\evt$LK$_2$)]{ G \SepC \eta : X \update_\alpha \ K_a A, \Gamma \Imp \Delta \Sep \theta: Y \update_\beta \ \Gamma' \Imp \Delta'  \st \store}
 	      \infer[separation=1.4em]2[(Cut)]{G \SepC \eta : X \update_\alpha \ \Gamma \Imp \Delta \Sep \theta: Y \update_\beta \ \Gamma' \Imp \Delta'  \st \store}
\end{prooftree}
\end{center}
\noindent
We first use the hp-admissibility of (IW) on the left-hand premise of the cut to obtain the indexed dynamic hypersequent with event store $G \ \vert \ \eta : X \update_\alpha \Gamma \Imp \Delta, K_a A \ \vert \ \theta: Y \update_\beta A, \Gamma' \Imp \Delta'  \st \store$. We use this IDHS together with the premise of the rule ($\evt$LK$_{2})$ to cut the formula $K_a A$, thereby obtaining $(\star):G \ \vert \ \eta : X \update_\alpha  \Gamma \Imp \Delta \ \vert \ \theta: Y \update_\beta A, \Gamma' \Imp \Delta'  \st \store$. This cut is admissible by induction on the sum of the heights.
We then proceed in the following way:\footnote{Note that in the derivation below, the occurrence of (Merge) uses the fact that $\beta\approx_a \chi \in \store$ because $\alpha \approx_a \beta\in \store$ and $\alpha\sim_a \chi \in \store$. {Moreover, since $\alpha \approx_a \beta \in \store$ the expression $\alpha \sim_a \beta$ obtained when replacing $\chi$ by $\beta$ is deleted when applying (Merge).}}
\begin{center}
\begin{prooftree}[tiny]
	   \hypo{G \SepC \eta : X \update_\alpha \ \Gamma \Imp \Delta \Sep \theta: Y \update_\beta \ \Gamma' \Imp \Delta' \Sep na: \update_\chi \ \Imp A \st \store, \alpha \sim_a \chi}
	   \infer1[(Merge)]{G \SepC \eta : X \update_\alpha \ \Gamma \Imp \Delta \Sep \theta: Y \update_\beta \ \Gamma' \Imp \Delta', A \st \store}
 	      \hypo{(\star)}
          \infer[no rule]1{G \SepC \eta : X \update_\alpha \ \Gamma \Imp \Delta \Sep \theta: Y \update_\beta \ A, \Gamma' \Imp \Delta'  \st \store}
 	      \infer[separation=1.4em]2[(Cut)]{G \SepC \eta : X \update_\alpha \ \Gamma \Imp \Delta \Sep \theta: Y \update_\beta  \ \Gamma' \Imp \Delta'  \st \store}
\end{prooftree}
\end{center}
\noindent where this cut is admissible by induction on the DHS-complexity of the formula.

Let us now consider the case where $\Rule$ is an application of ($\evt$RK) and the right-hand premise has been obtained by an application of ($\evt$LK$_2)'$. For the sake of clarity, we only show the case where the rule ($\evt$LK$_2)'$ has three premises. The general case is instead shown in the Appendix. Suppose we have the following situation, where $\store' = \store, \alpha\cdot \gamma \sim_a \chi$.
    
    \begin{center}
        \begin{prooftree}[tiny]
            \hypo{G \SepC \eta: X \update_{\alpha\cdot \gamma} \ \Gamma \Imp \Delta \Sep \theta: Y \update_\beta \ \Gamma' \Imp \Delta \Sep na: \update_\chi \ \Imp A \st \store'}
            \infer1[($\evt$RK)]{G \SepC \eta: X \update_{\alpha\cdot \gamma} \ \Gamma \Imp \Delta,  K_a A \Sep \theta: Y \update_\beta \ \Gamma' \Imp \Delta' \st \store}
            \hypo{G \SepC \eta: X \update_{\alpha\cdot \gamma} \ K_a A, \Gamma \Imp \Delta \Sep \theta: Y \update_\beta \ \Gamma' \Imp \Delta' \st \store}
            \infer2[(Cut)]{G \SepC \eta: X \update_{\alpha\cdot \gamma}  \ \Gamma \Imp \Delta \Sep \theta: Y \update_\beta \ \Gamma' \Imp \Delta' \st \store}
        \end{prooftree}
    \end{center}
    
  \noindent  and the right-hand premise has been obtained by the following application of the rule ($\evt$LK$_2)'$, where $\gamma = \mathrm{c_1}\cdot \mathrm{c_{2}}$, $\delta= \mathrm{d_1\cdot d_2}$, $\alpha\cdot\gamma \approx_a \beta \cdot \delta \in \store$. Moreover, $\eta \cap \theta = \lbrace na \rbrace$ and $\overline{X}= \eta: X \update_{\alpha\cdot\gamma} \ K_a A, \Gamma \Imp \Delta$.

    \begin{center}
    \begin{prooftree}[tiny]
            \hypo{(1)}
            \infer[no rule]1{G\SepC \overline{X} \Sep \theta: Y \update_\beta \ \Gamma' \Imp \Delta',  \mathrm{pre(d_1}) \st \store}
            \hypo{(2)}
            \infer[no rule]1{G\SepC \overline{X} \Sep \theta:Y \update_\beta \ \Gamma' \Imp \Delta' \update_{\beta\cdot \mathrm{d_1}} \ \Imp \mathrm{pre(d_2}) \st \store}
            \hypo{(3)}
            \infer[no rule]1{G\SepC \overline{X} \Sep \theta:Y \update_\beta \ \Gamma' \Imp \Delta' \update_{\beta \cdot \delta} \ A \Imp \st \store}
        \infer[separation=1.2em]3[($\evt$LK$_2)'$]{G \SepC \eta: X \update_{\alpha\cdot \gamma} \ K_a A, \Gamma \Imp \Delta \Sep \theta: Y \update_\beta \Gamma^{\prime} \Imp \Delta^{\prime} \st \store}
  \end{prooftree}
  \end{center}
  
    \medskip\noindent
    We first use the left-hand premise of original cut and hp-admissibility of (DW) to operate the following cut, which is admissible by induction on the sum of the heights:
    \begin{center}
        \begin{prooftree}[tiny]
            \hypo{G \SepC \eta: X \update_{\alpha\cdot \gamma} \ \Gamma \Imp \Delta, K_a A \Sep \theta: Y \update_\beta \ \Gamma' \Imp \Delta' \st \store  }
            \infer1[(DW)]{G \SepC \eta: X \update_{\alpha\cdot \gamma} \ \Gamma \Imp \Delta, K_a A \Sep \theta: Y \update_\beta \ \Gamma' \Imp \Delta' \update_{\beta\cdot\delta} \ A \Imp \st \store  }
            \hypo{(3)}
            \infer[no rule]1{ G \SepC \eta: X \update_{\alpha\cdot \gamma} \ , K_a A, \Gamma \Imp \Delta \Sep \theta: Y \update_\beta \ \Gamma' \Imp \Delta' \update_{\beta \cdot \delta} \ A \Imp  \st \store}
            \infer[separation=1.3em]2[(Cut)]{\ G \SepC \eta: X \update_{\alpha\cdot \gamma} \ \Gamma \Imp \Delta \Sep \theta: Y \update_\beta \ \Gamma' \Imp \Delta' \update_{\beta\cdot\delta} \ A \Imp \st \store}
        \end{prooftree}
    \end{center}

    \medskip\noindent where we denote the conclusion by $(3)'$.  
    We proceed likewise to cut the formula $K_a A$ in premises $(1)$ and $(2)$, thereby obtaining $(1)': G \ \vert \ \eta: X \update_{\alpha\cdot \gamma} \ \Gamma \Imp \Delta \ \vert \ \theta: Y \update_\beta \ \Gamma' \Imp \Delta', \mathrm{pre(d_1})  \st \store $\footnote{To obtain $(1)'$ we need hp-admissibility of (IW) instead of that of (DW).} and $(2)': G \ \vert \ \eta: X \update_{\alpha\cdot \gamma} \ \Gamma \Imp \Delta \ \vert \ \theta: Y \update_\beta \ \Gamma' \Imp \Delta' \update_{\beta \cdot \mathrm{d_1}} \ \Imp \mathrm{pre(d_2}) \st \store$, respectively. Analogously to what we have seen above, both cuts are eliminable by induction on the sum of the heights. We then consider the premise of the rule ($\evt$RK) and proceed as follows:\footnote{In the derivation below, the occurrence of (Merge) uses the fact that $\beta\cdot \delta \approx_a \chi$ since both $\alpha\cdot \gamma \approx_a \beta\cdot \delta$ and $\alpha\cdot\gamma \sim_a \chi$ occur in $\store^\prime$. We can therefore substitute $\beta\cdot \delta$ to $\chi$. Moreover, the expression $\alpha\cdot\gamma \sim_a \beta\cdot \delta$ so obtained can be deleted because $\alpha \cdot \gamma \approx_a \beta \cdot \delta \in \store$.}
    
    \begin{center}
        \begin{prooftree}[tiny]
            \hypo{G \SepC \eta: X \update_{\alpha\cdot \gamma} \ \Gamma \Imp \Delta \Sep na: \update_\chi \ \Imp A \Sep \theta: Y \update_\beta \ \Gamma' \Imp \Delta' \st \store'}
            \infer1[(DW)]{G \SepC \eta: X \update_{\alpha\cdot \gamma} \ \Gamma \Imp \Delta \Sep na: \update_\chi \ \Imp A \Sep \theta: Y \update_\beta \ \Gamma' \Imp \Delta' \, \update_{\beta \cdot \delta} \ \Imp \st \store'}
            \infer1[(Merge)]{G \SepC \eta: X \update_{\alpha \cdot \gamma} \ \Gamma \Imp \Delta \Sep \theta: \update_\beta \ \Gamma' \Imp \Delta' \, \update_{\beta\cdot \delta} \ \Imp A \st \store}
        \end{prooftree}
    \end{center}

    \noindent We use the conclusion of this derivation together with $(3)'$ in the following way:
    \begin{center}
        \begin{prooftree}[tiny]
            \hypo{G \SepC \eta: X \update_{\alpha \cdot \gamma} \ \Gamma \Imp \Delta \Sep \theta: \update_\beta \ \Gamma' \Imp \Delta' \, \update_{\beta\cdot \delta} \ \Imp A \st \store}
            \hypo{(3)'}
            \infer[no rule]1{G \SepC \eta: X \update_{\alpha \cdot \gamma} \ \Gamma \Imp \Delta \Sep \theta: Y \update_\beta \ \Gamma' \Imp \Delta' \, \update_{\beta\cdot \delta} \ A \Imp \st \store}
            \infer[separation=3em]2[(Cut)]{G \SepC \eta: X \update_{\alpha \cdot \gamma} \ \Gamma \Imp \Delta \Sep \theta: Y \update_\beta \ \Gamma' \Imp \Delta' \, \update_{\beta\cdot \delta} \ \Imp \st \store}
        \end{prooftree}
    \end{center}

    \noindent where this cut is admissible by induction on the DHS-complexity of the cut-formula. We then continue the derivation as follows:
    
        \noindent \begin{prooftree}[tiny]
        \hypo{(2)'}
        \infer[no rule]1{G \SepC \eta: X \update_{\alpha \cdot \gamma} \ \Gamma \Imp \Delta \Sep \theta: Y \update_\beta \ \Gamma' \Imp \Delta' \, \update_{\beta\cdot \mathrm{d_1}} \ \Imp \mathrm{pre(d_2}) \st \store}
            \hypo{G \SepC \eta: X \update_{\alpha \cdot \gamma} \ \Gamma \Imp \Delta \Sep \theta: Y \update_\beta \ \Gamma' \Imp \Delta' \ \update_{\beta\cdot \delta} \ \Imp \st \store}
            \infer1[(DW)]{G \SepC \eta: X \update_{\alpha \cdot \gamma} \ \Gamma \Imp \Delta \Sep \theta: Y \update_\beta \ \Gamma' \Imp \Delta' \, \update_{\beta\cdot \mathrm{d_1}} \ \mathrm{pre(d_2}) \Imp \ \update_{\beta\cdot\delta} \ \Imp \st \store}
            \infer1[(New$^\evt$)]{ G \SepC \eta: X \update_{\alpha \cdot \gamma} \ \Gamma \Imp \Delta \Sep \theta: Y \update_\beta \ \Gamma' \Imp \Delta' \, \update_{\beta\cdot \mathrm{d_1}} \ \mathrm{pre(d_2}) \Imp \st \store}
            \infer2[(Cut)]{G \SepC \eta: X \update_{\alpha \cdot \gamma} \ \Gamma \Imp \Delta \Sep \theta: Y \update_\beta \ \Gamma' \Imp \Delta' \ \update_{\beta\cdot \mathrm{d_1}}  \Imp  \st \store}
        \end{prooftree}
     
    \medskip\noindent where this cut is admissible  by induction on the DHS-complexity of the cut-formula. In order to obtain the conclusion, we repeat the same procedure, namely we first apply the rules (DW) and (New$^\evt$) on the conclusion of this last derivation, obtaining  $G \ \vert \ \eta: X \update_{\alpha \cdot \gamma} \Gamma \Imp \Delta \ \vert \ \theta: Y \update_\beta \, \mathrm{pre(d_1}), \Gamma' \Imp \Delta' \st \store$. Finally, we use this IDHS together with $(1)'$ to cut the formula $\mathrm{pre(d_1})$. This cut is admissible by induction on the DHS-complexity of the cut-formula and it gives the desired conclusion.

    \medskip

    $(4)$ Consider the case where we have the pair (R$[\cdot])'$-(L$[\cdot])'$  -- the case of  the pair (R$[\cdot])$-(L$[\cdot])$ can be treated in a similar though simpler manner. The situation is as follows:
    \begin{center}
        \begin{prooftree}[small]
            \hypo{G \SepC \eta: X \update_\alpha \ \mathrm{pre}(\evt), \Gamma \Imp \Delta \update_{\alpha \cdot \evt} \ \Imp A \st \store}
            \infer1[(R$[\cdot])'$]{G \SepC \eta: X \update_\alpha \ \Gamma \Imp \Delta, [\evt]A \st \store}
            \hypo{G \SepC \eta: X \update_\alpha \ \Gamma, \Imp \Delta , \mathrm{pre}(\evt) \st \store}
            \hypo{G \SepC \eta: X \update_\alpha \ \Gamma \Imp \Delta \update_{\alpha \cdot \evt} \ A \Imp \st \store}
            \infer[separation=1.3em]2[(L$[\cdot])'$]{G \SepC \eta: X \update_\alpha \ [\evt]A, \Gamma  \Imp \Delta \st \store}
        \infer[separation=1.3em]2[(Cut)]{G \SepC \eta: X \update_\alpha \ \Gamma  \Imp \Delta \st \store}
        \end{prooftree}
    \end{center}
    \noindent We proceed as below, where both cuts are admissible by the DHS-complexity of the cut-formula:
    \begin{center}
        \begin{prooftree}[small]
        \hypo{G \SepC \eta: X \update_\alpha \ \Gamma \Imp \Delta, \mathrm{pre}(\evt) \st \store}
            \hypo{G \SepC \eta: X \update_\alpha \ \mathrm{pre}(\evt), \Gamma \Imp \Delta \update_{\alpha \cdot \evt} \ \Imp A \st \store}
            \hypo{G \SepC \eta: X \update_\alpha \ \Gamma \Imp \Delta \update_{\alpha \cdot \evt} \ A \Imp \st \store}
            \infer1[(IW)]{G \SepC \eta: X \update_\alpha \ \mathrm{pre}(\evt), \Gamma \Imp \Delta \update_{\alpha \cdot \evt} \ A \Imp \st \store}
        \infer[separation=1.5em]2[(Cut)]{G \SepC \eta: X \update_\alpha \ \mathrm{pre}(\evt), \Gamma \Imp \Delta \update_{\alpha \cdot \evt} \ \Imp \st \store}
        \infer1[(New$^\evt$)]{G \SepC \eta: X \update_\alpha \ \mathrm{pre}(\evt), \Gamma \Imp \Delta \st \store}
    \infer[separation=-4em]2[(Cut)]{G \SepC \eta: X \update_\alpha \ \Gamma \Imp \Delta \st \store}
    \end{prooftree}
    \end{center}

\medskip

    $(5)$ We now move to the very last case where the left-hand premise has been obtained by an application of (Rat). We have the following situation, where $\alpha = \mathrm{e}_1 \cdots \mathrm{e}_k$ and for all $1 \leq i \leq k$, $\alpha_i = \mathrm{e}_1 \cdots \mathrm{e}_i$ -- so $\alpha_k = \alpha$:
    \begin{center}       
    \begin{prooftree}[small]
        \hypo{}
        \ellipsis{$d_1'$}{G \Sep \eta: X \update \cdots \update_{\alpha_{k-1}} \Gamma_{k-1} \Imp \Delta_{k-1}, p \update_{\alpha} \Gamma \Imp \Delta, p}
        \infer1[(Rat)]{G \Sep \eta: X \update \cdots \update_{\alpha_{k-1}}  \Gamma_{k-1} \Imp \Delta_{k-1} \update_{\alpha} \Gamma \Imp \Delta, p}
        \hypo{}
        \ellipsis{$d_2$}{G \Sep \eta: X \update \cdots \update_{\alpha_{k-1}}  \Gamma_{k-1} \Imp \Delta_{k-1} \update_{\alpha} p, \Gamma \Imp \Delta}
        \infer[separation=1.5em]2[(Cut)]{G \Sep \eta: X \update \cdots \update_{\alpha_{k-1}}  \Gamma_{k-1} \Imp \Delta_{k-1} \update_{\alpha} \Gamma \Imp \Delta}
    \end{prooftree}\\
    \end{center}
    
    \noindent Here we go up derivation $d_1'$ until we find an application of a rule $\mathcal{R}^{\prime}$ different from (Rat) on that atom $p$. There are two possibilities:
    
    $\bullet$ After $m$ applications of (Rat), we indeed find a rule $\Rule^{\prime}$ different from (Rat) on that atom $p$. Thanks to Lemma \ref{lemmaPermutationRules}, we can permute $\Rule^{\prime}$ down with the $m$ applications of (Rat). We also apply $\Rule^{\prime\star}$ on $G \ \vert \ \eta: X \update \cdots \update_\alpha p,\Gamma \Imp \Delta$. Then, we cut the atom $p$ from those two  IDHS. This cut is admissible by induction on the sum of the heights. We then apply rule $\mathcal{R}^{\prime}$ again, thereby obtaining the desired result.

    $\bullet$ Alternatively, there is no such rule because we find an axiom:
    \begin{center}       
    \begin{prooftree}[small]
        \hypo{G \Sep \eta: X' \update \cdots \update_{\alpha_{k-1}} \Gamma_{k-1} \Imp \Delta_{k-1}, p \update_\alpha \Gamma \Imp \Delta, p}
        \ellipsis{(Rat)$^\ast$}{G \Sep \eta: X \update \cdots \update_{\alpha_{k-1}} \Gamma_{k-1} \Imp \Delta_{k-1}, p \update_\alpha \Gamma \Imp \Delta, p}
        \infer1[(Rat)]{G \Sep \eta: X \update \cdots \update_{\alpha_{k-1}} \Gamma_{k-1} \Imp \Delta_{k-1} \update_{\alpha} \Gamma \Imp \Delta, p}
        \hypo{}
        \ellipsis{$d_2$}{G \Sep \eta: X \update \cdots \update_{\alpha_{k-1}} \Gamma_{k-1} \Imp \Delta_{k-1} \update_{\alpha} p, \Gamma \Imp \Delta}
        \infer[separation=2em]2[(Cut)]{G \Sep \eta: X \update \cdots \update_{\alpha_{k-1}} \Gamma_{k-1} \Imp \Delta_{k-1} \update_{\alpha} \Gamma \Imp \Delta}
    \end{prooftree}\\
    \end{center}

    \medskip We distinguish the three following cases.
    $(5a)$ $p$ is principal in $\update_\alpha \ \Gamma \Imp \Delta$. We then start from the right-hand premise and obtain the conclusion by contraction -- analogously to case (i) above.
    $(5b)$ $p$ is principal in none of  the $\update_{\alpha_i} \ \Gamma_i \Imp \Delta_i$. Then the conclusion is also an axiom and we thus already have the desired result.
    $(5c)$ $p$ is principal in another dynamic sequent $\update_{\alpha_i} \ \Gamma_i \Imp \Delta_i$ (for $1 \leq i < k$). We then need to look at the right-hand premise and further distinguish cases.
    If it is an axiom, we reason as in case (i) above. If it has been obtained by the application of a rule $\Rule^\prime$ where $p$ is not principal, then we proceed analogously to case (ii). Finally, if it is the result of $l$ applications of (Lat), we go up the $l$ applications of (Lat) until we find a rule other that (Lat) on the atom $p$. We again distinguish two possibilities, analogous to those we have enumerated for (Rat). If we find a rule $\Rule''$, we proceed by using permutation. Otherwise, if we find an axiom, we need to distinguish the following cases.
    $(5c,i)$ If $p$ is principal in $\update_\alpha \ \Gamma \Imp \Delta$, we proceed as in $(5a)$.
    $(5c, ii)$ If $p$ is not the principal atom of the axioms, or it is and it occurs in the same sequent $\update_{\alpha_i} \ \Gamma_i \Imp \Delta_i$ as in the other premise, then the conclusion is also an axiom and we thus have the desired result.
    $(5c, iii)$ If $p$ is principal in a sequent $\update_{\alpha_j} \ \Gamma_j \Imp \Delta_j$, where $i\neq j$, we use hp-admissibility of contraction and then we  apply a cut on the atom $p$, which is admissible by induction on the sum of the heights.
    More details can be found in the Appendix.
   
\end{proof}

\section{Conclusions} \label{sectionConclusion}

In this paper, we have developed a proof-theoretic calculus for DEL based on the framework of indexed dynamic hypersequents with event store. Starting from the proof-theoretic treatment of multi-agent S5, we introduced a new structural formalism capable of representing epistemic events directly within the proof system, thereby capturing the dynamics of DEL without resorting to auxiliary semantic devices such as labels or relational atoms.

 We proved the admissibility of the structural rules of the calculus, including weakening, contraction, and merge, established soundness and completeness with respect to DEL, and showed the admissibility of the cut rule.
The resulting system provides, to the best of our knowledge, the first proof-theoretic calculus for DEL over a classical S5 base in which all structural rules -- including contraction and cut -- are admissible. More generally, it shows that the dynamic evolution of epistemic models can be internalised at the structural level, yielding a natural proof-theoretic account of epistemic events that mirrors the expressive power of DEL itself.

Several directions for future work naturally arise from this framework. On the proof-theoretic side, it would be interesting to investigate whether the present approach can be extended to richer variants of DEL, such as logics with common knowledge or distributed knowledge. It would also be worthwhile to explore decidability, counter-model construction, proof-search optimisations and complexity bounds for the calculus. Finally, we believe that the structural methodology introduced here may provide a useful foundation for developing proof systems for other dynamic modal logics in which model transformations play a central role.

\bibliographystyle{cas-model2-names}

\bibliography{biblioDHSforDEL}

@InCollection{sepDEL,
	author       =	{Baltag, Alexandru and Renne, Bryan},
	title        =	{{Dynamic Epistemic Logic}},
	booktitle    =	{The {Stanford} Encyclopedia of Philosophy},
	editor       =	{Edward N. Zalta},
	howpublished =	{\url{https://plato.stanford.edu/archives/win2016/entries/dynamic-epistemic/}},
	year         =	{2016},
	edition      =	{{W}inter 2016},
	publisher    =	{Metaphysics Research Lab, Stanford University}
}

@inproceedings{aucherandothers1,
  author={Aucher, Guillaume and Maubert, Bastien and Schwarzentruber, François},
  title={Generalized {DEL}-sequents},
  booktitle={JELIA 2012. LNCS, vol.7519},
  pages={54--66},
  year={2012},
  publisher={Springer},
   editor={Cerro, L.F. and Herzig, A. and Mengin, J.}
}

@inproceedings{aucherandothers2,
  author={Aucher, Guillaume and Schwarzentruber, François},
  title={On the complexity of dynamic epistemic logic.},
  booktitle={Proceedings of TARK},
  pages={54--66},
  year={2013},
   editor={Schipper, B.C.}
}

@article{aumann,
	author = {Aumann, Robert J.},
	title = {Agreeing to disagree},
	journal = {The Annals of Statistics},
    publisher = {Institute of Mathematical Statistics},
    volume = {4},
	year = {1976},
	pages = {1236--1239},
}

@inproceedings{baltagandothers,
    author = {Baltag, Alexandru and Moss, Lawrence and Solecki, Slawomir},
    title = {The Logic of Public Announcements, Common Knowledge, and Private Suspicions},
    year = {1998},
    booktitle = {Proceedings of the 7th Conference on Theoretical Aspects of Rationality and Knowledge (TARK VII)},
    editor = {Gilboa, I.},
    pages = {43--56}
}

@misc{bílková2025agentinterpolationknowledge,
      title={Agent Interpolation for Knowledge}, 
      author={Marta Bílková and Wesley Fussner and Roman Kuznets},
      year={2025},
      eprint={2505.23401},
      archivePrefix={arXiv},
      primaryClass={cs.LO},
      url={https://arxiv.org/abs/2505.23401}, 
}

@book{BlackburnModalLogic,
  title={Modal logic},
  author={Blackburn, Patrick and De Rijke, Maarten and Venema, Yde},
  volume={53},
  year={2001},
  publisher={Cambridge University Press}
}

@book{DEL,
  title={Dynamic epistemic logic},
  author={Ditmarsch, Hans van and Hoek, Wiebe and Kooi van Der, Barteld},
  volume={337},
  year={2007},
  publisher={Springer Science \& Business Media}
}

@article{lerouvillois2025dynamichypersequentspublicannouncement,
      title={Dynamic Hypersequents for Public Announcement Logic}, 
      author={Clara Lerouvillois and Francesca Poggiolesi},
      year={2026},
      journal={The Review of Symbolic Logic},
      volume={19},
      pages={491-518},
      }

@article{Sano2,
	author = {Liu, S. and Sano, K.},
	title = {Sequent calculi for public announcement logic and action model logic},
	year = {2026},
	journal = {Synthese},
    volume = {208},
	pages = {1--46},

}

@article{Frittellandothers1,
  title={Multi-type display calculus for propositional dynamic logic},
  author={Frittella, Sabine and Greco, Giuseppe and Kurz, Alexander and Palmigiano, Alessandra},
  journal={Journal of Logic and Computation},
  year={2016},
pages={2067–2104},
volume={26}
}

@article{Frittellandothers2,
  title={A proof-theoretic semantic analysis of dynamic epistemic logic},
  author={Frittella, Sabine and Giuseppe, Greco and Kurz, Alexander and Palmigiano, Alessandra and Sikimic, Vlasta},
  journal={Journal of Logic and Computation},
  year={2016},
pages={1961–2105},
volume={26}
}

@article{Nomuraandothers2,
  title={A cut-free labelled sequent calculus for dynamic epistemic logic},
  author={Nomura, Shoshin and  Ono, Hiroakira and Sano,Katsuhiko},
  journal={Journal of Logic and Computation},
  year={2020},
pages={321–348},
volume={30}
}

@inproceedings{nomura2015revising,
  title={Revising a labelled sequent calculus for public announcement logic},
  author={Nomura, Shoshin and Sano, Katsuhiko and Tojo, Satoshi},
  booktitle={Structural Analysis of Non-Classical Logics: The Proceedings of the Second Taiwan Philosophical Logic Colloquium},
  pages={131--157},
  year={2015},
  organization={Springer}
}

@inproceedings{plaza1989,
  title={Logics of public announcements},
  author={Plaza, Jan},
  booktitle={Proceedings 4th international symposium on methodologies for intelligent systems},
  pages={201--216},
  year={1989}
}

@article{Poggiolesi2008,
  title={A Cut-Free Simple Sequent Calculus
for Modal Logic S5},
  author={Poggiolesi, Francesca},
  journal={Review of Symbolic Logic},
  year={2008},
pages={3-15},
volume={1}
}

@book{Poggiolesi2010,
    author={Poggiolesi, Francesca},
    title={Gentzen Calculi for Modal Propositional Logic},
    publisher={Springer},
    year={2010},
    place={Berlin},
    collection={Trends in Logic}
}

@article{Poggiolesi2013fromSingleToMany,
  title={From single agent to multi-agent via hypersequents},
  author={Poggiolesi, Francesca},
  journal={Logica Universalis},
  volume={7},
  number={2},
  pages={147--166},
  year={2013},
  publisher={Springer}
}

@inbook{Restall,
	author = {Restall, Greg},
	title = {Proofnets for S5: sequents and circuits for modal logic},
	booktitle = { Logic Colloquium 2005: Proceedings of the Annual European Summer Meeting of the Association for Symbolic Logic},
	year = {2007},
	publisher = {Cambridge University Press},
	pages = {151--172},
	editor = {C. Dimitracopoulos and L. Newelski and D. Normann}
}

@book{BasicProofTheoryTroelstra,
    author={Troelstra, Anne S. and Schwichtenberg, Helmut},
    title={Basic Proof Theory},
    publisher={Cambridge University Press},
    year={2000},
    place={Cambridge},
    edition={2},
    series={Cambridge Tracts in Theoretical Computer Science},
    collection={Cambridge Tracts in Theoretical Computer Science}
}

\appendix
\section{Appendix}

\paragraph*{Cut admissibility (Theorem \ref{theoremCut})}

Let us go back to case (3) where $\Rule$ is an application of ($\evt$RK) and the right-hand premise has been obtained by an application of ($\evt$LK$_2)'$ with $k$ premises. Suppose we have the following situation, where $\delta=\mathrm{d}_1, \dots, \mathrm{d}_{k-1}$, $\eta \cap \theta = \lbrace na \rbrace$ and $\alpha\cdot\gamma \approx_a \beta \cdot \delta \in \store$.  Moreover, $\overline{G}:= G \SepC \eta: X \update_{\alpha\cdot\gamma} K_a A, \Gamma \Imp \Delta$ and $\overline{G'}:= G \SepC \theta: Y \update_\beta \Gamma' \Imp \Delta'$. 
    \begin{center}
    \scalebox{0.75}{
        \begin{prooftree}[regular]
            \hypo{\overline{G'} \SepC \eta: X \update_{\alpha\cdot \gamma} \Gamma \Imp \Delta \SepC na: \update_\chi \Imp A \st \store, \alpha \cdot \gamma \sim_a \chi}
            \infer1[($\evt$RK)]{G \SepC \eta: X \update_{\alpha\cdot \gamma}  \Gamma \Imp \Delta,  K_a A \SepC \theta: Y \update_\beta \Gamma' \Imp \Delta' \st \store}
                \hypo{(1)}
                \infer[no rule]1{\overline{G} \SepC \theta: Y \update_\beta  \Gamma' \Imp \Delta',  \mathrm{pre(d_1}) \st \store}
                \hypo{\cdots}
                \hypo{(k)}
                \infer[no rule]1{\overline{G} \SepC \theta: Y \update_\beta \Gamma' \Imp \Delta' \update_{\beta \cdot \delta} A \Imp  \st \store}
            \infer3[($\evt$LK$_2)'$]{G \SepC \eta: X \update_{\alpha\cdot \gamma} K_a A, \Gamma \Imp \Delta  \SepC \theta: Y \update_\beta \Gamma' \Imp \Delta' \st \store}
            \infer2[(Cut)]{G \SepC \eta: X \update_{\alpha\cdot \gamma} \Gamma \Imp \Delta  \SepC \theta: Y \update_\beta \Gamma' \Imp \Delta' \st \store}
        \end{prooftree}
    }
    \end{center}    
    \hfill \\

  \noindent    
    We first operate the following cut, which is admissible by induction on the sum of the heights:
    \begin{center}
        \begin{prooftree}[small]
            \hypo{G \SepC \eta: X \update_{\alpha\cdot \gamma} \Gamma \Imp \Delta, K_a A \SepC \theta: Y \update_\beta \Gamma' \Imp \Delta' \st \store  }
            \infer1[(DW)]{G \SepC \eta: X \update_{\alpha\cdot \gamma} \Gamma \Imp \Delta, K_a A \SepC \theta: Y \update_\beta \Gamma' \Imp \Delta' \update_{\beta\cdot\delta} A \Imp \st \store  }
            \hypo{(k)}
            \infer[no rule]1{G \SepC \eta: X \update_{\alpha\cdot \gamma} \Gamma \Imp \Delta, K_a A \SepC \theta: Y \update_\beta  \Gamma' \Imp \Delta' \update_{\beta \cdot \delta} A \Imp  \st \store}
            \infer2[(Cut)]{\ G \SepC \eta: X \update_{\alpha\cdot \gamma} \Gamma \Imp \Delta \SepC \theta: Y \update_\beta \Gamma' \Imp \Delta' \update_{\beta\cdot\delta} A \Imp \st \store}
        \end{prooftree}
    \end{center}
    \noindent where we denote the conclusion by $(I_k)$.
    We then consider the premise of the rule ($\evt$RK), apply the rule (DW) and (Merge)\footnote{Note that this application of (Merge) uses the fact that $\alpha \cdot \gamma \approx_a \beta \cdot \delta$ and $\alpha \cdot \gamma \sim_a \chi$ are both in the event store, thus so is $\beta \cdot \delta \approx_a \chi$. Also note that when substituting $\chi$ by $\beta\cdot \delta$ we get $\alpha \cdot \gamma \sim_a \beta \cdot \delta$ which is deleted from the event store since $\alpha \cdot \gamma \approx_a \beta \cdot \delta$ already occurs in $\store$.} as below.
    
    \begin{center}
        \begin{prooftree}[small]
            \hypo{G \SepC \eta: X \update_{\alpha\cdot \gamma} \Gamma \Imp \Delta \SepC na: \update_\chi \Imp A \SepC \theta: Y \update_\beta \Gamma' \Imp \Delta' \st \store, \alpha \cdot \gamma \sim_a \chi}
            \infer1[(DW)]{G \SepC \eta: X \update_{\alpha\cdot \gamma} \Gamma \Imp \Delta \SepC na: \update_\chi \Imp A \SepC \theta: Y \update_\beta \Gamma' \Imp \Delta' \update_{\beta \cdot \delta} \Imp \st \store, \alpha \cdot \gamma \sim_a \chi}
            \infer1[(Merge)]{G \SepC \eta: X \update_{\alpha \cdot \gamma}\  \Gamma \Imp \Delta \SepC \theta: \update_\beta \ \Gamma' \Imp \Delta' \update_{\beta\cdot \delta} \ \Imp A \st \store}
        \end{prooftree}
    \end{center}

    \noindent We then use this IDHS together with $(I_k)$ resulting from the previous cut, to operate a novel cut on the formula $A$, and we go on by applying admissibility of (DW) and (New$^\evt$):
    \begin{center}
        \begin{prooftree}[small]
            \hypo{G \SepC \eta: X \update_{\alpha \cdot \gamma} \Gamma \Imp \Delta \SepC \theta: \update_\beta \Gamma' \Imp \Delta' \update_{\beta\cdot \delta} \Imp A \st \store}
            \hypo{(I_k)}
            \infer[no rule]1{G \SepC \eta: X \update_{\alpha \cdot \gamma} \Gamma \Imp \Delta \SepC \theta: Y \update_\beta \Gamma' \Imp \Delta' \update_{\beta\cdot \delta} A \Imp \st \store}
            \infer[separation=2em]2[(Cut)]{G \SepC \eta: X \update_{\alpha \cdot \gamma} \Gamma \Imp \Delta \SepC \theta: Y \update_\beta \, \Gamma' \Imp \Delta' \, \update_{\beta\cdot \delta} \Imp \st \store}
            \infer1[(DW)]{G \SepC \eta: X \update_{\alpha \cdot \gamma} \Gamma \Imp \Delta \SepC \theta: Y \update_\beta \Gamma' \Imp \Delta' \update_{\beta\cdot \mathrm{d_1\cdots d}_{k-2}} \mathrm{pre(d}_{k-1}) \Imp \update_{\beta\cdot \delta} \Imp \st \store}
            \infer1[(New$^\evt$)]{ G \SepC \eta: X \update_{\alpha \cdot \gamma} \Gamma \Imp \Delta \SepC \theta: Y \update_\beta \, \Gamma' \Imp \Delta' \update_{\beta\cdot \mathrm{d_1\cdots d}_{k-2}} \mathrm{pre(d}_{k-1}) \Imp \st \store}
        \end{prooftree}
    \end{center}
    \noindent where we call the conclusion $(I_{k-1})$. This new cut is admissible by induction on complexity of the cut-formula.
    We now use $(I_{k-1})$ and premise $(k-1)$ to cut $\mathrm{pre(d}_{k-1})$, by induction on the complexity of the cut formula. We apply the admissibility of (DW) and (New$^\evt$) one more time to obtain
    $$(I_{k-2}): G \Sep \eta: X \update_{\alpha \cdot \gamma} \Gamma \Imp \Delta \Sep \theta: Y \update_\beta \Gamma' \Imp \Delta' \update_{\beta\cdot \mathrm{d_1\cdots d}_{k-3}} \mathrm{pre(d}_{k-2}) \Imp \st \store.$$
    \noindent We use it together with $(k-2)$ to cut $\mathrm{pre(d}_{k-2})$, by induction on the complexity, and use (DW) and (New$^\evt$) as before. We proceed this way until obtaining
    $$(I_1): G \Sep \eta: X \update_{\alpha \cdot \gamma} \Gamma \Imp \Delta \Sep \theta: Y \update_\beta \ \mathrm{pre(d_1}), \Gamma' \Imp \Delta' \st \store$$
    \noindent and apply one last time the inductive hypothesis with premise $(1)$ to cut $\mathrm{pre(d_1})$, by induction on the complexity.

    \begin{center}
        \begin{prooftree}[small]
            \hypo{(1)}
            \infer[no rule]1{G \SepC \eta: X \update_{\alpha \cdot \gamma} \Gamma \Imp \Delta \SepC \theta: Y \update_\beta \, \Gamma' \Imp \Delta',\mathrm{pre(d_1}) \st \store}
            \hypo{(I_1)}
            \infer[no rule]1{G \SepC \eta: X \update_{\alpha \cdot \gamma} \Gamma \Imp \Delta \SepC \theta: Y \update_\beta \ \mathrm{pre(d_1}), \Gamma' \Imp \Delta' \st \store}
            \infer[separation=2em]2[(Cut)]{G \SepC \eta: X \update_{\alpha \cdot \gamma} \Gamma \Imp \Delta \SepC \theta: \update_\beta \, \Gamma' \Imp \Delta' \st \store}
        \end{prooftree}
    \end{center}

    \medskip

    Let us now move to the very last case $(5c, iii)$ in the proof of Theorem \ref{theoremCut}, where both premises have been obtained by finitely many applications of (Rat) and (Lat), respectively, and axioms where $p$ is principal in different dynamic components. More precisely, we consider here axioms obtained after $k$ applications of (Rat) and $l$ applications of (Lat), respectively in $\update_{\alpha_i} \Gamma_i \Imp \Delta_i$ in the left-hand premise and $\update_{\alpha_j} \Gamma_j \Imp \Delta_j$ in the right-hand premise. Without loss of generality, let us consider $i < j$ -- so $k > l$.
    We explain the procedure in the general case and then provide one simple example.
    We have the following situation:

    \begin{center}
    \scalebox{.75}{
    \begin{prooftree}[regular]
                \hypo{G \SepC \eta: X \update_{\alpha_i} p, \Gamma_i \Imp \Delta_i,p \update \cdots \update_{\alpha_j} \Gamma_j \Imp \Delta_j, p, p \update \cdots  \update_\alpha \Gamma \Imp \Delta, p \st \store}
            \ellipsis{(Rat)$\ \times \ (k-l)$}{G \SepC \eta: X \update_{\alpha_i} p, \Gamma_i \Imp \Delta_i \update \cdots \update_{\alpha_j}\Gamma_j \Imp \Delta_j, p, p \update \cdots  \update_\alpha \Gamma \Imp \Delta, p \st \store}
            \infer1[(Rat)]{G \SepC \eta: X \update_{\alpha_i} p, \Gamma_i \Imp \Delta_i \update \cdots \update_{\alpha_j}\Gamma_j \Imp \Delta_j, p \update \cdots  \update_\alpha \Gamma \Imp \Delta, p \st \store}
            \ellipsis{(Rat)$\ \times \ (l-1)$}{G \SepC \eta: X \update_{\alpha_i} p, \Gamma_i \Imp \Delta_i \update \cdots \update_{\alpha_j}\Gamma_j \Imp \Delta_j, p \update \cdots  \update_\alpha \Gamma \Imp \Delta, p \st \store}
                \hypo{G \SepC \eta: X \update_{\alpha_i} p, \Gamma_i \Imp \Delta_i \update \cdots \update_{\alpha_j} p, \Gamma_j \Imp \Delta_j, p \update \cdots  \update_\alpha p, \Gamma \Imp \Delta \st \store}
                \ellipsis{(Lat)$\ \times \ l$}{G \SepC \eta: X \update_{\alpha_i} p, \Gamma_i \Imp \Delta_i \update \cdots \update_{\alpha_j} \Gamma_j \Imp \Delta_j, p \update \cdots  \update_\alpha p, \Gamma \Imp \Delta \st \store}
            \infer2[(Cut)]{G \SepC \eta: X \update_{\alpha_i} p, \Gamma_i \Imp \Delta_i \update \cdots \update_{\alpha_j} \Gamma_j \Imp \Delta_j, p \update \cdots  \update_\alpha \Gamma \Imp \Delta \st \store}
        \end{prooftree}
    }
        \end{center}
    \medskip

    Let $h$ be the height of the left-hand premise. Going up the corresponding derivation, the $l^{\text{th}}$ application of (Rat) that we find is the following, where the premise's height is $h-l$:
        \begin{center}
        \begin{prooftree}[small]
            \hypo{G \Sep \eta: X \update_{\alpha_i} p, \Gamma_i \Imp \Delta_i \update \cdots \update_{\alpha_j}\Gamma_j \Imp \Delta_j, p, p \update \cdots  \update_\alpha \Gamma \Imp \Delta, p \st \store}
            \infer1[(Rat)]{G \Sep \eta: X \update_{\alpha_i} p, \Gamma_i \Imp \Delta_i \update \cdots \update_{\alpha_j}\Gamma_j \Imp \Delta_j, p \update \cdots  \update_\alpha \Gamma \Imp \Delta, p \st \store}
        \end{prooftree}
    \end{center}
    \noindent so we apply (RC) -- that is hp-admissible by Lemma \ref{lemmaAdmissibilityC} -- to the premise and get the same indexed dynamic hypersequent with event store as the conclusion $G \ \vert \ \eta: X \update_{\alpha_i} p, \Gamma_i \Imp \Delta_i \update \cdots \update_{\alpha_j}\Gamma_j \Imp \Delta_j, p \update \cdots  \update_\alpha \Gamma \Imp \Delta, p \st \store$ but now of equal height $h-l$. We then apply $l-1$ times (Rat) to get $G \ \vert \ \eta: X \update_{\alpha_i} p, \Gamma_i \Imp \Delta_i \update \cdots \update_{\alpha_j}\Gamma_j \Imp \Delta_j, p \update \cdots  \update_\alpha \Gamma \Imp \Delta, p \st \store$, which is now of height $h-1$. We can therefore apply the inductive hypothesis with the right-hand premise to cut atom $p$ and obtain the desired conclusion. This application of the Cut-rule is admissible by induction on the sum of heights.
    
    Below is an example with $k= 2$ and $l=1$.

\begin{center}
    \begin{prooftree}[small]
        \hypo{G\SepC \eta: X \update_{\alpha_1} p, \Gamma_1 \Imp \Delta_1, p \update_{\alpha_2} \Gamma_2 \Imp \Delta_2,p,p \update_\alpha \Gamma \Imp \Delta, p \st \store }
        \infer1[(Rat)]{G\SepC \eta: X \update_{\alpha_1} p, \Gamma_1 \Imp \Delta_1 \update_{\alpha_2} \Gamma_2 \Imp \Delta_2,p,p \update_\alpha \Gamma \Imp \Delta, p \st \store}
        \infer1[(Rat)]{G\SepC \eta: X \update_{\alpha_1} p, \Gamma_1 \Imp \Delta_1 \update_{\alpha_2} \Gamma_2 \Imp \Delta_2,p \update_\alpha \Gamma \Imp \Delta, p \st \store}
        \hypo{G\SepC \eta: X \update_{\alpha_1} p, \Gamma_1 \Imp \Delta_1 \update_{\alpha_2} p, \Gamma_2 \Imp \Delta_2,p \update_\alpha p, \Gamma \Imp \Delta  \st \store}
        \infer1[(Lat)]{G\SepC \eta: X \update_{\alpha_1} p, \Gamma_1 \Imp \Delta_1 \update_{\alpha_2} \Gamma_2 \Imp \Delta_2,p \update_\alpha p, \Gamma \Imp \Delta \st \store}
        \infer2[(Cut)]{G\SepC \eta: X \update_{\alpha_1} p, \Gamma_1 \Imp \Delta_1 \update_{\alpha_2} \Gamma_2 \Imp \Delta_2,p \update_\alpha \Gamma \Imp \Delta \st \store}
    \end{prooftree}
\end{center}


    \noindent We apply hp-admissibility of contraction and proceed with the following cut, which is admissible by induction on the sum of the heights:
    
\begin{center}
        \begin{prooftree}[small]
         \hypo{G\SepC \eta: X \update_{\alpha_1} p, \Gamma_1  \Imp \Delta_1, p \update_{\alpha_2} \Gamma_2 \Imp \Delta_2,p,p \update_\alpha \Gamma \Imp \Delta, p \st \store }
        \infer1[(Rat)]{G\SepC \eta: X \update_{\alpha_1} p, \Gamma_1 \Imp \Delta_1 \update_{\alpha_2} \Gamma_2 \Imp \Delta_2,p,p \update_\alpha \Gamma \Imp \Delta, p \st \store}
        \infer1[(RC)]{G\SepC \eta: X \update_{\alpha_1} p, \Gamma_1 \Imp \Delta_1 \update_{\alpha_2} \Gamma_2 \Imp \Delta_2,p \update_\alpha \Gamma \Imp \Delta, p \st \store}
        \hypo{G\SepC \eta: X \update_{\alpha_1} p, \Gamma_1 \Imp \Delta_1 \update_{\alpha_2} p, \Gamma_2 \Imp \Delta_2,p \update_\alpha p, \Gamma \Imp \Delta \st \store}
        \infer1[(Lat)]{G\SepC \eta: X \update_{\alpha_1} p, \Gamma_1 \Imp \Delta_1 \update_{\alpha_2} \Gamma_2 \Imp \Delta_2,p \update_\alpha p, \Gamma \Imp \Delta \st \store}
        \infer2[(Cut)]{G\SepC \eta: X \update_{\alpha_1} p, \Gamma_1 \Imp \Delta_1 \update_{\alpha_2} \Gamma_2 \Imp \Delta_2,p \update_\alpha \Gamma \Imp \Delta \st \store}
    \end{prooftree}
\end{center}


\end{document}